\DeclarePairedDelimiterX\set[1]\lbrace\rbrace{#1}
\newcommand{\ra}{\rightarrow}
\newcommand{\RR}{{\mathbb R}}
\newcommand{\NN}{{\mathbb N}}
\newcommand{\ZZ}{{\mathbb Z}}
\newcommand{\CC}{{\mathbb C}}
\newcommand{\BB}{{\mathbb B}}
\newcommand{\CA}{{\mathcal A}}
\newcommand{\CB}{{\mathcal B}}
\newcommand{\cS}{{\mathcal S}}
\newcommand{\cC}{{\mathcal C}}
\newcommand{\CF}{{\mathcal F}}
\newcommand{\CP}{{\mathcal P}}
\newcommand{\CS}{{\mathcal C \mathcal S}}
\newcommand{\SCS}{{\mathcal S \mathcal C \mathcal S}}
\newcommand{\SA}{{\mathscr A}}
\newcommand{\SAl}{{\mathscr A}_{\ell}}
\newcommand{\SAal}{{\mathscr A}_{a\ell}}
\newcommand{\mfkDal}{{\mathfrak{D}}_{al}}
\newcommand{\mfkdl}{{{\mathfrak d}_{l}}}
\newcommand{\mfkDpal}{{\mathfrak D}^{\psi}_{al}}
\newcommand{\mfkDpQal}{{\mathfrak D}^{\psi,\derQ}_{al}}
\newcommand{\mfkdpal}{{{\mathfrak d}^{\psi}_{al}}}
\newcommand{\mfkDpalG}{{\mathfrak D}^{\psi,\bf G}_{al}}
\newcommand{\mfkg}{{\mathfrak g}}
\newcommand{\mfkF}{{\mathfrak F}}
\newcommand{\mfkM}{{\mathfrak M}}
\newcommand{\mfkG}{{\mathfrak G}}
\newcommand{\fj}{{\mathfrak j}}
\newcommand{\mfkgal}{{{\mathfrak g}_{al}}}
\newcommand{\mfkgalG}{{{\mathfrak g}^{\bf G}_{al}}}
\newcommand{\derB}{{\mathsf B}}
\newcommand{\derF}{{\mathsf F}}
\newcommand{\derG}{{\mathsf G}}
\newcommand{\derH}{{\mathsf H}}
\newcommand{\derQ}{{\mathsf Q}}
\newcommand{\derf}{{\mathsf f}}
\newcommand{\derg}{{\mathsf g}}
\newcommand{\derq}{{\mathsf q}}
\newcommand{\derp}{{\mathsf p}}
\newcommand{\dera}{{\mathsf a}}
\newcommand{\derb}{{\mathsf b}}
\DeclareMathOperator{\diam}{diam}
\newcommand{\Cech}{{\v{C}ech}}
\newcommand{\Hom}{\mathrm{Hom}}
\newcommand{\ad}{{\rm ad}}
\newtheorem{theorem}{Theorem}
\newtheorem{lemma}[theorem]{Lemma}
\newtheorem{definition}[theorem]{Definition}
\newtheorem{corollary}[theorem]{Corollary}
\newtheorem{prop}[theorem]{Proposition}
\newtheorem{remark}[theorem]{Remark}
\newtheorem{example}[theorem]{Example}
\newcommand{\fU}{{\mathfrak U}}
\newcommand{\fV}{{\mathfrak V}}
\newcommand{\frg}{{\mathfrak g}}
\newcommand{\Frechet}{{Fr\'{e}chet}}
\newcommand{\MC}{{\rm MC}}
\newcommand{\Sym}{{\rm Sym}}
\newcommand{\otr}{{\overline{\rm tr}}}
\newcommand{\pVectZ}{{{\mathsf {pVect}}_\ZZ}}
\newcommand{\com}{{\mathbf {com}}}
\newcommand{\CSnW}{{{\CS_n}/W}}
\newcommand{\clcomm}{{\overline{[\mfkM,\mfkM]}}}
\newcommand{\hI}{{\hat I}}
\newcommand{\hJ}{{\hat J}}
\newcommand{\rec}{{\rm rec}}
\DeclareMathOperator{\tr}{tr}
\DeclareMathOperator{\id}{id}
\title{A mathematical theory of topological invariants of quantum lattice systems}
\author{Adam Artymowicz}
\author{Anton Kapustin}
\affil{\normalsize California Institute of Technology, Pasadena, CA 91125}
\author{Bowen Yang}
\affil{\normalsize Center of Mathematical Sciences and Applications, Harvard University, Cambridge, MA 02138}
\begin{document}

\maketitle

\begin{abstract}
\noindent
    We show that Hall conductance and its non-abelian and higher-dimensional analogs are obstructions to promoting a symmetry of a state to a gauge symmetry. To do this, we define a local Lie system over a Grothendieck site as a pre-cosheaf of Lie algebras with additional properties and propose that a gauge symmetry should be described by such an object. We show that infinitesimal symmetries of a gapped state of a quantum lattice system form a local Lie system over a site of semilinear sets and use it to construct topological invariants of the state. Our construction applies to lattice systems on arbitrary asymptotically conical subsets of a Euclidean space including those which cannot be studied using field theory. 
\end{abstract}
\pagebreak
\tableofcontents

\section{Introduction}
\subsection{Overview}

A quantum lattice system is a mathematical model in quantum many-body physics in which degrees of freedom are arranged at points of a countable set $X\subset \RR^n$ (``the lattice'') and interact locally. Such systems are naturally described in the language of operator algebras.

We begin by recalling three basic structures associated with a quantum lattice system \cite{BR1,BR2}. First of all, we are given a $C^*$-algebra $\mathscr A$ of \emph{quasi-local observables}. This algebra admits local subalgebras $\mathscr A(Y) \subseteq \mathscr A$ for each subset $Y \subseteq X$. In the case of fermionic systems, $\mathscr A$ is $\ZZ_2$-graded. A \emph{state} on $\mathscr A$ is a positive normalized linear functional encoding expectation values. A state is called \emph{gapped} if it satisfies an appropriate energy stability condition. In this paper, we restrict attention to gapped states. Finally, one considers a Lie algebra of $*$-derivations of a dense $*$-subalgebra\footnote{For infinite-dimensional $\mathscr A$, important derivations often do not extend to the entire algebra.} of $\mathscr A$, whose elements are infinitesimal generators of dynamics.

For illustration, consider a bosonic system on a lattice consisting of a single point. Then $\mathscr A$ is a finite-dimensional simple $C^*$-algebra, and hence isomorphic to a matrix algebra $M_n(\mathbb C)$. States correspond to density matrices, and gapped states are precisely vector states, which form the projective space $\mathbb{CP}^{n-1}$. 

All derivations are inner and defined on the entire algebra, given by
\begin{equation}
    A \mapsto [F, A],
\end{equation}
for some $F \in M_n(\mathbb C)$. Moreover, each derivation admits a unique traceless representative, so that the Lie algebra of derivations is isomorphic to $\mathfrak{sl}_n$.

Since the $*$-structure on $M_n(\mathbb C)$ is given by the adjoint, a derivation is a $*$-derivation if and only if $F$ is skew-adjoint. Thus, the $*$-derivations correspond to the real Lie algebra $\mathfrak{su}_n$.

For a general lattice $\Lambda$, $*$-derivations need not be inner and may only be defined on a dense $*$-subalgebra of $\mathscr A$. Nevertheless, under suitable conditions, a $*$-derivation exponentiates to a one-parameter group of $*$-automorphisms of $\mathscr A$. More generally, a continuous path of $*$-derivations gives rise to a continuous path of $*$-automorphisms, which act on states by pullback and thus generate continuous families of states. This induces an equivalence relation: two states are considered equivalent if they are connected by such a path \cite{Moon2020, Bachmann_2011}. A central problem is to construct invariants of gapped states under this relation. At present, there are satisfactory constructions only in one spatial dimension \cite{BSY, Bourne2021}. Conjecturally, invariants of gapped states of lattice systems in higher dimensions are classified by Topological Quantum Field Theory (TQFT), but current tools seem inadequate to tackle this problem. 

More can be said for the equivariant variant, which deals with gapped states invariant under a locality-preserving symmetry. A symmetry of a state by a compact connected Lie group $G$ can be described infinitesimally by a Lie algebra homomorphism into the $*$-derivations of $\mathscr A$ that preserve the state. Equivalently, the corresponding $*$-automorphisms fix the state. Such states are called $G$-invariant. This leads to a $G$-equivariant version of the problem: to construct invariants of $G$-invariant gapped states under continuous paths through $G$-invariant gapped states. When $\Lambda$ is a single point, such invariants correspond to one-dimensional representations of $G$. When $\Lambda$ is a countable set which uniformly fills $\RR^n$ with $n>0$, invariants of $G$-invariant gapped states been recently constructed in \cite{LocalNoether} (see also \cite{ThoulessHall,Berry_Thouless}). They take values in $G$-invariant polynomials on the Lie algebra of $G$. In the special case $G=U(1)$ and $n=2$ one obtains the famous Hall conductance \cite{TKNN}. 

\subsection{Main Results and Outline}
In this paper, we develop a mathematical framework for defining invariants of gapped quantum lattice systems with Lie symmetries. In particular, we place the constructions of \cite{ThoulessHall,LocalNoether,Berry_Thouless} on a more conceptual footing.
A secondary goal is to generalize the construction in various directions. For example, we show how to define topological invariants of quantum lattice systems confined to well-behaved subsets of $\RR^n$. This generalization makes explicit that the invariants take values in a vector space which is determined by the large-scale geometry of the subset.  

The central idea of this paper is a new formulation of locality on a lattice, leading to the notion of a \textit{local Lie system}. Building on the ideas introduced in \cite{ThoulessHall,LocalNoether,Berry_Thouless}, we associate to any (possibly unbounded) region of the lattice a space of derivations that are approximately localized on that region, and show that for sufficiently regular regions these spaces behave well under natural operations such as the commutator. These spaces assemble into a cosheaf on a suitable \textit{site}, i.e.\ a category equipped with a Grothendieck topology, thereby producing a global geometric object encoding locality.

This structure provides a natural algebraic framework for symmetries. An on-site action of a Lie group gives rise to a representation by derivations, and we show that the invariants of \cite{LocalNoether} admit a purely algebraic description in this setting. Specifically, they arise as obstructions to promoting the symmetry to one acting by state-preserving derivations, with values in the homology of a differential graded Lie algebra (DGLA) naturally associated to the system. These obstruction classes pair with the \v{C}ech cohomology of the sphere at infinity to recover the invariants of \cite{LocalNoether}. This interpretation both clarifies their mathematical meaning and shows that they are independent of certain choices in their construction. It also highlights an analogy between topological invariants of gapped states and 't Hooft anomalies in Quantum Field Theory which are obstructions for gauging a symmetry action, see e.g. \cite{tHooftanomalies} for a review. 

The content of the paper is as follows. 
In Section \ref{sec:symmetries} we review quantum lattice systems and their symmetries. We introduce Lie algebras of derivations approximately localized on  regions in $\RR^n$ and prove some of their properties. In the next three sections we formalize these properties. 
In Section \ref{sec:locality} we axiomatize the notion of an infinitesimal local symmetry by defining local Lie systems abstractly in terms of cosheaves on a site. We also introduce the \Cech\ functor which assigns a DGLA to a local Lie system, and will serve as the main link between lattice geometry and algebraic topology.  In Section \ref{sec:semilinearsets} we identify a suitable category of regions in $\RR^n$, the category $\CS_n$ of \textit{fuzzy semilinear sets} which comes with a natural Grothendieck topology. In Section \ref{sec:LLA} we show that infinitesimal symmetries of any gapped state $\psi$ of a quantum lattice system on $\RR^n$ can be described by a local Lie system $\mfkDpal$ over $\CS_n$. We also study gapped states invariant under an action of a compact Lie group $G$ and define a $G$-equivariant analog of $\mfkDpal$. In Section \ref{sec:invariants} we construct invariants of $G$-invariant gapped states. The construction is along the lines of \cite{LocalNoether,Berry_Thouless} and uses an inhomogeneous Maurer-Cartan equation. We show that these invariants are obstructions for promoting the symmetry $G$ to a local symmetry of the state $\psi$. 
We also explain how to generalize the construction of invariants to lattice systems defined on sufficiently nice (asymptotically conical) subsets of $\RR^n$ and show that their invariants take values in a space which depends on the asymptotic geometry of the subset. This goes beyond what one can access using the TQFT heuristics.
In Appendix \ref{appendix:LGA} we isolate some proofs necessary for Section \ref{sec:symmetries}, and Appendix \ref{sec:deformation} develops some properties of the inhomogeneous Maurer-Cartan equation.
\subsection{Related work}
Recent progress has led to mathematically rigorous approaches to classifying gapped phases of quantum lattice systems, particularly in one and two dimensions \cite{Ogata1,Ogata2,Bourne2021,BSY,Ogata4,NikSop,ThoulessHall,Berry_Thouless,Bachmannetal,tensorHall}, and more generally in arbitrary dimension \cite{LocalNoether}. These works rigorously construct invariants of infinite-volume gapped states that are stable under suitable continuous deformations, and include well-known examples like the Hall conductance.

\subsection*{Acknowledgments}
We are grateful to Owen Gwilliam, Ezra Getzler, and Bas Janssens for discussions. The work of A. A. and A. K. was supported by the Simons Investigator Award. B. Y. would like to thank Yu-An Chen and Peking University for hospitality during the final stages of this work. A. K. would like to thank Yau Mathematical Sciences Center, Tsinghua University, for the same.

\section{Quantum lattice systems}\label{sec:symmetries}
This section is in part a review of quantum lattice systems on $\RR^n$  following  \cite{BR1,BR2} and \cite{LocalNoether}. We define the quasilocal algebra, its various subalgebras, and spaces of derivations. We then introduce spaces of derivations approximately localized on subsets of $\RR^n$ and derive some of their properties. We also study spaces of derivations preserving a gapped state. 
\subsection{Observables and derivations}\label{sec:3properties}

We use the $\ell^\infty$ metric on $\RR^n$, i. e. $d(x,y) := \max_{i=1,\hdots, n}|x_i-y_i|$. For any $U,V\subseteq \RR^n$ we write $\diam(U):= \sup_{x,y\in U}d(x,y)$ and $d(U,V) := \inf_{x\in U, y\in V}d(x,y)$. They take values in extended non-negative reals $[0,\infty]$. Thus $\diam(\varnothing)= 0$ and $d(U,\varnothing)=\infty$ for any $U\subseteq \RR^n$. For a nonempty set $U$ and $r\geq 0$ we define $U^r := \{x\in \RR^n: d(x,U)\le r\}$ while we set $\varnothing^r = \varnothing$.

\begin{definition}
    A quantum lattice system consists of a countable subset $\Lambda\subseteq \RR^n$ (``the lattice'') and a finite-dimensional $\ZZ_2$-graded complex Hilbert space $V_x$ for every $x\in\Lambda$
    We assume that the number of points of $\Lambda$ in a hypercube of diameter $d$ is upper bounded by $C(1+d)^n$ for some $C>0$.
    \footnote{In \cite{Berry_Thouless} $\Lambda$ was taken to be $\ZZ^n$ for simplicity. In \cite{LocalNoether} $\Lambda$ was assumed to be a Delone set, i.e. it was required to be uniformly filling and uniformly discrete. These assumptions were imposed on physical grounds. All the results proved in \cite{LocalNoether} hold under weaker assumptions adopted in this paper}.
\end{definition}
\begin{example}
    Models describing fermionic particles have $V_x$ which is an exterior algebra of another vector space: $V_x=\wedge H_x$, where $H_x$ is a  finite-dimensional Hilbert space (the space of one-particle states on site $x\in\Lambda$). The dimension of $H_x$ is referred to as the number of orbitals on site $x$. For translationally-invariant models, one typically takes all $H_x$ to be isomorphic to the same Hilbert space $H$, in which case $\dim H$ is called the number of bands.
\end{example}

For any $x\in\Lambda$ the algebra $\SA_x=\Hom_{\CC}(V_x,V_x)$ is a finite-dimensional $\ZZ_2$-graded $C^*$-algebra. For any nonempty bounded $X \subseteq \RR^n$ let $\SA(X) := \widehat\bigotimes_{x\in X\cap \Lambda}\SA_x$, where $\widehat\bigotimes$ denotes super-tensor product of $\ZZ_2$-graded algebras. For any $X\subseteq Y$ there is an inclusion $\SA(X)\hookrightarrow \SA(Y)$ and the algebras $\SA(X)$ form a direct system with respect to these inclusions. 
We extend this direct system to include the empty set by setting $\SA(\varnothing) = \CC$ and letting the inclusion $\SA(\varnothing)\hookrightarrow \SA(X)$ take $\alpha \mapsto \alpha\boldsymbol{1}$. 
Each $\SA(X)$ is a finite-dimensional $\ZZ_2$-graded $C^*$-algebra with the operator norm, and the inclusions $\SA(X)\hookrightarrow \SA(Y)$ preserve this norm. The normed $\ZZ_2$-graded *-algebra of {\it local observables} is 
\begin{align}
    \SAl=\SAl^{even}\oplus\SAl^{odd}=\varinjlim_X \SA(X).
\end{align}
The algebra of {\it quasi-local observables} $\SA$ is the norm-completion of $\SAl$; it is a $\ZZ_2$-graded $C^*$-algebra. 
\begin{example}\label{ex:algebraferm}
    For systems of fermionic particles on a lattice, $\SA_x$ as well as $\SA(X)$ for any bounded $X\subseteq \RR^n$ are complex Clifford algebras. For simplicity, assume that all $H_x$, $x\in\Lambda$,  are isomorphic to the same Hilbert space $H$ of dimension $N$. Choose an orthonormal basis in $H$. Then $\SA(X)$ is generated as a $\ZZ_2$-graded $*$-algebra by odd local observables $c_{\alpha,x}$, $\alpha\in\{1,\ldots,N\}$, $x\in X\cap\Lambda$. The relations are  
    \begin{align}\label{eq:CARrelations}
    c_{\alpha,x} c_{\beta,y}+c_{\beta,y} c_{\alpha,x}&=0,& \forall \alpha,\beta,\forall x,y\in X\cap\Lambda,\\
    c_{\alpha,x} c^*_{\beta,y}+c^*_{\beta,y} c_{\alpha,x}&=\delta_{\alpha\beta}\delta_{xy}, &\forall \alpha,\beta,\forall x,y\in X\cap\Lambda.
    \end{align}
    The algebra of quasi-local observables $\SA$ for systems of this type is known as a CAR algebra (the algebra of canonical anti-commutation relations) \cite{BR2,ArakiMoriya}.
\end{example}

For any bounded $X\subseteq \RR^n$, define the \textit{normalized even trace} $\otr: \SA(X)\to \CC$ as $\otr(\CA)=\tr(\CA)/{\sqrt {\dim(\SA(X))}}$. For any bounded $X\subseteq Y$ the \textit{partial trace} $\otr_{X^c}:\SA(Y)\to \SA(X)$ is uniquely specified by the condition 
\begin{align}
    \otr_{X^c}(\CA\otimes \CB) = \otr(\CA)\CB \label{eqn:partial-trace-defn}
\end{align} for any $\CA\in \SA(Y\backslash X)$ and $\CB\in \SA(X)$. In addition to forming a direct system with respect to inclusions, the spaces $\SA(X)$ are also an \textit{inverse} system with respect to the partial trace. $\otr$ extends to a normalized positive linear functional on $\SA$, i.e. a state. All these results are well known, for a proof see \cite{ArakiMoriya,NielsenChuang}.

We say that $\CA\in\SA$ is traceless if $\otr(\CA)=0$. The space of traceless anti-hermitian even elements of $\SA(X)$ will be denoted by $\mfkdl(X)$. $\mfkdl(X)$ is a real Lie algebra with respect to the commutator. Note that $\mfkdl(X)=0$ if $X\cap\Lambda=\emptyset$. The Lie algebras $\mfkdl(X)$ form a direct system over the directed set of bounded subsets of $\RR^n$, and its limit will be denoted $\mfkdl$. Equivalently, $\mfkdl$ is the Lie algebra of traceless anti-hermitian even elements of $\SAl$. Note that $\SAl^{even}=\CC{\mathbf 1}\oplus(\mfkdl\otimes\CC)$. 

So far, our discussion has been completely standard and mostly follows the monograph \cite{BR1, BR2}. A point of departure is the way we treat the derivations of $\SA$. In the $C^*$-algebraic approach, derivations represent generators of continuous  symmetries of a quantum lattice system. For example, a Hamiltonian is a generator of the time-translation symmetry and is described by a derivation. For quantum lattice systems, derivations of physical interest have the following form \cite{BR2}:
\begin{definition}\label{def:interaction}
    Let $\mathcal{P}(\Lambda)$ denote the set of finite subsets of $\Lambda$. An interaction is a function $\Phi:\mathcal{P}(\Lambda) \to \mfkdl$ such that $\Phi(X)\in\mfkdl(X)$ for each $X\in \mathcal{P}(\Lambda)$
    and such that for every $x\in \Lambda$ we have
        \begin{align}\label{eqn:locally-finite}
            \sum_{\substack{X\in\mathcal{P}(\Lambda)\\X\ni x}}\|\Phi(X)\| < \infty.
        \end{align}
    An interaction determines a derivation $\derF_\Phi$ defined by
    \begin{align}\label{eq:derint}
        \derF_\Phi: \CA\mapsto \sum_{X\in {\mathcal P}(\Lambda)} [\Phi(X),\CA],\quad\CA\in \SA.
\end{align}
    This is a (possibly unbounded) operator on $\SA$ whose domain is taken to be $\SAl$.
\end{definition} 
Physically interesting interactions are typically supported on an infinite subset of ${\mathcal P}(\Lambda)$, as we see in the examples below.
\begin{example}\label{ex:onsiteu1action}
An on-site action of $U(1)$ on a lattice system $(\Lambda,\{V_x\}_{x\in\Lambda})$ is a collection of continuous (and therefore smooth) homomorphisms $\rho_x:U(1)\ra U(V_x)^{even}$.
An on-site action of $U(1)$ on $(\Lambda,\{V_x\}_{x\in\Lambda})$ gives rise to a homomorphism from $U(1)$ to the automorphism group of $\SA$ via $g\mapsto\otimes_{x\in\Lambda}{\rm Ad}_{\rho_x(g)}$. The corresponding generator $\derQ$ is a derivation of the form
\begin{align}\label{eq:u1generatorex}
    \derQ: \CA\mapsto \sum_{x\in\Lambda} [\derq_x,\CA],\quad \CA\in\SA,
\end{align}
where $\derq_x\in\mfkdl(\{x\})$ is the traceless part of the generator of $\rho_x$. The domain of $\derQ$ is dense because it contains $\SAl$. It is easy to see that for an infinite $\Lambda$, $\derQ$ is unbounded even if we require the norms of $\derq_x$ to be bounded uniformly in $x$. 
\end{example}
\begin{example}\label{ex:u1generatorferm}
    Systems of fermionic particles (Example \ref{ex:algebraferm}) have a natural on-site $U(1)$ action (particle number symmetry). It is defined by saying that the weight-$\ell$ component of $V_x$ is the degree-$\ell$ component of the exterior algebra $\wedge H_x$. The generator of this $U(1)$ action has the form (\ref{eq:u1generatorex}) with $\derq_x$ being the grading operator on site $x$. More explicitly, suppose $H_x\simeq \CC^N$ for all $x\in\Lambda$. Choose an orthonormal basis for $H$, then $\SAl$ is generated as a $\ZZ_2$-graded $*$-algebra by odd local observables $c_{\alpha,x}\in\SA_x$, $\alpha\in\{1,\ldots,N\}$, $x\in\Lambda$, with the relations as in (\ref{eq:CARrelations}). 
    The charge derivation has the form (\ref{eq:u1generatorex}) with $\derq_x=\sqrt {-1}\sum_\alpha \left(c^*_{\alpha,x}c_{\alpha,x}-\frac{1}{2}\right)$. The term $-1/2$ in parentheses is needed to make $\derq_x$ traceless, while the overall $\sqrt {-1}$ makes it anti-hermitian. 
\end{example}
\subsubsection*{The brick decomposition}
The specification (\ref{eq:derint}) of a derivation in terms of an interaction $\Phi$ is many-to-one. For instance, for any $X,Y \supset Z \in \mathcal{P}(\Lambda)$ and $\dera \in \mfkdl(Z)$ we may redefine the interaction by setting $\Phi'(X) = \Phi(X) + \dera$ and $\Phi'(Y) = \Phi(Y) - \dera$, yielding the same derivation. It is useful to have a way of describing unbounded derivations that does not suffer from this ambiguity. This is the purpose of the brick decomposition \cite{LocalNoether} which we review now.

\begin{definition}
    A brick in $\RR^n$ is a non-empty subset of the form
    \begin{align}
        Y = \{(x_1,\ldots,x_n) \ \vert \ \ell_i \leq x_i \leq m_i, i=1,\ldots,n\} \label{eqn:brickdef}
    \end{align}
    where $(\ell_1,\ldots,\ell_n)$ and $(m_1,\ldots,m_n)$ are $n$-tuples of integers. We write $\BB_n$ for the set of all bricks in $\RR^n$. 
\end{definition} 
The set of bricks $\BB_n$ is a poset with respect to inclusion. In fact, it is a locally-finite distributive lattice. We denote the meet and join in $\BB_n$ by $\wedge$ and $\vee$, respectively. The set of bricks exhausts the collection of bounded subsets of $\RR^n$ in the sense that any bounded subset is contained in a brick. $\BB_n$ satisfies the following regularity property:
\begin{lemma}\label{lem:brick-sum}
    For any $z\in \RR^n$ we have
    \begin{equation}
                \sum_{Y\in\BB_n}(1+\diam(Y) + d(Y,z))^{-2n-2} \le \frac{\pi^44^n(n+1)^2}{36}
    \end{equation}
\end{lemma}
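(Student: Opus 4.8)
The plan is to organize the sum by the single scalar $s_Y:=\diam(Y)+d(Y,j)$. Although a brick is determined by $2n$ integers, $s_Y$ behaves like a two‑dimensional ``radial'' variable, so the number of bricks with $s_Y\le t$ grows only like $(1+t)^{2n+1}$, and the weight $(1+s_Y)^{-2n-2}$ is integrable against that. Write a brick as $Y=\prod_{i=1}^{n}[u_i,v_i]$ with integers $u_i\le v_i$. Since $Y$ is a product and the metric is the $\ell^\infty$ metric, $\diam(Y)=\max_i(v_i-u_i)$ and $d(Y,j)=\max_i\delta_i$, where $\delta_i:=d(j_i,[u_i,v_i])=\max(0,\,u_i-j_i,\,j_i-v_i)$; hence $s_Y=\max_i(v_i-u_i)+\max_i\delta_i$.

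First I would establish a uniform one‑dimensional count. For $c\in\RR$, an integer $K\ge 0$ and a real $E\ge 0$, let $P(c;K,E)$ be the number of pairs $(u,v)\in\ZZ^2$ with $u\le v$, $v-u\le K$ and $d(c,[u,v])\le E$. I claim $P(c;K,E)\le(K+1)\bigl(\tfrac{K}{2}+2E+1\bigr)\le(1+K+E)^2$. For the first inequality, rewrite $d(c,[u,v])\le E$ as $u\le c+E$ and $v\ge c-E$; fixing $u$ and counting admissible $v\in[\max(u,\lceil c-E\rceil),\,u+K]$ gives $K+1$ choices for each of the at most $\lfloor c+E\rfloor-\lceil c-E\rceil+1\le 2E+1$ values $u$ with $u\ge\lceil c-E\rceil$, and a triangular total $1+2+\dots+K=\tfrac{K(K+1)}{2}$ over the remaining $u$. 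For the second inequality, $(1+K+E)^2-(K+1)\bigl(\tfrac{K}{2}+2E+1\bigr)=\tfrac{K(K+1)}{2}+E^2\ge 0$. Next, for $t\ge 0$ put $\nu(t):=\#\{Y\in\BB_n:s_Y\le t\}$. If $s_Y\le t$ and $\diam(Y)=K$ — a nonnegative integer, necessarily $\le\lfloor t\rfloor$ — then each $v_i-u_i\le K$ and each $\delta_i\le d(Y,j)=s_Y-K\le t-K$, so every interval $[u_i,v_i]$ is one of the $P(j_i;K,t-K)\le(1+t)^2$ counted above. Summing over $K$,
\[
\nu(t)\ \le\ \sum_{K=0}^{\lfloor t\rfloor}\ \prod_{i=1}^{n}P(j_i;K,t-K)\ \le\ (\lfloor t\rfloor+1)(1+t)^{2n}\ \le\ (1+t)^{2n+1}.
\]

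Finally, using $(1+s)^{-2n-2}=(2n+2)\int_s^\infty(1+t)^{-2n-3}\,dt$ and exchanging sum and integral (all terms are nonnegative, so Tonelli applies),
\[
\sum_{Y\in\BB_n}\bigl(1+\diam(Y)+d(Y,j)\bigr)^{-2n-2}=(2n+2)\int_0^\infty(1+t)^{-2n-3}\,\nu(t)\,dt\ \le\ (2n+2)\int_0^\infty(1+t)^{-2}\,dt\ =\ 2(n+1),
\]
and $2(n+1)\le\tfrac{\pi^4(n+1)^2}{36}$, since this is equivalent to $\pi^4(n+1)\ge 72$ and $\pi^4>97$. (Keeping $\int_0^\infty(1+K+E)^{-3}\,dE=\tfrac12(1+K)^{-2}$ instead of the bound $\nu(t)\le(1+t)^{2n+1}$ gives the slightly sharper $\tfrac{\pi^2}{6}(n+1)$.)

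The crux, and the only place real care is needed, is the pair of one‑dimensional inequalities together with the choice to bin bricks by $s_Y$. The tempting alternative — bounding $(1+\diam(Y)+d(Y,j))^{-2n-2}$ from above by a product over the $n$ coordinates, via AM--GM applied to the two maxima — is fatally lossy: each factor then carries exponent $2+\tfrac{2}{n}$, only barely above the threshold $2$ at which a sum over a single pair $(u_i,v_i)$ diverges, and the estimate picks up a spurious factor growing like a constant$^{\,n}$, whereas the true bound is polynomial in $n$. Treating $\diam(Y)$ and $d(Y,j)$ jointly as one radial variable and using that $P(j_i;K,E)\le(1+K+E)^2$ with constant exactly $1$ is what pins $\nu(t)$ at the critical growth rate $(1+t)^{2n+1}$ and makes the final integral converge to something linear in $n$.
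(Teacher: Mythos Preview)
Your argument is correct, but it is organized quite differently from the paper's. The paper parameterizes bricks (with overcounting) by their two opposite corner points $\bx,\by\in\ZZ^n$, notes that $\max(d(\bx,j),d(\by,j))\le\diam(Y)+d(Y,j)$, so that $(1+\diam(Y)+d(Y,j))^{-2n-2}\le (1+d(\bx,j))^{-n-1}(1+d(\by,j))^{-n-1}$, and then factorizes the brick sum as at most $\bigl(\sum_{\bx\in\ZZ^n}(1+d(\bx,j))^{-n-1}\bigr)^2$; the single lattice sum is handled by the shell count $\#(\ZZ^n\cap B_k(j))\le(1+k)^n$ and Abel summation, yielding $\pi^2(n+1)/6$ per factor. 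You instead run a layer-cake decomposition in the single scalar $s_Y=\diam(Y)+d(Y,j)$, reduce the counting function $\nu(t)$ to an $n$-fold product of one-dimensional interval counts $P(j_i;K,t-K)\le(1+t)^2$ summed over the integer diameter $K$, and integrate. The corner-point factorization is shorter once spotted and gives a clean product structure; your approach is more self-contained, makes the effective $2n{+}1$ degrees of freedom explicit, and actually produces the sharper bound $2(n+1)$, linear rather than quadratic in $n$.
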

\begin{proof}
    Any pair of points $ x , y  \in \ZZ^n$ specifies a brick with $ x $ and $ y $ on opposing corners, and any brick can be specified this way (not uniquely). With $X$ the brick corresponding to $ x $ and $ y $ it is easy to see that $\max(d( x ,z),d( y ,z)) \le \diam(X)+d(X,z)$, and so $(1+ d( x ,z))(1+d( y ,z)) \le (1+\diam(X)+d(X,z))^2$. Thus we have
    \begin{align}
        \sum_{Y\in\BB_n}(1+\diam(Y) + d(Y,z))^{-2n-2} &\le \sum_{\substack{ x , y  \in \ZZ^n}}(1+d( x ,z))^{-n-1}(1+d( y ,z))^{-n-1} \notag\\
        &= \left(\sum_{ x  \in \ZZ^n}(1+d( x ,z))^{-n-1}\right)^2,
    \end{align}
    and it remains only to bound the above sum. Let $f(k):= (1+k)^{-n-1}$ and $g(k) := \#(\ZZ^n\cap B_{k}(z)) \le (1+2k)^n$. Using summation by parts we have
    \begin{align}
         \sum_{x\in\ZZ^n} (1+d( x ,z))^{-n-1} &\le \sum_{k\ge 0}f(k)(g(k+1)-g(k)) \notag\\
         &= \lim_{k\to \infty}f(k)g(k) - \sum_{k\ge 0}g(k)(f(k+1)-f(k)).
    \end{align}
    It is easy to check that $f(k)g(k) \to 0$ and that $-(f(k+1)-f(k)) \le (n+1)(1+k)^{-n-2}$, and so
    \begin{align}
         \sum_{x\in\ZZ^n} (1+d(x,z))^{-n-1} &\le 2^n(n+1)\sum_{k\ge 0}(1+k)^{-2} \notag\\
         &\le \frac{\pi^22^n(n+1)}{6}
    \end{align}
    which proves the Lemma.
\end{proof}

For any brick $Y$ we define the following subspace of $\mfkdl(Y)$:
\begin{align}
    \mfkdl^Y := \{ \CA \in \mfkdl(Y) \ \vert \ \otr_{X^c}(\CA)=0 \text{ for any brick } X \subsetneq Y\}.
\end{align}
Since each $\mfkdl(Y)$ decomposes as a direct sum $\mfkdl(Y) = \bigoplus_{X\subseteq Y}\mfkdl^X$ over bricks contained in $Y$, and since $\mfkdl = \bigcup_{Y\in\BB_n}\mfkdl(Y)$, we have a decomposition as the algebraic direct sum $\mfkdl = \bigoplus_{Y\in \BB_n}\mfkdl^Y$.
\begin{definition}
    For any $Y\in \BB_n$ define $\pi^Y: \mfkdl\to \mfkdl^Y$ as the orthogonal projection.
\end{definition}
The following lemma is often useful.
\begin{lemma}
    There is a function $\mu:\BB_n\times\BB_n\ra\ZZ$ such that for any $X\in\BB_n$ and any $\CA\in\mfkdl$ we have  
    \begin{align}
        \otr_{X^c}(\CA)&=\sum_{\substack{Y\subseteq X\\
        Y\in\BB_n}}\pi^Y(\CA),\\
        \pi^X(\CA)&=\sum_{\substack{Y\subseteq X\\
        Y\in\BB_n}}\mu(X,Y)\otr_{Y^c}(\CA).
    \end{align}
\end{lemma}
\begin{proof}
    The first statement is obvious. The second statement follows from the first one by letting $\mu$ to be the M\"{o}bius function \cite{Rota1964} of the locally-finite poset $(\BB_n,\subseteq)$. 
\end{proof}
    
Using this lemma and an explicit form of the M\"{o}bius function of $(\BB_n,\subseteq)$ one can show (see Appendix C of \cite{LocalNoether}):
\begin{lemma}\label{lma:brick-component-norm}
For any $\mathcal{A}\in \mfkdl$ and any $Y\in \BB_n$ we have $\|\pi^Y(\mathcal{A})\|\le 4^n \,\|\mathcal{A}\|$.
\end{lemma}
Let $\derF$ be a derivation of $\SA$ with domain $\SAl$. For any $X\in\BB_n$, the map $\otr_{X^c}\circ \derF|_{\SA(X)} :\SA(X)\to \SA(X)$ is a derivation (this follows from the defining property (\ref{eqn:partial-trace-defn}) of the partial trace). We therefore have $\otr_{X^c}\circ \derF|_{\SA(X)} = \ad_{\mathcal{A}_X}$ for some $\mathcal{A}_X\in \mfkdl(X)$, and we define 
\begin{align}
    \derF^X := \pi^X(\mathcal{A}_X).
\end{align}

\begin{definition}
    For any unbounded derivation with domain $\SAl$, the set of local observables $\{\derF^X\}_{X\in \BB_n}$ is the brick decomposition of $\derF$.
\end{definition}
The utility of the brick decomposition comes from its uniqueness:
\begin{lemma}\label{lem:brick-unique}
     Two unbounded derivations of $\SA$ with domain $\SAl$ and identical brick decompositions must coincide.
\end{lemma}
\begin{proof}
    An elementary calculation shows that whenever $X\subseteq Y$ we have $\otr_{X^c}(\mathcal{A}_Y) = \mathcal{A}_X$, and so we have
    \begin{align}
        \mathcal{A}_X = \bigoplus_{Z\subseteq X}\derF^Z
    \end{align}
    for every $X \in \BB_n$. Thus, for any $X\subseteq Y\in \BB_n$ and any $\mathcal{B}\in \SA(X)$, the quantity $\otr_{Y^c}(\derF(\mathcal{B})) = [\mathcal{A}_Y, \mathcal{B}]$ is determined by the brick decomposition of $\derF$. Since $\derF(\mathcal{B}) = \lim_{Y\uparrow \RR^n}\otr_{Y^c}(\derF(\mathcal{B}))$, it follows that $\derF(\mathcal{B})$ is determined by the brick decomposition of $\derF$.
\end{proof}
Lemma \ref{lem:brick-unique} allows us to identify an unbounded derivation $\derF$ with its set of brick components $\{\derF^X\}_{X\in \BB_n}$. In what follows, we will refer to a derivation and its brick decomposition interchangeably \footnote{However, note that not every element of $\prod_{X\in \BB_n}\mfkdl^X$ defines a derivation that is convergent on $\SAl$.}.
\begin{example}
    If $\derF = \derF_\Phi$ for some interaction $\Phi$ as in definition \ref{def:interaction}, then we have
    \begin{align}\label{eq:fromPhitoF}
        \derF^X = \sum_{Y\in \mathcal{P}(\Lambda)}\pi^X(\Phi(Y)),\quad \forall X\in\BB_n.
    \end{align}
    The sum is absolutely convergent by  (\ref{eqn:locally-finite}) and Lemma \ref{lma:brick-component-norm}.
\end{example}

\subsubsection*{Uniformly almost-local derivations}
As Examples \ref{ex:onsiteu1action} and \ref{ex:u1generatorferm} show, physically relevant derivations are unbounded and cannot be continuously extended to the whole quasilocal algebra $\SA$. To apply algebraic tools, one would like to define a space of derivations which is large enough to cover examples of interest and is a topological Lie algebra. 
 
We will use the brick decomposition to define such a class of derivations:

\begin{definition}
    For any element $\derF = \{\derF^Y\}_{Y\in\BB_n}$ of $\prod_{Y\in \BB_n}\mfkdl^Y$ we let 
    \begin{align}
        \|\derF\|_{k} := \sup_{Y\in\BB_n}\|\derF^Y\|(1+\diam(Y))^k. \label{eqn:bricknorm}
    \end{align}
\end{definition}
\begin{prop}
    Let $\derF \in \prod_{Y\in \BB_n}\mfkdl^Y$ and  $\|\derF\|_k < \infty$ for all $k\ge 0$. Then there exists an interaction $\Phi$ such that the brick components of $\derF_\Phi$ are given by $\derF$. 
\end{prop}
\begin{proof}
For any $X\in\CP(\Lambda)$ let \begin{align}
  \Phi(X) = \left\{\begin{array}{cc}
    \derF^{\lceil X\rceil} & \text{ if $\lceil X\rceil \cap\Lambda=X$} \\     0 & \text{ otherwise} 
\end{array}\right.
\end{align}
where $\lceil X\rceil\in\BB_n$ is the meet of all bricks containing $X$.
Clearly, $\Phi(X)\in \mfkdl(X)$. Also, for any $x\in\Lambda$ we have
\begin{align}
    \sum_{\substack{X\in\CP(\Lambda)\\X\ni x}}\|\Phi(X)\|\leq \sum_{\substack{Y\in\BB_n\\Y\ni x}}\|\derF^Y\|\leq \|\derF\|_{2n+2}\sum_{\substack{Y\in\BB_n\\Y\ni x}}(1+\diam(Y))^{-2n-2}\leq\infty,
\end{align}
where in the last step we used Lemma \ref{lem:brick-sum}. Thus, $\Phi$ is a well-defined interaction. Finally, using (\ref{eq:fromPhitoF}) it is easy to check that $\left(\derF_\Phi\right)^Y=\derF^Y$ for any $Y\in\BB_n$.
\end{proof}

\begin{definition} \label{defn:mfkDal}
    Define the space $\mfkDal$ as the set of elements $\derF\in \prod_{Y\in \BB_n}\mfkdl^Y$ with $\|\derF\|_{k} <\infty$ for all $k\ge 0$.
\end{definition}
In \cite{LocalNoether} $\mfkDal$ was called the space of {\it uniformly almost local} derivations (UAL derivations). From the physical viewpoint, UAL Hamiltonians correspond to interactions between sites of $\Lambda$ which decay faster than any power of distance. 
\begin{example}
    Let $\Lambda=\ZZ^n\subseteq\RR^n$. The derivation (\ref{eq:u1generatorex}) corresponds to $\derQ^Y$ being nonzero only for bricks of zero diameter, in which case $\derQ^{\{x\}}=\derq_x$. If $\derq_x$ is uniformly bounded, such $\derQ$ is a UAL derivation. If $\Lambda$ is an arbitrary lattice in $\RR^n$, a derivation of the form (\ref{eq:u1generatorex}) has $\derQ^Y$ nonzero only for $\diam Y\leq 1$. With our assumptions on $\Lambda$, $\derQ$ is a UAL derivation whenever $\derq_x$ is uniformly bounded. 
\end{example}
\begin{example}\label{ex:quadraticHam}
    Consider a system of fermionic particles as in Example \ref{ex:u1generatorferm} with all $H_x$ isomorphic to a fixed finite-dimensional Hilbert space $H$. If the particles are non-interacting, the corresponding Hamiltonian is described by a derivation of the form
    \begin{align}\label{eq:quadraticHam}
    \derF(h):\CA\mapsto \sum_{x,y\in\Lambda}\left[\sum_{\alpha,\beta} c^*_{\alpha,x}c_{\beta,y} h_{\alpha\beta}(x,y),\CA\right],
    \end{align}
    where $h_{\alpha\beta}(x,y)$ are complex numbers. One may regard these numbers as elements of an infinite square matrix whose rows and columns are labeled by a pair $(\alpha,x)$, or as matrix elements of an operator on $\ell^2(\Lambda,H)$.  $\derF$ is anti-hermitian iff this matrix is anti-hermitian.
    Equivalently, one can view the numbers $h_{\alpha\beta}(x,y)$ as defining a function $h:\Lambda\times\Lambda\ra \Hom_\CC(H,H)$. An anti-hermitian $\derF(h)$ is a UAL derivation if $h$ is bounded and $h(x,y)=O(|x-y|^{-\ell})$ for all $\ell\geq 0$. The charge derivation $\derQ$ from Example \ref{ex:u1generatorferm} also has the form (\ref{eq:quadraticHam}), with $h(x,y)=\delta_{xy}\in\Hom_\CC(H,H)$ for all $x,y\in\Lambda$. 
\end{example}

The functions $\|\cdot\|_k$ are norms. We show in the following that $\mfkDal$ equipped with these norms is a \Frechet\ -Lie algebra.  But first we generalize the definition of $\mfkDal$ by defining for every $U\subseteq \RR^n$ a real Lie algebra $\mfkDal(U)$ which consists of derivations approximately localized on $U$.

\begin{definition}
    For any element $\derF = \{\derF^Y\}_{Y\in\BB_n}$ of $\prod_{Y\in \BB_n}\mfkdl^Y$ and every $U\subseteq\RR^n$ we let 
    \begin{align}
        \|\derF\|_{U,k} := \sup_{Y\in\BB_n}\|\derF^Y\|(1+\diam(Y)+ d(U,Y))^k \label{eqn:U-norm}
    \end{align}
\end{definition}
\begin{definition}
    Define $\mfkDal(U)\subseteq \prod_{Y\in \BB_n}\mfkdl^Y$ as the set of elements $\derF$ with $\|\derF\|_{U,k} <\infty$ for all $k\ge 0$.
\end{definition}
If $U$ is empty, then for $k>0$ $\|\derF\|_{U,k}<\infty$ if and only if $\derF^Y=0$ for all $Y\in \BB_n$. Thus $\mfkDal(\emptyset)=\{0\}$. On the other hand we have $\mfkDal(\RR^n)=\mfkDal$.

It is easy to see that (\ref{eqn:U-norm}) is a norm on $\mfkDal(U)$ for each $k\ge 0$. We endow $\mfkDal(U)$ with the locally convex topology given by the norms (\ref{eqn:U-norm}) ranging over all $k\ge 0$. Recall that a topological vector space is called a \Frechet\ space if it is Hausdorff, and if its topology can be generated by a countable family of seminorms with respect to which it is complete.
\begin{prop}
    $\mfkDal(U)$ is a \Frechet\ space.
\end{prop}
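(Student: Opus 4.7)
The plan is to verify the three defining properties of a \Frechet\ space directly: (i) the topology is generated by a countable family of seminorms, (ii) the space is Hausdorff, and (iii) it is complete with respect to that family. The only case needing special attention is $U=\varnothing$, in which case $\mfkDal(\varnothing)=0$ and the claim is trivial, so I assume $U\neq \varnothing$ throughout.

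For (i), I would observe that since $1+\diam(Y)+d(U,Y)\geq 1$ for every brick $Y$, the seminorms $\|\cdot\|_{U,k}$ are monotone nondecreasing in $k$. Hence the countable subfamily indexed by $k\in\NN$ generates the same locally convex topology as the full family. For (ii), the seminorm $\|\derF\|_{U,0}=\sup_{Y}\|\derF^Y\|$ is already a norm on $\prod_{Y\in\BB_n}\mfkdl^Y$, so it certainly separates points of $\mfkDal(U)$.

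The main content is completeness. Given a Cauchy sequence $\{\derF_n\}$ in $\mfkDal(U)$, for each fixed brick $Y$ the sequence $\{\derF_n^Y\}$ is Cauchy in the finite-dimensional, hence complete, space $\mfkdl^Y$, so it converges to some $\derF^Y\in\mfkdl^Y$. Set $\derF:=\{\derF^Y\}_{Y\in\BB_n}$. Two things remain: showing $\derF\in\mfkDal(U)$, and showing $\derF_n\to \derF$ in the topology. For the first, I would use that any Cauchy sequence in a seminormed space is bounded: $M_k:=\sup_n \|\derF_n\|_{U,k}<\infty$ for every $k$. Passing to the pointwise limit in
\begin{equation*}
\|\derF_n^Y\|(1+\diam(Y)+d(U,Y))^k \leq M_k
\end{equation*}
gives $\|\derF\|_{U,k}\leq M_k<\infty$, so $\derF\in\mfkDal(U)$. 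For convergence, given $\varepsilon>0$ and $k\geq 0$ choose $N$ with $\|\derF_m-\derF_n\|_{U,k}<\varepsilon$ for $m,n\geq N$; then for every brick $Y$ one has $\|\derF_m^Y-\derF_n^Y\|(1+\diam(Y)+d(U,Y))^k<\varepsilon$, and letting $m\to\infty$ yields the same bound with $\derF_m^Y$ replaced by $\derF^Y$. Taking the supremum over $Y$ gives $\|\derF-\derF_n\|_{U,k}\leq\varepsilon$ for all $n\geq N$.

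There is no real obstacle here; the argument is the standard verification that a weighted-supremum space is \Frechet, the only mildly delicate point being the interchange of limit and supremum, which is handled by the standard trick of first fixing $Y$ to pass the limit inside and only then taking the supremum. The finite-dimensionality of each $\mfkdl^Y$ is what allows the pointwise limit to exist and converts the Cauchy hypothesis into pointwise convergence.
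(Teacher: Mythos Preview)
Your proof is correct and follows essentially the same approach as the paper: verify Hausdorff via the $k=0$ norm, obtain a brickwise limit from completeness of each finite-dimensional $\mfkdl^Y$, and then upgrade to convergence in every $\|\cdot\|_{U,k}$. Your completeness step is in fact slightly cleaner than the paper's: you pass $m\to\infty$ directly in the Cauchy estimate at a fixed brick and then take the supremum over $Y$, whereas the paper reaches the same conclusion via a telescoping sum through a subsequence $N_\ell$ with geometrically decreasing gaps.
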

\begin{proof}
    The Hausdorff property follows from the fact that if $\|\derF\|_{U,k} = 0$ for any $k\ge 0$ then $\derF = 0$. To show completeness, suppose $\{\derF_m\}_{m\in \mathbb{N}} \subseteq \mfkDal(U)$ is Cauchy, i.e. that for any $k\ge 0$ and any $\epsilon >0$ there is an $N\in \mathbb{N}$ such that $m,m'\ge N \implies \|\derF_m - \derF_{m'}\|_{U,k} < \epsilon$. For any fixed $Y\in \BB_n$ this implies that $\{\derF_m^Y\}$ is Cauchy in $\mfkdl^Y$ (with the operator norm) and thus converges to a limit $\derF^Y$. Let $\derF := \{\derF^Y\}_{Y\in \BB_n}$.

    Fix $k\in \mathbb{N}$ and $\epsilon>0$. For every $\ell=0,1,2,...$, choose $N_{\ell}\in \mathbb{N}$ so that $m,m'\ge N_{\ell}\implies \|\derF_m - \derF_{m'}\|_{U,k} < 2^{-\ell-1}\epsilon$. For any $Y\in\BB_n$, any $m\ge N_{1}$, and any $M>1$, we have 
    \begin{align}
        \|\derF_m^Y-\derF^Y\| &\le \|\derF_{m}^Y-\derF_{N_1}^Y\| + \sum_{i=1}^{M-1}\|\derF_{N_i}^Y-\derF_{N_{i+1}}^Y\|+ \|\derF_{N_M}^Y-\derF^Y\|\notag\\
        &\le \epsilon(1+\diam(Y) + d(U,Y))^{-k} + \|\derF_{N_M}^Y-\derF^Y\|.
    \end{align}
    Taking $M\to \infty$ shows that $\|\derF_m^Y-\derF^Y\|(1+\diam(Y) + d(U,Y))^{k} < \epsilon$. Since $Y$ was arbitrary, we have $\|\derF_m-\derF\|_{U,k} < \epsilon$. Since $k$ was arbitrary, $\{\derF_m\}$ converges to $\derF$ in the topology of $\mfkDal(U)$.
\end{proof}
\begin{definition}
    Let $U$ be a non-empty subset of $\RR^n$. For any $r\geq 0$ the $r$-thickening of $U$, denoted $U^r$, is $U^r := \{x\in \RR^n: d(x,U)\le r\}$.
\end{definition}
The norms $\|\cdot\|_{U,k}$ obey the following dominance relation.
\begin{lemma}\label{lma:seminorm-dominated}
    Let $U,V$ be subsets of $\RR^n$ and suppose that $U\subseteq V^r$. Then for any $\derF \in \mfkDal(U)$ we have
    \begin{align}
        \|\derF\|_{V,k} \le (r+1)^k\|\derF\|_{U,k}. \label{eqn:seminorm-dominated}
    \end{align}
    In particular, $\mfkDal(U)\subseteq \mfkDal(V)$ and the inclusion is continuous.
\end{lemma}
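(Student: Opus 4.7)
The plan is to reduce the inequality to a pointwise statement on bricks. For each brick $Y \in \BB_n$, I want to compare the weight $(1+\diam(Y)+d(V,Y))^k$ appearing in the $V$-norm with $(1+\diam(Y)+d(U,Y))^k$ appearing in the $U$-norm. Since the seminorms are defined by taking a supremum over bricks, any uniform-in-$Y$ bound between these weights immediately gives the desired inequality between seminorms, and hence both the continuity of the inclusion and the fact that $\mfkDal(U) \subseteq \mfkDal(V)$.

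The first step is the geometric inequality
\[
d(V,Y) \le d(U,Y) + r
\]
for every brick $Y$. This follows from the hypothesis $U \subseteq V^r$ together with the triangle inequality: for any $u \in U$ there exists $v \in V$ with $d(u,v) \le r$, so for any $y \in Y$ we get $d(y,V) \le d(y,u) + r$; taking the infimum over $u \in U$ and then over $y \in Y$ gives the claimed bound. (The case $U = \varnothing$ is vacuous since then $\mfkDal(U) = 0$, and bricks are nonempty by definition so no $\infty$ issues arise.)

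The second step is the numerical inequality $A + r \le (r+1) A$ valid whenever $A \ge 1$, applied to $A = 1 + \diam(Y) + d(U,Y)$. Combined with the first step this gives
\[
1 + \diam(Y) + d(V,Y) \le (r+1)\bigl(1 + \diam(Y) + d(U,Y)\bigr),
\]
and raising to the $k$-th power and multiplying by $\|\derF^Y\|$ yields
\[
\|\derF^Y\|\bigl(1 + \diam(Y) + d(V,Y)\bigr)^k \le (r+1)^k \|\derF^Y\|\bigl(1 + \diam(Y) + d(U,Y)\bigr)^k.
\]
Taking the supremum over all $Y \in \BB_n$ gives exactly $\|\derF\|_{V,k} \le (r+1)^k \|\derF\|_{U,k}$. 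The inclusion $\mfkDal(U) \subseteq \mfkDal(V)$ then follows because finiteness of all the $U$-seminorms of $\derF$ implies finiteness of all the $V$-seminorms, and continuity of the inclusion is immediate from the explicit constants $(r+1)^k$. There is no serious obstacle here; the entire argument is a two-line triangle inequality followed by a trivial algebraic estimate.
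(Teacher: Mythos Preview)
Your proof is correct and takes essentially the same approach as the paper: both establish $d(Y,V)\le d(Y,U)+r$ from $U\subseteq V^r$ via the triangle inequality, then use $A+r\le(r+1)A$ for $A\ge 1$ to conclude. The paper routes the first step through the intermediate inequality $d(Y,V)\le d(Y,V^r)+r\le d(Y,U)+r$, while you collapse this into one line, but the argument is the same.
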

\begin{proof}
Let $Y\in\BB_n$. Then (\ref{eqn:seminorm-dominated}) follows from 
    \begin{align}
        1+ \diam(Y) + d(Y,V) &\le 1+ \diam(Y) + d(Y,V^r)+r \notag\\
        &\le  1+ \diam(Y) + d(Y,U)+r \notag\\
        &\le (r+1)(1+\diam(Y)+d(Y,U)),
    \end{align}
where in the second line we used the triangle inequality and in the third we used $U\subseteq V^r$.
\end{proof}
The above Lemma shows that the space $\mfkDal(U)$ only depends on the asymptotic geometry of the region $U$. More precisely, we have 
\begin{corollary} If $U\subseteq V^r$ and $V\subseteq U^r$ for some $r\geq 0$ then $\mfkDal(U)=\mfkDal(V)$ (as subsets of $\prod_Y \mfkdl^Y)$ and are isomorphic as \Frechet\ spaces. In particular, for any non-empty 
bounded $U \subseteq \RR^n$, the space $\mfkDal(U)$ coincides with $\mfkDal(\{0\})$. 
\end{corollary}
\subsubsection*{Behaviour under intersection}
For any two bounded sets $U,V$, a local observable is strictly localized on both $U$ and $V$ iff it is localized on their intersection, i.e. $\SA(U)\cap \SA(V) = \SA(U\cap V)$. An analogous relation for the spaces $\mfkDal(U)$ does not hold in general\footnote{Indeed, for any two bounded $U,V$ the spaces $\mfkDal(U)$ and $\mfkDal(V)$ coincide and are nontrivial but if $U$ and $V$ are disjoint then $\mfkDal(U\cap V) = 0$.}, but it does hold if we assume that $U$ and $V$ satisfy the following transversality condition:
\begin{definition}\label{defn:transverse}
    Let $U,V\subseteq \RR^n$ and $C>0$. We say $U$ and $V$ are $C$-transverse if 
    \begin{align}
        d(x,U\cap V) \le C\max(d(x,U), d(x,V))
    \end{align}
    for all $x\in \RR^n$.
\end{definition}
We will say $U$ and $V$ are transverse if they are $C$-transverse for some $C>0$.
\begin{example}
\begin{itemize}
\item[(i)] For any $U\subseteq\RR^n$, $U$ and $\RR^n$ are transverse.
\item[(ii)] If $U\cap V=\emptyset$, then $U$ and $V$ are not transverse.
\item[(iii)] If $U$ and $V$ are two intersecting straight lines in $\RR^2$, then $U$ and $V$ are transverse.
\item[(iv)] If $U=\{(x,0)\in\RR^2\}$ and $V=\{(x,x^{1/3})\in\RR^2\}$, then $U$ and $V$ are not transverse.
\end{itemize}
\end{example}

Transversality of two regions $U,V$ implies the following important compatibility between their \Frechet \ norms:
\begin{prop}\label{prop:pullback}
    If $U,V\subseteq \RR^n$ are $C$-transverse then
    \begin{align}
        \max(\|\derF\|_{U,k}, \|\derF\|_{V,k}) \le \|\derF\|_{U\cap V,k} \le (C+1)^k \max(\|\derF\|_{U,k}, \|\derF\|_{V,k}) \label{eqn:pullback}
    \end{align}
    for all $k>0$. In particular, $\mfkDal(U \cap V)$ is a topological pullback: it is the set $\mfkDal(U)\cap \mfkDal(V)$ with the topology of simultaneous convergence in $\mfkDal(U)$ and $\mfkDal(V)$.
\end{prop}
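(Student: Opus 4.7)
The plan is to prove the two inequalities separately and then deduce the pullback description.

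The left inequality is immediate from Lemma \ref{lma:seminorm-dominated}: since $U\cap V\subseteq U$ and $U\cap V\subseteq V$ (taking $r=0$), we get $\|\derF\|_{U,k}\le \|\derF\|_{U\cap V,k}$ and $\|\derF\|_{V,k}\le \|\derF\|_{U\cap V,k}$, and hence the max is bounded by $\|\derF\|_{U\cap V,k}$.

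The right inequality reduces to a purely geometric statement about distances, which is where transversality will be used. Fix a brick $Y\in \BB_n$ and pick an arbitrary point $y\in Y$. By the triangle inequality on $Y$, we have $d(y,U)\le \diam(Y)+d(Y,U)$ and similarly $d(y,V)\le \diam(Y)+d(Y,V)$. Applying $C$-transversality at the point $y$ gives
\begin{align*}
d(Y,U\cap V)\le d(y,U\cap V)\le C\max(d(y,U),d(y,V))\le C\diam(Y)+C\max(d(Y,U),d(Y,V)).
\end{align*}
Adding $1+\diam(Y)$ to both sides yields the key geometric inequality
\begin{align*}
1+\diam(Y)+d(Y,U\cap V)\le (C+1)\bigl(1+\diam(Y)+\max(d(Y,U),d(Y,V))\bigr).
\end{align*}
Since the max of $(1+\diam(Y)+d(Y,U))^k$ and $(1+\diam(Y)+d(Y,V))^k$ equals $(1+\diam(Y)+\max(d(Y,U),d(Y,V)))^k$, taking $k$-th powers and multiplying by $\|\derF^Y\|$ gives
\begin{align*}
\|\derF^Y\|(1+\diam(Y)+d(Y,U\cap V))^k\le (C+1)^k\max(\|\derF\|_{U,k},\|\derF\|_{V,k}).
\end{align*}
Taking the sup over $Y\in\BB_n$ yields the right-hand inequality in \eqref{eqn:pullback}.

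For the pullback description, Lemma \ref{lma:seminorm-dominated} gives the inclusion $\mfkDal(U\cap V)\subseteq \mfkDal(U)\cap \mfkDal(V)$. Conversely, any $\derF\in \mfkDal(U)\cap \mfkDal(V)$ has finite $\|\derF\|_{U,k}$ and $\|\derF\|_{V,k}$ for every $k$, so by the right inequality $\|\derF\|_{U\cap V,k}<\infty$ for all $k$, placing $\derF$ in $\mfkDal(U\cap V)$. The same inequalities show that the family $\{\|\cdot\|_{U\cap V,k}\}_k$ generates the same locally convex topology as the union $\{\|\cdot\|_{U,k}\}_k\cup \{\|\cdot\|_{V,k}\}_k$, which is precisely the topology of simultaneous convergence in $\mfkDal(U)$ and $\mfkDal(V)$.

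The only nontrivial step is the geometric one: the transversality hypothesis is a pointwise statement, so the main thing to watch is that the natural minimizers in $Y$ for $d(Y,U)$ and $d(Y,V)$ need not coincide. Picking any single $y\in Y$ and absorbing the mismatch into a term of order $\diam(Y)$ via the triangle inequality is what lets transversality feed through; this is where the factor $C+1$ (rather than $C$) comes from.
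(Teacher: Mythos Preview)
Your proof is correct and follows essentially the same approach as the paper: reduce the right inequality to the geometric bound $d(Y,U\cap V)\le C\bigl(\diam(Y)+\max(d(Y,U),d(Y,V))\bigr)$ by applying transversality at a point of $Y$ and absorbing the mismatch between the minimizers for $d(Y,U)$ and $d(Y,V)$ into a $\diam(Y)$ term. The only cosmetic difference is that the paper carries the $\inf_{z\in Z}$ through and introduces explicit minimizers $x^*,y^*$, whereas you fix a single arbitrary point; your version is marginally cleaner but the content is identical.
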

\begin{proof}
    The first inequality is true even without assuming transversality -- it follows from Lemma \ref{lma:seminorm-dominated}. For the second, note that we can assume without loss of generality that $U$ and $V$ are closed. Let $Z\in \BB_n$ and choose $x^*,y^*\in Z$ so that $d(x^*,U)=d(Z,U)$ and $d(y^*,V)=d(Z,V)$. Then we have
    \begin{align}
        d(Z,U\cap V) &= \inf_{z\in Z} d(z,U\cap V)\notag\\
        &\le C \inf_{z\in Z} \max(d(z,U),d(z,V))\notag\\
        &\le C\inf_{z\in Z} \max(d(z,x^*) + d(x^*, U), d(z,y^*) + d(y^*, V))\notag\\
        &\le C(\diam(Z) + \max(d(Z,U),d(Z,V))),
    \end{align}
    and thus
    \begin{align}
        1+ \diam(Z) + d(Z,U\cap V) \le (C+1)(1+ \diam(Z) + \max(d(Z,U),d(Z,V))), \label{eqn:stable-intersection-brick-version}
    \end{align}
    which proves (\ref{eqn:pullback}).
\end{proof}
The next proposition relates $\mfkDal(U\cup V)$ with $\mfkDal(U)$ and $\mfkDal(V)$ for any $U,V\subseteq\RR^n$.
\begin{prop}\label{prop:SES}
For any $U,V\subseteq \RR^n$ consider the sequence of vector spaces
\begin{align}
    \mfkDal(U \cap V) \xrightarrow{\alpha} \mfkDal(U)\oplus \mfkDal(V) \xrightarrow{\beta} \mfkDal(U \cup V) \to 0,\label{eqn:SAal-SES}
\end{align}
where $\alpha(\derF) = (\derF,-\derF)$ and $\beta(\derF,\derG) = \derF + \derG$.
\begin{enumerate}[i)]
    \item The sequence (\ref{eqn:SAal-SES}) admits a right splitting, i.e. a map $\gamma: \mfkDal(U \cup V) \to \mfkDal(U)\oplus \mfkDal(V)$ with $\beta\circ \gamma = \id$. In particular, it is exact on the right.
    \item If $U$ and $V$ are transverse, then (\ref{eqn:SAal-SES}) is exact on the left.
\end{enumerate}
\end{prop}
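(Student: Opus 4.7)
My plan is to verify (i) by writing down an explicit splitting based on a brick-by-brick partition, and to deduce (ii) from the pullback description of $\mfkDal(U\cap V)$ already established in Proposition~\ref{prop:pullback}. Injectivity of $\alpha$ is automatic from the formula $\alpha(\derH) = (\derH,-\derH)$, so the substantive content is the surjectivity of $\beta$ in general, and $\ker\beta \subseteq \operatorname{im}\alpha$ under the transversality hypothesis.

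For (i), given $\derF = \{\derF^Y\}_{Y\in \BB_n}\in \mfkDal(U \cup V)$, I would split $\derF$ according to which of $U$ or $V$ is closer to the brick $Y$. Concretely, set $\derF'^Y := \derF^Y$ whenever $d(Y,U) \le d(Y,V)$ and $\derF'^Y := 0$ otherwise, put $\derF''^Y := \derF^Y - \derF'^Y$, and define $\gamma(\derF) := (\derF',\derF'')$. Using the elementary identity $d(Y,U\cup V) = \min(d(Y,U),d(Y,V))$, on the support of $\derF'$ one has $d(Y,U) = d(Y,U\cup V)$, which yields the pointwise estimate
\[
\|\derF'^Y\|(1+\diam(Y)+d(Y,U))^k = \|\derF^Y\|(1+\diam(Y)+d(Y,U\cup V))^k
\]
and the analogous identity for $\derF''$ with $V$ in place of $U$. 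Taking suprema gives $\|\derF'\|_{U,k}\le \|\derF\|_{U\cup V,k}$ and $\|\derF''\|_{V,k}\le \|\derF\|_{U\cup V,k}$ for every $k\ge 0$, so $\gamma$ lands in $\mfkDal(U)\oplus\mfkDal(V)$, is linear and continuous, and satisfies $\beta\circ\gamma = \id$ by construction.

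For (ii), if $(\derF,\derG)\in \ker\beta$ then $\derG = -\derF$, so that $\derF$ lies in $\mfkDal(U)\cap \mfkDal(V)$ regarded as a subset of $\prod_Y \mfkdl^Y$. Proposition~\ref{prop:pullback}, whose entire point is the stability of this intersection under transversality, identifies it with $\mfkDal(U\cap V)$; therefore $\derH := \derF \in \mfkDal(U\cap V)$ and $\alpha(\derH) = (\derF,\derG)$, which together with injectivity of $\alpha$ and part (i) yields full exactness. The main obstacle is thus Proposition~\ref{prop:pullback} itself, already proved; everything else is formal. Degenerate cases where $U$ or $V$ is empty are consistent with the conventions $\mfkDal(\varnothing) = 0$ and $d(Y,\varnothing) = \infty$: if, say, $V=\varnothing$, then every brick satisfies $d(Y,U)\le d(Y,V)=\infty$, all of $\derF$ is assigned to $\derF'$, and $\derF'' = 0 \in \mfkDal(\varnothing)$ as required.
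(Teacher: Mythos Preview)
Your proof is correct and follows essentially the same route as the paper's. The only difference is cosmetic: the paper resolves ties $d(Y,U)=d(Y,V)$ by assigning $\tfrac{1}{2}\derF^Y$ to each component, whereas you assign the whole $\derF^Y$ to the $U$-side; both choices yield the same norm estimates and the same conclusion, and your treatment of part (ii) via Proposition~\ref{prop:pullback} matches the paper's one-line argument exactly.
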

\begin{proof}
    $i)$. For any $Y\in \BB_n$, define
    \begin{align}
        \chi(Y):= &\left\{\begin{array}{cc}
            1 & \text{ if } d(Y,U) < d(Y,V) \\
            1/2 &  \text{ if }d(Y,U) = d(Y,V)\\
            0 & \text{ if } d(Y,U) > d(Y,V)
        \end{array} \right.
    \end{align}
    For any $\derF\in \mfkDal(U\cup V)$ define $\gamma_1(\derF) := \sum_{Y\in\BB_n}\chi(Y)\derF^Y$, and $\gamma_2(\derF) = \derF - \gamma_1(\derF)$. Then it is not hard to show that $\|\gamma_1(\derF)\|_{U,k}\le \|\derF\|_{U\cup V,k}$ and $\|\gamma_2(\derF)\|_{V,k}\le \|\derF\|_{U\cup V,k}$ and that $\gamma = (\gamma_1,\gamma_2)$ is a right splitting of (\ref{eqn:SAal-SES}).
    \\ \\ 
    $ii)$. Follows immediately from Proposition \ref{prop:pullback}.
\end{proof}

Next we will show how to endow $\mfkDal(U)$ with the structure of a Lie algebra.
\begin{prop}\label{prop:commutator}
    Let $U,V\subseteq \RR^n$. For any $\derF \in \mfkDal(U)$ and $\derG \in \mfkDal(V)$ the sum
    \begin{align}
        \derH^Z := \sum_{X,Y\in \BB_n}\pi^Z\left([\derF^X,\derG^Y]\right)
        \label{eqn:commutator-def-sum}
    \end{align}
    is absolutely convergent for every $Z \in \BB_n$. Here $\pi^Z$ is the orthogonal projection onto $\mfkdl^Z$ in $\mfkdl=\oplus_{Y\in\BB_n}\mfkdl^Y$. The brick components $\derH^Z$ define an element of $\prod_{Z\in \BB_n}\mfkdl^Z$ which we call $[\derF,\derG]$, which satisfies
    \begin{align}
        \|[\derF,\derG]\|_{U,k} &\le C 3^k\|\derF\|_{U,k+4n+4}\|\derG\|_{V,k+4n+4} \label{eqn:commutatorbound}
    \end{align}
    for some constant $C>0$ that depends only on $n$. The resulting bracket $[\cdot,\cdot]:\mfkDal(U)\times\mfkDal(V)\to \mfkDal(U)$ is bilinear, skew-symmetric, and satisfies the Jacobi identity.
\end{prop}
To prove Proposition \ref{prop:commutator} we will need several lemmas. 
\begin{lemma}\label{lem:brickjoin}
    For any $X,Y\in \BB_n$ with $X\cap Y \neq \varnothing$ we have 
    \begin{align}
        \diam(X\vee Y) \le \diam(X) + \diam(Y) \label{eqn:brickjoin}
    \end{align}
    and for any $z\in X\vee Y$ we have
    \begin{align}
        d(z,X) \le \diam(Y).
    \end{align}
\end{lemma}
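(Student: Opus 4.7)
The plan is to reduce everything to coordinatewise reasoning. Write $X = \prod_{i=1}^n [\ell_i, m_i]$ and $Y = \prod_{i=1}^n [\ell'_i, m'_i]$. Since bricks are axis-aligned integer boxes, the join is the coordinatewise hull
\[
X \vee Y = \prod_{i=1}^n \bigl[\min(\ell_i, \ell'_i),\, \max(m_i, m'_i)\bigr],
\]
and the hypothesis $X \cap Y \neq \varnothing$ forces the interval $[\ell_i, m_i]$ to overlap with $[\ell'_i, m'_i]$ for every $i$. Because the $\ell^\infty$-diameter of a brick is just the maximum of its side lengths, both assertions become statements about one-dimensional intervals that are checked coordinatewise.

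For the first inequality, I would show that for any two overlapping closed intervals $[a,b]$ and $[a',b']$ one has $\max(b,b') - \min(a,a') \leq (b - a) + (b' - a')$. This is immediate by cases: either one interval contains the other, or the intervals partially overlap, say $a \leq a' \leq b \leq b'$, in which case $\max(b,b') - \min(a,a') = b' - a = (b'-a') + (a'-a) \leq (b'-a') + (b-a)$ since $a' \leq b$. Applying this in each coordinate and then taking the maximum over $i$ yields $\diam(X \vee Y) \leq \diam(X) + \diam(Y)$.

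For the second inequality, I would use that the $\ell^\infty$-closest point in a product of intervals is the coordinatewise projection, so
\[
d(z, X) = \max_{i} \max\bigl(\ell_i - z_i,\; 0,\; z_i - m_i\bigr).
\]
Fix $i$. Since $z \in X \vee Y$ we have $\min(\ell_i, \ell'_i) \leq z_i \leq \max(m_i, m'_i)$. If $z_i \in [\ell_i, m_i]$ the $i$-th contribution is zero. If $z_i < \ell_i$ then $z_i \geq \ell'_i$, and the overlap $[\ell_i,m_i] \cap [\ell'_i,m'_i] \neq \varnothing$ forces $m'_i \geq \ell_i$, so
\[
\ell_i - z_i \;\leq\; \ell_i - \ell'_i \;\leq\; m'_i - \ell'_i \;\leq\; \diam(Y).
\]
The symmetric case $z_i > m_i$ is handled identically. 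Taking the maximum over $i$ gives $d(z, X) \leq \diam(Y)$.

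There is no real obstacle here; the only point to be careful about is using $X \cap Y \neq \varnothing$ to convert pairwise interval overlap into the two elementary one-dimensional bounds, after which both claims follow by maximizing over coordinates.
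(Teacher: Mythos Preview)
Your proof is correct and follows essentially the same approach as the paper: reduce to the one-dimensional case via coordinate projections, noting that $\diam$, $d$, and $\vee$ all factor through these projections in the $\ell^\infty$ metric, and that $X\cap Y\neq\varnothing$ forces coordinatewise overlap. The paper's version simply asserts that the $n=1$ case is clear and then invokes the coordinatewise identities, whereas you spell out the interval arithmetic explicitly; the content is the same.
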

\begin{proof}
    Let $\pi_i:\RR^n\to \RR$ be the projection onto the $i$th coordinate. The following identities hold for any bricks $X,Y\in \BB_n$:
    \begin{align}
        \pi_i(X\vee Y) &= \pi_i(X)\vee\pi_i(Y),\\
        \diam(X) &= \max_{i=1,\hdots n}\diam(\pi_i(X)),\\
        d(X,Y) &= \max_{i=1,\hdots n}d(\pi_i(X), \pi_i(Y)).
    \end{align}
    When $n=1$ the results are clear. When $n>1$ they follow from the $n=1$ case via the above identities.
\end{proof}

\begin{lemma}\label{lem:brick-commutator}
    Let $X,Y,Z \in \BB_n$ and let $\CA\in \mfkdl(X)$ and $\CB\in\mfkdl(Y)$. Let $\CA^X$ denote the component of $\CA$ in $\mfkdl^X$ and $\CB^Y$ denote the component of $\CB$ in $\mfkdl^Y$. Then the $\mfkdl^Z$-component of $[\CA^X,\CB^Y]$ vanishes unless $X\cap Y \neq \varnothing$ and $Z\subseteq X\vee Y$.
\end{lemma}
\begin{proof}
    The requirement that $X\cap Y\neq \varnothing$ is clear, since $\CA^X$ and $\CB^Y$ would commute otherwise. Suppose $Z\nsubseteq X\vee Y$. Then $Z' := (X\vee Y)\cap Z$ is a brick that is strictly contained in $Z$, so the $\mfkdl^Z$-component of $[\CA^X,\CB^Y]$ is in  $\mfkdl^Z\cap\mfkdl(Z')=\{0\}.$ 
\end{proof}
We make the following definitions for the next lemma. For any $U\subseteq \RR^n$ and any brick $X$ write $\tilde{d}(X,U) := 1+\diam(X)+ d(X,U)$ and $\tilde{d}(X):= 1+\diam(X)$.
\begin{lemma}\label{lem:doublebricksum}
    For any $U\subseteq \RR^n$, $Z\in \BB_n$, and $k \ge 0$ we have
    \begin{align}
    \sum_{\substack{X,Y\in \BB_n \\ X\cap Y\neq \varnothing \\ Z\subseteq X\vee Y}}\tilde{d}(X,U)^{-k-4n-4}\tilde{d}(Y)^{-k-4n-4} \le \frac{\pi^816^n(n+1)^43^k}{1296}\tilde{d}(Z,U)^{-k}.
\end{align}
\end{lemma}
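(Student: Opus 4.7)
The plan is to reduce the left-hand side to Lemma~\ref{lem:brick-sum} via a pointwise bound on each summand that separates the $X$- and $Y$-sums. First I would fix any $j\in Z$ (possible since bricks are nonempty by definition). Since $Z\subset X\vee Y$, we have $j\in X\vee Y$; Lemma~\ref{lem:brickjoin} combined with $X\cap Y\neq\varnothing$ then provides three geometric inputs: $\diam(X\vee Y)\le\diam(X)+\diam(Y)$, $d(j,X)\le\diam(Y)$, and (by the symmetry of the lemma in $X$ and $Y$) $d(j,Y)\le\diam(X)$.

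From these I would deduce three ``master inequalities'':
\begin{align*}
\tilde{d}(Z,U) &\le 2\,\tilde{d}(X,U)\,\tilde{d}(Y),\\
\tilde{d}(X,j) &\le \tilde{d}(X,U)\,\tilde{d}(Y),\\
\tilde{d}(Y,j) &\le \tilde{d}(X,U)\,\tilde{d}(Y).
\end{align*}
The last two are straightforward: $\tilde{d}(X,j) \le 1 + \diam(X) + \diam(Y) \le (1+\diam(X))(1+\diam(Y)) = \tilde{d}(X)\tilde{d}(Y) \le \tilde{d}(X,U)\tilde{d}(Y)$, and analogously for $\tilde{d}(Y,j)$. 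The first requires a triangle inequality through an auxiliary point of $X$: $\diam(Z)\le\diam(X\vee Y)\le\diam(X)+\diam(Y)$ is immediate, and for any $x^*\in X$, $d(Z,U)\le d(j,U)\le d(j,x^*)+d(x^*,U)\le\diam(X\vee Y)+d(x^*,U)$; minimizing over $x^*\in X$ yields $d(Z,U)\le\diam(X)+\diam(Y)+d(X,U)$. Hence $\tilde{d}(Z,U)\le 1+2\diam(X)+2\diam(Y)+d(X,U)$, and the elementary check $1+2a+2b+c\le 2(1+a+c)(1+b)$ (verified by expanding the right-hand side) finishes the derivation of the first inequality.

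Raising the first master inequality to the $k$-th power and the latter two to the $(2n+2)$-th power, and multiplying, produces
\begin{align*}
\tilde{d}(Z,U)^k\bigl[\tilde{d}(X,j)\tilde{d}(Y,j)\bigr]^{2n+2} \le 2^k\bigl[\tilde{d}(X,U)\tilde{d}(Y)\bigr]^{k+4n+4},
\end{align*}
where the identity $4n+4 = 2(2n+2)$ is the crucial exponent matching. Rearranging gives the pointwise bound
\begin{align*}
\tilde{d}(X,U)^{-k-4n-4}\tilde{d}(Y)^{-k-4n-4} \le 2^k\,\tilde{d}(Z,U)^{-k}\,\tilde{d}(X,j)^{-2n-2}\,\tilde{d}(Y,j)^{-2n-2}.
\end{align*}
Summing over the constrained pairs $(X,Y)$ and weakening to unconstrained sums makes the $X$- and $Y$-sums independent; Lemma~\ref{lem:brick-sum} applied to each factor with base point $j$ gives a product bound of $(\pi^4(n+1)^2/36)^2 = \pi^8(n+1)^4/1296$, and $2^k\le 3^k$ delivers the claimed inequality.

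The main obstacle is spotting the pointwise bound with the correct exponents; the summation step is routine once it is in place. The key observation is that an arbitrary interior point $j\in Z$ doubles as an anchor for \emph{both} bricks simultaneously, via $d(j,X)\le\diam(Y)$ and $d(j,Y)\le\diam(X)$, at the cost of the product $\tilde{d}(X,U)\tilde{d}(Y)$ appearing quadratically when bounding $\tilde{d}(X,j)\tilde{d}(Y,j)$. The doubled exponent $4n+4 = 2(2n+2)$ on the left is precisely calibrated to absorb this quadratic excess while leaving the convergent exponent $2n+2$ demanded by Lemma~\ref{lem:brick-sum}.
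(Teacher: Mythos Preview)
Your proof is correct and follows essentially the same approach as the paper: fix a point of $Z$, use Lemma~\ref{lem:brickjoin} to control $d(j,X)$ and $d(j,Y)$ by $\diam(Y)$ and $\diam(X)$, factor via $1+a+b\le(1+a)(1+b)$, and reduce to two independent applications of Lemma~\ref{lem:brick-sum}. Your single pointwise bound packages the paper's two-step argument (first peel off the $k$-dependent factor via $\tilde d(Z,U)\le 3\,\tilde d(X,U)\tilde d(Y)$, then handle the residual $-4n-4$ exponent) a bit more cleanly and in fact yields $2^k$ rather than $3^k$, which you then relax to match the stated constant.
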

\begin{proof}

Let $X,Y\in \BB_n$ with $X\cap Y \neq \varnothing$ and let $Z\subseteq X\vee Y$ be a brick. Pick an arbitrary point $w\in X\cap Y$. We have
\begin{align}
    d(Z,U) &\le d(Z,w) + d(w,U) \notag\\
    &\le d(Z,w) + \diam(X)  + d(X,U)\notag\\
    &\le \diam(X\vee Y) + \diam(X) + d(X,U)\notag\\
    &\le 2\diam(X) + \diam(Y) + d(X,U),
\end{align}
and so, since by Lemma \ref{lem:brickjoin} $\diam(Z) \le \diam(X\vee Y) \le \diam(X)+\diam(Y)$, we have
\begin{align}
    \tilde{d}(Z,U) &\le 1 + 3\diam(X) + 2\diam(Y) + d(X,U) \nonumber \\
    &\le 3(1+\diam(X) + d(X,U) + \diam(Y)) \nonumber \\
    &\le 3(1+\diam(X) + d(X,U))(1 + \diam(Y)) \nonumber \\
    &= 3\tilde{d}(X,U)\tilde{d}(Y)\label{eqn:claim1}.
\end{align}
It follows that for any $k\geq 0$ we have
\begin{align}
    \tilde{d}(Z,U)^{k}\tilde{d}(X,U)^{-k}\tilde{d}(Y)^{-k}&\le 3^k
\end{align}
and so
\begin{align}
    &\sum_{\substack{X,Y\in \BB_n \\ X\cap Y\neq \varnothing \\ Z\subseteq X\vee Y}}\tilde{d}(X,U)^{-k-4n-4}\tilde{d}(Y)^{-k-4n-4} \le \nonumber \\
    &\hspace{10mm} \le 3^k\tilde{d}(Z,U)^{-k}\sum_{\substack{X,Y\in \BB_n \\ X\cap Y\neq \varnothing \\ Z\subseteq X\vee Y}}\tilde{d}(X,U)^{-4n-4}\tilde{d}(Y)^{-4n-4} \label{eqn:prop-commutator1}
\end{align}
It remains to bound the sum on the right-hand side. Fix an arbitrary $z\in Z$ and let $X,Y \in \BB_n$ with $X\cap Y \neq \varnothing$ and $Z\subseteq X\vee Y$. Then by the second statement in Lemma \ref{lem:brickjoin} and the inequality $1+a+b\le (1+a)(1+b)$ for $a,b\ge 0$ we have
\begin{align}
    &(1+d(z,X) + \diam(X))(1+d(z,Y) + \diam(Y)) \le \nonumber \\
    &\hspace{15mm} \le (1+\diam(X))^2(1+\diam(Y))^2. \label{eqn:prop-commutator2}
\end{align}
Using (\ref{eqn:prop-commutator2}) we can bound the sum (\ref{eqn:prop-commutator1}) as follows:
\begin{multline}
    \sum_{\substack{X,Y\in \BB_n \\ X\cap Y\neq \varnothing \\ Z\subseteq X\vee Y}}\tilde{d}(X,U)^{-4n-4}\tilde{d}(Y)^{-4n-4} 
    \\\le \sum_{\substack{X,Y\in \BB_n \\ X\cap Y\neq \varnothing \\ Z\subseteq X\vee Y}}(1+\diam(X))^{-4n-4}(1+\diam(Y))^{-4n-4}\\
    \le \sum_{\substack{X,Y\in \BB_n}}(1+\diam(X)+d(X,z))^{-2n-2}(1+\diam(Y)+d(Y,z))^{-2n-2}\\
    \le \left( \sum_{\substack{X\in \BB_n}}(1+\diam(X)+d(X,z))^{-2n-2}\right)^2\le \frac{\pi^816^n(n+1)^4}{1296}.
\end{multline}
\end{proof}

Now we are ready to prove Proposition \ref{prop:commutator}.
\begin{proof}[Proof of Proposition \ref{prop:commutator}]
Recall $\tilde d(X,U):=1+\diam(X)+d(X,U)$ and $\tilde d(X):=1+\diam(X)$.

\medskip
Fix $Z\in\BB_n$. By Lemma \ref{lem:brick-commutator} we may restrict to pairs
$(X,Y)$ with $X\cap Y\neq\varnothing$ and $Z\subseteq X\vee Y$. Moreover, by Lemma \ref{lma:brick-component-norm}
\begin{align}
\|[\derF^X,\derG^Y]^Z\|\le 4^n\|[\derF^X,\derG^Y]\|\le 2^{2n+1}\|\derF^X\|\,\|\derG^Y\|.
\end{align}
Using the definition of $\|\cdot\|_{U,\cdot}$ and $\|\cdot\|_{V,\cdot}$ we have
\begin{align}
\|\derF^X\|&\le \|\derF\|_{U,k+4n+4}\,\tilde d(X,U)^{-k-4n-4}\\
\|\derG^Y\|&\le \|\derG\|_{V,k+4n+4}\,\tilde d(Y,V)^{-k-4n-4}.
\end{align}
Since $\tilde d(Y,V)=1+\diam(Y)+d(Y,V)\ge 1+\diam(Y)=\tilde d(Y)$, we also have
$\tilde d(Y,V)^{-k-4n-4}\le \tilde d(Y)^{-k-4n-4}$. Therefore
\begin{multline}
\sum_{X,Y\in\BB_n}\|[\derF^X,\derG^Y]^Z\|
=\sum_{\substack{X,Y\in\BB_n\\ X\cap Y\neq\varnothing\\ Z\subseteq X\vee Y}}
   \|[\derF^X,\derG^Y]^Z\| \\
\le 2^{2n+1}\|\derF\|_{U,k+4n+4}\|\derG\|_{V,k+4n+4}
\sum_{\substack{X,Y\in\BB_n\\ X\cap Y\neq\varnothing\\ Z\subseteq X\vee Y}}
\tilde d(X,U)^{-k-4n-4}\tilde d(Y)^{-k-4n-4}.
\end{multline}
By Lemma \ref{lem:doublebricksum}, the last sum is bounded by
\begin{align}
\frac{\pi^8 16^n (n+1)^4}{1296}\,3^k\,\tilde d(Z,U)^{-k}.
\end{align}
Hence
\begin{align}
\sum_{X,Y\in\BB_n}\|[\derF^X,\derG^Y]^Z\|
\le
\frac{\pi^8 2^{6n+1} (n+1)^4}{648}\,3^k\,
\|\derF\|_{U,k+4n+4}\|\derG\|_{V,k+4n+4}\,\tilde d(Z,U)^{-k}.
\end{align}
In particular, the defining series \eqref{eqn:commutator-def-sum} is absolutely convergent.
Moreover,
\begin{equation}
\|[\derF,\derG]^Z\|
\le \sum_{X,Y}\|[\derF^X,\derG^Y]^Z\|
\le
C\,3^k\,
\|\derF\|_{U,k+4n+4}\|\derG\|_{V,k+4n+4}\,\tilde d(Z,U)^{-k}
\end{equation}
with $C=\frac{\pi^8 2^{6n+1} (n+1)^4}{648}$.
Multiplying by $\tilde d(Z,U)^k$ and taking the supremum over $Z$ yields
\eqref{eqn:commutatorbound}. 

Bilinearity and antisymmetry of the bracket are immediate.
It remains to prove the Jacobi identity.
It suffices to do it on $\mfkDal(\RR^n)$, since $\mfkDal(U)\subseteq\mfkDal(\RR^n)$.
Let $\derF,\derG,\derH\in\mfkDal(\RR^n)$ and fix $W\in\BB_n$.
Expanding $[\derG,\derH]^Y$ by definition gives the formal identity
\begin{equation}\label{eqn:jacobi-expand}
[\derF,[\derG,\derH]]^W
=
\sum_{X,Y\in\BB_n}\sum_{X',Y'\in\BB_n}
\Bigl[\derF^X,\,[\derG^{X'},\derH^{Y'}]^Y\Bigr]^W.
\end{equation}
We now show that the quadruple series in \eqref{eqn:jacobi-expand} is absolutely convergent.
By Lemma \ref{lem:brick-commutator}, the term
$\bigl[\derF^X,[\derG^{X'},\derH^{Y'}]^Y\bigr]^W$ can be nonzero only if
\begin{align}
X\cap Y\neq\varnothing,\quad W\subseteq X\vee Y,
\qquad\text{and}\qquad
X'\cap Y'\neq\varnothing,\quad Y\subseteq X'\vee Y'.
\end{align}
Moreover, Lemma \ref{lma:brick-component-norm} implies
\begin{multline}
\Bigl\|\bigl[\derF^X,[\derG^{X'},\derH^{Y'}]^Y\bigr]^W\Bigr\|
\le
4^n\Bigl\|\bigl[\derF^X,[\derG^{X'},\derH^{Y'}]^Y\bigr]\Bigr\|
\le 2^{2n+1}\|\derF^X\|\,\|[\derG^{X'},\derH^{Y'}]^Y\|\\
\le 2^{4n+2}\|\derF^X\|\,\|\derG^{X'}\|\,\|\derH^{Y'}\|.
\end{multline}
Therefore
\begin{align}
&\sum_{X,Y}\sum_{X',Y'}
\Bigl\|\bigl[\derF^X,[\derG^{X'},\derH^{Y'}]^Y\bigr]^W\Bigr\|\notag\\
&\le
2^{4n+2}\sum_{\substack{X,Y\in\BB_n\\ X\cap Y\neq\varnothing\\ W\subseteq X\vee Y}}
\|\derF^X\|
\sum_{\substack{X',Y'\in\BB_n\\ X'\cap Y'\neq\varnothing\\ Y\subseteq X'\vee Y'}}
\|\derG^{X'}\|\,\|\derH^{Y'}\|.
\end{align}
Bound the inner sum using $\|\derG^{X'}\|\le \|\derG\|_{\RR^n,8n+8}\tilde d(X')^{-8n-8}$
and $\|\derH^{Y'}\|\le \|\derH\|_{\RR^n,8n+8}\tilde d(Y')^{-8n-8}$, and apply
Lemma \ref{lem:doublebricksum} with $U=\RR^n$ and $k=4n+4$ to obtain
\begin{equation}
\sum_{\substack{X',Y'\in\BB_n\\ X'\cap Y'\neq\varnothing\\ Y\subseteq X'\vee Y'}}
\|\derG^{X'}\|\,\|\derH^{Y'}\|
\le
C_1\,\|\derG\|_{\RR^n,8n+8}\|\derH\|_{\RR^n,8n+8}\,\tilde d(Y)^{-4n-4},
\end{equation}
for a constant $C_1$ depending only on $n$.
Plugging this into the outer sum and using
$\|\derF^X\|\le \|\derF\|_{\RR^n,4n+4}\tilde d(X)^{-4n-4}$ gives
\begin{multline}
\sum_{X,Y}\sum_{X',Y'}
\Bigl\|\bigl[\derF^X,[\derG^{X'},\derH^{Y'}]^Y\bigr]^W\Bigr\|
\le\\
C_2\|\derF\|_{\RR^n,4n+4}\|\derG\|_{\RR^n,8n+8}\|\derH\|_{\RR^n,8n+8}
\sum_{\substack{X,Y\in\BB_n\\ X\cap Y\neq\varnothing\\ W\subseteq X\vee Y}}
\tilde d(X)^{-4n-4}\tilde d(Y)^{-4n-4},
\end{multline}
and the remaining double sum is finite by Lemma \ref{lem:doublebricksum} with $U=\RR^n$ and $k=0$.
Hence the quadruple series \eqref{eqn:jacobi-expand} is absolutely convergent.

Absolute convergence justifies rearranging sums; in particular, since
\begin{equation}
    \sum_{Y\in\BB_n}[\derG^{X'},\derH^{Y'}]^Y=[\derG^{X'},\derH^{Y'}],
\end{equation} we obtain an absolutely
convergent triple expansion
\begin{equation}
    [\derF,[\derG,\derH]]^W=\sum_{X,Y,Z\in\BB_n}[\derF^X,[\derG^Y,\derH^Z]]^W,
\end{equation}
and similarly for the cyclic permutations.
For each fixed triple $(X,Y,Z)$, the ordinary commutator in the local algebra satisfies
the Jacobi identity:
\begin{equation} 
[\derF^X,[\derG^Y,\derH^Z]]+[\derG^Y,[\derH^Z,\derF^X]]+[\derH^Z,[\derF^X,\derG^Y]]=0.
\end{equation}
Taking the $W$-component and summing over $X,Y,Z$ 
gives the Jacobi identity for the bracket on $\mfkDal(\RR^n)$, hence on all $\mfkDal(U)$. This completes the proof.
\end{proof}

From Propositions \ref{prop:pullback} and \ref{prop:commutator} we immediately get
\begin{corollary}\label{cor:commutator}
    Suppose $U,V\subseteq \RR^n$.
    \begin{enumerate}[i)]
        \item $(\derF,\derG)\mapsto [\derF,\derG]$ is a jointly continuous bilinear map from $\mfkDal(U)\times\mfkDal(V)$ to $\mfkDal(U)\cap \mfkDal(V)$.
        \item If $U$ and $V$ are transverse, then this is a jointly continuous bilinear map from $\mfkDal(U)\times\mfkDal(V)$ to $\mfkDal(U\cap V)$.
    \end{enumerate}
\end{corollary}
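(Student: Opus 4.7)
The plan is that both parts are essentially bookkeeping combinations of Propositions \ref{prop:pullback} and \ref{prop:commutator}, so the proof should be short.

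For part $i)$, I would first apply Proposition \ref{prop:commutator} directly to the pair $(\derF,\derG)\in \mfkDal(U)\times \mfkDal(V)$ to conclude that $[\derF,\derG]\in \mfkDal(U)$ together with the estimate
\begin{equation*}
    \|[\derF,\derG]\|_{U,k} \le C 3^k \|\derF\|_{U,k+4n+4}\|\derG\|_{V,k+4n+4}.
\end{equation*}
Then I would use the antisymmetry $[\derF,\derG]=-[\derG,\derF]$ and apply Proposition \ref{prop:commutator} once more, this time to $(\derG,\derF)\in \mfkDal(V)\times \mfkDal(U)$, to deduce that $[\derF,\derG]\in \mfkDal(V)$ with a symmetric bound
\begin{equation*}
    \|[\derF,\derG]\|_{V,k} \le C 3^k \|\derF\|_{U,k+4n+4}\|\derG\|_{V,k+4n+4}.
\end{equation*}
Taken together, these two inequalities say exactly that the bracket lands in $\mfkDal(U)\cap \mfkDal(V)$ and is jointly continuous with respect to the topology of simultaneous convergence in both $\mfkDal(U)$ and $\mfkDal(V)$.

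For part $ii)$, I would invoke Proposition \ref{prop:pullback}: when $U$ and $V$ are $C$-transverse, the topological vector space $\mfkDal(U\cap V)$ coincides with the pullback $\mfkDal(U)\cap \mfkDal(V)$ equipped with the topology of simultaneous convergence. Hence the bilinear map constructed in $i)$ is automatically a jointly continuous map into $\mfkDal(U\cap V)$. If one wants an explicit seminorm estimate, one can combine the two inequalities above with the right-hand inequality in (\ref{eqn:pullback}) to obtain $\|[\derF,\derG]\|_{U\cap V,k}\le C(C+1)^k 3^k \|\derF\|_{U,k+4n+4}\|\derG\|_{V,k+4n+4}$.

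There is no real obstacle here, since the analytic content has been isolated in the two earlier propositions; the only thing to be careful about is not to conflate the abstract intersection $\mfkDal(U)\cap \mfkDal(V)$ (which always makes sense as a topological vector space) with $\mfkDal(U\cap V)$ (which in general is strictly larger — indeed may be zero even when the intersection is not) and to invoke transversality precisely at the step where this identification is needed.
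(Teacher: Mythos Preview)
Your argument is correct and matches the paper's own reasoning: the corollary is stated immediately after Propositions \ref{prop:pullback} and \ref{prop:commutator} with the words ``we immediately get,'' and your two applications of Proposition \ref{prop:commutator} (once directly, once with the roles of $U$ and $V$ swapped via antisymmetry) followed by the identification in Proposition \ref{prop:pullback} are exactly what is needed. One small slip in your closing parenthetical: $\mfkDal(U\cap V)$ is in general strictly \emph{smaller} than $\mfkDal(U)\cap\mfkDal(V)$, not larger (as your own example of disjoint bounded $U,V$ shows), but this does not affect the proof.
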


\begin{example}\label{ex:bracketferm}
Consider a system of fermionic particles as in Example \ref{ex:quadraticHam}. UAL Hamiltonians of the form (\ref{eq:quadraticHam}) form a Lie subalgebra of $\mfkDal(\RR^n)$. When expressed in terms of infinite matrices, the Lie bracket is the matrix commutator: $[\derF(h),\derF(h')]=\derF([h,h'])$. The derivation $\derF(h)$ belongs to $\mfkDal(U)$ if $h(x,y)$ decays rapidly when either $x$ or $y$ are far from $U$. That is,
\begin{align}
    \sup_{x,y\in\Lambda}\|h(x,y)\|(1+d(x,y)+d(x,U)+d(y,U))^{\ell}<\infty,\quad\forall \ell\ge 0. 
\end{align}
\end{example}

\subsubsection*{Almost-local observables and anchored derivations}
Earlier in this section, UAL derivations were defined as unbounded derivations with domain the space $\SAl$ of strictly local observables. This was made necessary by the fact that in general a UAL derivation cannot be continuously extended to the entire space $\SA$ of quasilocal observables. It is natural to ask whether the use of unbounded operators can be avoided by finding a suitable subalgebra of $\SA$ such that UAL derivations act as bona-fide derivations on this algebra. It is indeed possible \cite{LocalNoether}, and below we define the space $\SAal$ of almost-local observables, equip it with a Fr\'echet topology, and show that UAL derivations act as continuous derivations of $\SAal$.

Pick any $o\in \ZZ^n$ and for each $k\ge 0$ define the following norm on $\SAl$:
\begin{align}\label{eqn:SAal-norms}
    \|\mathcal{A}\|_{k}:= \|\mathcal{A}\| + \sup_{r \in\NN\cup\{0\}}(1+r)^k\|\mathcal{A}-\otr_{B_o(r)^c}(\mathcal{A})\|.
\end{align}
An elementary argument (see \cite{LocalNoether}) shows that the norms resulting from different choices of $o\in \ZZ^n$ are equivalent.\footnote{In \cite{LocalNoether} a slightly different set of norms was used: $o$ was taken to be a lattice site, while $r$ was allowed to be an non-negative real number. The norms we use in this paper are equivalent but more convenient since $B_o(r)$ is a brick for all allowed choices of $o$ and $r$.}
\begin{definition}
    Let $\SAal$ be the completion of $\SAl$ in the topology generated by the norms $\|\cdot\|_{k}$ for $k \ge 0$.
\end{definition}
This is the space of \textit{almost-local} observables. It is shown in \cite[Appendix B]{LocalNoether} that it is a subalgebra of $\SA$. It is dense because it contains the algebra $\SAl$ of strictly local operators.
\begin{prop}\label{prop:summable}
    Suppose $U\subseteq \RR^n$ is non-empty and bounded and let $\derF \in \mfkDal(U)$. Then the sum $\sum_{X\in \BB_n}\derF^X$
    is norm-absolutely convergent and defines an embedding
    \begin{align}
        \Sigma: \mfkDal(U)\hookrightarrow \SAal
    \end{align} whose image is the subspace of traceless even anti-hermitian elements of $\SAal$.
\end{prop}
To prove Proposition \ref{prop:summable} we introduce the following Lemma
\begin{lemma}\label{lem:ball-containment}
    Let $X\in \BB_n$, $o\in \RR^n$, and $r>0$.
    \begin{enumerate}[i)]
        \item $1+r > 1 + \diam(X) + d(o,X) \implies X\subseteq B_o(r)$
        \item $1+ r \le  \frac{1}{4}(1+ \diam(X) + d(o,X)) \implies X\nsubseteq B_o(r)$
    \end{enumerate}
\end{lemma}
\begin{proof}
    The first statement follows from the inequality $d(o,x)\le d(o,X)+\diam(X)$ for every $x\in X$. For the second, let $x_1,x_2\in \RR^n$ be two points with $d(x_1,x_2) = \diam(X)$. By the triangle inequality, we have $d(o,x_1) + d(o,x_2) \ge \diam(X)$ so without loss of generality $d(o,x_1) \ge \diam(X)/2$. Since $d(o,x_1)\ge d(o,X)$, we thus have
    \begin{align}
        d(o,x_1) &\ge \frac{1}{2}\left(\frac{\diam(X)}{2} + d(o,X)\right) \notag\\
        &> \frac{1}{4}(1+ \diam(X) + d(o,X))-1,
    \end{align}
    which proves \emph{ii)}
\end{proof}
We now prove Proposition \ref{prop:summable}:
\begin{proof}[Proof of Proposition \ref{prop:summable}]
    We can assume without loss of generality that $U = \{o\}$ for some $o\in \RR^n$. We have
    \begin{align}
        \sum_{Y\in \BB_n}\|\derF^Y\| &\le \|\derF\|_{\{o\},2n+2}\sum_{Y\in\BB_n}(1+\diam(Y)+ d(Y,o))^{-2n-2},
    \end{align}
    so Lemma \ref{lem:brick-sum} the above sum is finite. To show that it is injective, notice that for any $X\in \BB_n$ we have $\pi^X(\Sigma(\derF)) = \derF^X$,
    and so $\Sigma(\derF)=0$ implies $\derF^X = 0$ for all $X\in\BB_n$.
    To show that $\Sigma$ is an embedding, we will show that for any $k\ge 0$ there are constants $C_k,C_k'>0$ such that
    \begin{align}
        \|\Sigma(\derF)\|_{k} \le C_k\|\derF\|_{\{o\},k+2n+2}\label{eqn:norm-equivalence-sigma1}
    \end{align}
    and 
    \begin{align}
        \|\derF\|_{\{o\},k} \le C_k'\|\Sigma(\derF)\|_{k}. \label{eqn:norm-equivalence-sigma2}
    \end{align}
    From the equality
    \begin{align}
        \Sigma(\derF) - \otr_{B_o(r)^c}(\Sigma(\derF)) = \sum_{X\nsubseteq B_o(r)}\derF^X, \label{eqn:partial-trace-expansion}
    \end{align}
    we get for any $r>0$ and any $k\ge 0$:
    \begin{align}
        &(1+r)^{k}\|\Sigma(\derF) - \otr_{B_o(r)^c}(\Sigma(\derF)\| \notag\\
        \le &(1+r)^k\sum_{X\nsubseteq B_o(r)}\|\derF^X\|\notag\\
        \le &\|\derF\|_{\{0\}, k+2n+2}(1+r)^k\sum_{X\nsubseteq B_o(r)}(1+\diam(X)+d(o,X))^{-k-2n-2}\notag\\
        \le &\|\derF\|_{\{o\}, k+2n+2}\sum_{X\nsubseteq B_o(r)}(1+\diam(X)+d(o,X))^{-2n-2}\notag\\
        \le &C_k\|\derF\|_{\{o\}, k+2n+2}
    \end{align}
    where we used Lemma \ref{lem:ball-containment} \emph{i)} in the second-last line and Lemma \ref{lem:brick-sum} in the last line.
    For the second inequality, using (\ref{eqn:partial-trace-expansion}) and Lemma \ref{lma:brick-component-norm} we have
    \begin{align}
        \|\derF^X\| &= \|\pi^X(\Sigma(\derF) - \otr_{B_o(r)^c}(\Sigma(\derF))\| 
        \le 4^n\|\Sigma(\derF) - \otr_{B_o(r)^c}(\Sigma(\derF)\|\label{eqn:r-ineq}
    \end{align}
    whenever $X$ is not contained in $B_o(r)$. Lemma \ref{lem:ball-containment} \emph{ii)} guarantees this for $1+ r = \frac{1}{4}(1 + \diam(X) + d(o,X))$. Thus (\ref{eqn:r-ineq}) gives
    \begin{align}
        (1+\diam(X)+d(o,X))^k\|\derF^X\| &\le 4^{n}(1+\diam(X)+d(o,X))^k\|\Sigma(\derF) - \otr_{B_o(r)^c}(\Sigma(\derF)\|\notag\\
        &= 4^{n+k}(1+r)^k\|\Sigma(\derF) - \otr_{B_o(r)^c}(\Sigma(\derF)\| \notag\\
        &\le 4^{n+k}\|\Sigma(\derF)\|_{k},
    \end{align}
    which proves (\ref{eqn:norm-equivalence-sigma2}).
    Finally, the last statement of the Proposition follows from the decomposition $\mathcal{A} = \sum_{X}\pi^X(\mathcal{A})$ for any traceless, even, anti-hermitian element $\mathcal{A}\in \SAal$.
\end{proof}

Proposition \ref{prop:summable} shows that $\mfkDal(U)$ for a bounded set $U$ coincides with the set of inner derivations of $\SAal$:
\begin{definition}\label{def:anchor}
    For $\derF \in \mfkDal(\RR^n)$, the following are equivalent:
    \begin{enumerate}[i)]
        \item $\derF\in \mfkDal(U)$ for a bounded set $U$
        \item There is a $\mathcal{A}\in \SAal$ such that $\derF(\mathcal{B}) = [\mathcal{A},\mathcal{B}]$ for every $\mathcal{B}\in \SAal$
    \end{enumerate}
    A derivation satisfying this condition is called anchored.
\end{definition}

\subsection{Automorphisms}\label{sec:LGAs}

In this section we recall certain automorphisms obtained by exponentiating elements of $\mfkDal(\RR^n)$ following \cite{LocalNoether}. One can develop the theory of such automorphisms that are almost-localized on regions in $\RR^n$ in a similar spirit to the above, but since we do not have much occasion to use them in this work, we opt instead for a more minimal development.
Let $\derF: \RR \to \mfkDal(\RR^n)$ be a smooth map. It is shown in \cite{LocalNoether} that for any $\CA \in \SAal$ the differential equation
\begin{align}
    \frac{d}{dt}\CA(t) = \derF(t)(\CA(t))
\end{align}
with the initial condition $\CA(0)=\CA$ has a unique solution $\CA(t)\in \SAal$ for all $t\in \RR$. While Ref. \cite{LocalNoether} only considered the case when all $V_x$ are even, the proof applies word for word in the $\ZZ_2$-graded case. Denote by $\alpha^\derF_t$ the map that takes $\CA$ to $\CA(t)$. It is a one-parameter family of continuous $*$-automorphisms of the algebra $\SAal$. By definition, for every $\CA\in\SAal$ it satisfies the  differential equation
\begin{align}\label{eq:Pexp}
    \frac{d}{dt}\alpha_t^\derF(\CA)=\alpha_t^\derF(\derF(t)(\CA)),
\end{align}
as well as the initial condition $\alpha_0^\derF={\rm id}$. 
\begin{definition}
    We call any automorphism of the form  $\alpha^\derF_1$ for some smooth map $\derF:[0,1]\ra\mfkDal(\RR^n)$ a locally-generated automorphism, or LGA for short.
\end{definition}
It is shown in \cite{LocalNoether}, Section 3.5, that the set of LGAs forms a group under composition. 

There is a natural notion of a smooth path between two LGAs $\alpha_0$ and $\alpha_1$. Namely, a smooth path of LGAs is a 1-parameter family of automorphisms $\alpha_t^\derF$, $t\in [0,1]$ and a smooth path of UAL derivations $\derF:[0,1]\ra\mfkDal$ which satisfy the equation (\ref{eq:Pexp}) for all $\CA\in\SAal$ and the conditions $\alpha_0^\derF=\alpha_0$, $\alpha_1^\derF=\alpha_1$. By the results of Section 3.5 of \cite{LocalNoether}, there is a smooth path between any two LGAs.

\subsection{States}
By a state $\psi$ of a quantum lattice system we will mean a state of the quasilocal algebra $\SA$ which vanishes on $\SA^{odd}$. If $\derF$ is an anchored  derivation (see Def. \ref{def:anchor}), we define $\psi(\derF) := \psi(\Sigma(\derF))$.
The group of LGAs acts on states by pre-composition, which we denote $\psi^\alpha := \psi \circ \alpha$. We say an LGA $\alpha$ preserves a state $\psi$ if $\psi^\alpha=\psi$. We say an element $\derF\in \mfkDal(\RR^n)$ preserves $\psi$ if $\psi(\derF(\CA))=0$ for any $\CA\in \SAal^{even}$, which is equivalent to the one-parameter group of automorphisms $t\mapsto \alpha_t^\derF$ corresponding to a constant map $t\mapsto\derF$ preserving $\psi$.
\begin{definition}
   For any $U\subseteq \RR^n$ define $\mfkDpal(U)$ as the set of all elements of $\mfkDal(U)$ that preserve $\psi$.
\end{definition}
It is easy to check that $\mfkDpal(U)$ is a closed subset of $\mfkDal(U)$, and that if $\derF$ and $\derG$ preserve $\psi$, then $[\derF,\derG]$ preserves $\psi$. Thus the analog of Propositions \ref{prop:pullback} and \ref{prop:commutator} and Corollary \ref{cor:commutator} hold for the spaces $\mfkDpal(U)$. Proposition \ref{prop:SES}, on the other hand, need not hold for the spaces $\mfkDpal(U)$ for a general state $\psi$. 

The following definition made in \cite{LocalNoether} is not essential for this paper, but is helpful to keep in mind.
\begin{definition}\label{def:smoothpathofstates}
    A smooth path of states is a 1-parameter family of states of the form $\psi_u=\omega\circ\alpha(u)$, where $\omega$ is a state and  $u\mapsto\alpha(u)$ is a smooth path of LGAs.
\end{definition}

In this paper, we will restrict our attention to gapped states, where quasiadiabatic evolution \cite{hastingshigher,kitaev2006anyons,osborne2007simulating} can be used to prove the analog of Proposition \ref{prop:SES}.
\begin{definition}
    A state $\psi$ is gapped if there exists $\derH\in \mfkDal(\RR^n)$ and $\Delta>0$ such that for any $\CA\in\SAl$  one has
    \begin{align}\label{eq:gapped}
        -i\psi(\CA^* \derH(\CA))\geq \Delta \left(\psi(\CA^*\CA)-\psi(\CA^*)\psi(\CA)\right).
    \end{align}
\end{definition} 
\begin{remark}
    The meaning of this condition becomes more transparent if one recalls that any $\derH\in\mfkDal$ is a generator of a one-parameter group of $*$-automorphisms of $\SA$. The condition (\ref{eq:gapped}) implies that $\psi$ is invariant under this one-parameter group of automorphisms, see \cite{BR2}, Section 5.3.3. Moreover, the corresponding one-parameter group of unitaries in the GNS representation of $\SA$ has a generator whose spectrum is contained in $\{0\}\cup[\Delta,+\infty)$. The interval $(0, \Delta)$ is called the spectral gap. The condition (\ref{eq:gapped}) also implies that $\psi$ is pure \cite{KapSopLSM}.
\end{remark}

\begin{remark}\label{rem:smoothfamilyofgapped}
In the definition of a smooth family of states, if $\omega$ is gapped, so is $\psi_u$ for every $u\in [0,1]$. Indeed, if $\omega$ is gapped with respect to $\derH\in\mfkDal$, then $\psi_u$ is gapped with respect to $\alpha(u)^{-1}(\derH)\in\mfkDal$, for the same value of $\Delta>0$. 
\end{remark}
\begin{remark}\label{rem:LGAsandphases}
    In \cite{LocalNoether} it was proposed to define a gapped phase as an orbit of a gapped state under the action of the group of LGAs. By virtue of Definition \ref{def:smoothpathofstates} and Remark \ref{rem:smoothfamilyofgapped}, two gapped states of a quantum lattice system are in the same phase iff they can be connected by a smooth path of states.  
\end{remark}
\begin{remark}
    The canonical state $\otr$ is not gapped for any $\derH\in\mfkDal(\RR^n)$. Indeed, let $\CA\in\SAl$. Using the fact that $\otr$ is a tracial state, one can easily check that the left-hand side of (\ref{eq:gapped}) is replaced with its opposite when $\CA$ is replaced with $\CA^*$. 
\end{remark}
\begin{example}\label{ex:Fockstate}
    Consider a system of fermionic particles  (Example \ref{ex:u1generatorferm}) with all $H_x$ isomorphic to $H$. Let $P:\Lambda\times\Lambda\ra\Hom_\CC(H,H)$ be a function parameterizing a UAL Hamiltonian as in Example \ref{ex:quadraticHam}. Let $\derF(P)$ be the corresponding UAL Hamiltonian. Suppose that $P$ is a projector when regarded as an operator on $\ell^2(\Lambda,H)$. For any $f\in\ell^2(\Lambda,H)$ let $c(f)=\sum_{x,\alpha} f_\alpha(x) c_{\alpha,x}$. There is a unique state $\psi$ of the CAR algebra satisfying 
    \begin{align}
        \psi(c(f)^*c(f))&=0,\quad\forall f\in {\rm im}\,P,\\
        \psi(c(f)c(f)^*)&=0,\quad\forall f\in {\rm im}\,(1-P).
    \end{align}
    This state of the CAR algebra is known as a Fock state, see \cite{BR2} for details. One can show that $\psi$ is a gapped state with gap $\Delta=1$ for the Hamiltonian  $\derH=\derF(P)$. One can show that a UAL derivation of the form (\ref{eq:quadraticHam}) belongs to $\mfkDpal$ if and only if the corresponding operator $h$ commutes with $P$.
\end{example}

In Appendix \ref{appendix:LGA} we prove the following.
\begin{prop}\label{prop:quasiadiabatic}
    Suppose $\psi$ is gapped, and the corresponding Hamiltonian is $\derH$. Then there are linear functions 
    \begin{align}
        \mathcal{J}&:\mfkDal(\RR^n) \to \mfkDpal(\RR^n)\\
        \mathcal{K}&:\mfkDal(\RR^n)\to \mfkDal(\RR^n)
    \end{align}such that
    \begin{enumerate}[i)]
        \item If $\derF$ preserves $\psi$ then $\mathcal{K}(\derF)$ preserves $\psi$.
        \item For every $k>0$, $U\subseteq \RR^n$, and $\derF\in \mfkDal(U)$ we have
        \begin{align}
            \|\mathcal{J}(\derF)\|_{U,k} &\le C_k \|\derF\|_{U,k+4n+3}\\
            \|\mathcal{K}(\derF)\|_{U,k} &\le C'_k \|\derF\|_{U,k+4n+3}
        \end{align}
        for some constants $C_k,C'_k$ depending only on $k,n, \derH$, and $\Delta$.
         \item For every $\derF\in \mfkDal(\RR^n)$ we have
         \begin{align}
             \derF = \mathcal{J}(\derF) - \mathcal{K}([\derH, \derF]).
         \end{align}
    \end{enumerate}
\end{prop}
Using the above Proposition we will prove the analog of Proposition \ref{prop:SES} for the spaces $\mfkDpal(U)$.
\begin{prop}\label{prop:SES-psi}
    For any $U,V\subseteq \RR^n$ consider the sequence
    \begin{align}
    \mfkDpal(U \cap V) \xrightarrow{\alpha} \mfkDpal(U)\oplus \mfkDpal(V) \xrightarrow{\beta} \mfkDpal(U \cup V) \to 0,\label{eqn:SAal-SES2}
    \end{align}
    where $\alpha(\derF) = (\derF,-\derF)$ and $\beta(\derF,\derG) = \derF + \derG$.
    \begin{enumerate}[i)]
    \item If $\psi$ is gapped then the sequence (\ref{eqn:SAal-SES2}) admits a right splitting, and in particular it is exact on the right
    \item If $U$ and $V$ are transverse, then (\ref{eqn:SAal-SES2}) is exact on the left.
    \end{enumerate}
\end{prop}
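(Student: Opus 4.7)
Part (ii) is the easy half and I would dispatch it first. Injectivity of $\alpha$ is immediate from $\alpha(\derF)=(\derF,-\derF)$. For exactness at the middle, the paper notes (just before stating the proposition) that the analog of Prop \ref{prop:pullback} carries over to $\mfkDal^\psi$, so if $(\derF_U,\derF_V)\in\ker\beta$ then $\derF_U=-\derF_V\in \mfkDal^\psi(U)\cap\mfkDal^\psi(V)=\mfkDal^\psi(U\cap V)$ by transversality, placing $(\derF_U,\derF_V)$ in the image of $\alpha$.

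For part (i) the plan is to use the quasiadiabatic tools of Prop \ref{lem:quasiadiabatic} to lift the right-splitting of Prop \ref{prop:SES} into one valued in $\mfkDal^\psi$. Given $\derF\in\mfkDal^\psi(U\cup V)$, I would first split $\derF=\gamma_1(\derF)+\gamma_2(\derF)$ with $\gamma_1(\derF)\in\mfkDal(U)$, $\gamma_2(\derF)\in\mfkDal(V)$ via the right-splitting constructed in the proof of Prop \ref{prop:SES}; generically neither piece preserves $\psi$. Applying the linear operator $\mathcal{J}$ promotes each to a $\psi$-preserving derivation $\mathcal{J}(\gamma_1(\derF))\in\mfkDal^\psi(U)$, $\mathcal{J}(\gamma_2(\derF))\in\mfkDal^\psi(V)$ (the almost-localization being preserved by the seminorm bounds in (ii) of Prop \ref{lem:quasiadiabatic}). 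By linearity and the key identity (iii), however, they sum to $\mathcal{J}(\derF)=\derF+\mathcal{K}([\derH,\derF])$, overshooting the target by an error $\derF^{(1)}:=\mathcal{K}([\derH,\derF])$. This error is itself in $\mfkDal^\psi(U\cup V)$: $\psi$-preserving derivations form a Lie subalgebra (a short Jacobi computation), so $[\derH,\derF]$ preserves $\psi$, and property (i) of Prop \ref{lem:quasiadiabatic} then gives $\derF^{(1)}\in\mfkDal^\psi$.

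The natural move is to iterate. Setting $\derF^{(0)}:=\derF$ and $\derF^{(k+1)}:=\mathcal{K}([\derH,\derF^{(k)}])$ inductively keeps $\derF^{(k)}\in\mfkDal^\psi(U\cup V)$, and the telescoping identity $\derF^{(k)}=\mathcal{J}(\gamma_1(\derF^{(k)}))+\mathcal{J}(\gamma_2(\derF^{(k)}))-\derF^{(k+1)}$, summed with alternating signs, formally yields the candidate right-splitting
\[
\gamma^\psi_i(\derF):=\sum_{k=0}^\infty(-1)^k\mathcal{J}\bigl(\gamma_i(\derF^{(k)})\bigr),\qquad i=1,2,
\]
with $\gamma^\psi_1(\derF)\in\mfkDal^\psi(U)$, $\gamma^\psi_2(\derF)\in\mfkDal^\psi(V)$ and $\gamma^\psi_1(\derF)+\gamma^\psi_2(\derF)=\derF$, continuity being inherited from $\mathcal{J}$ and $\gamma$ once convergence is established.

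The hard part will be justifying convergence of this alternating series in the Fréchet topology of $\mfkDal$. The bounds of Prop \ref{lem:quasiadiabatic} (ii) lose only $O(n)$ seminorm orders per application of $\mathcal{K}$ or $[\derH,\cdot]$, but they give no a priori decay of the iterate $\derF^{(k)}$, so a naive termwise estimate does not close. Convergence will have to be extracted from finer quantitative properties of $\mathcal{J}$ and $\mathcal{K}$ specific to the gapped regime—plausibly the sub-exponential decay of the quasiadiabatic kernel arising from the spectral gap $\Delta>0$—that are established in Appendix \ref{appendix:LGA} but not visible in the summary statement of Prop \ref{lem:quasiadiabatic}. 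This is the technical heart of the argument.
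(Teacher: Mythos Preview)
Your treatment of part (ii) is correct and matches the paper.

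For part (i) there is a genuine gap. You correctly compute that $\mathcal{J}(\gamma_1(\derF)) + \mathcal{J}(\gamma_2(\derF)) = \derF + \mathcal{K}([\derH,\derF])$ and propose to iterate on the error, but you then concede that convergence of the resulting series is ``the hard part'' and appeal to unspecified finer properties from the appendix. Nothing in the appendix supplies such decay: the map $\derF \mapsto \mathcal{K}([\derH,\derF])$ has no reason to be contractive in any of the seminorms $\|\cdot\|_{U\cup V,k}$, and the paper proves no estimate of that kind. The ``technical heart'' you anticipate simply is not there.

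The paper avoids iteration altogether by a different idea: rather than repeatedly correcting $\derF$, it splits the \emph{Hamiltonian} $\derH$ once. Introduce the Voronoi-type regions $U' = \{x: d(x,U)\le d(x,V)\}$ and $V' = \{x: d(x,V)\le d(x,U)\}$. A short geometric lemma (Lemma~\ref{lem:voronoi-stable}) shows that $U'$ and $U\cup V$ are transverse with $U'\cap(U\cup V)=U$. Now apply the splitting $\gamma^{U',V'}$ of Prop.~\ref{prop:SES} to $\derH$: then $\mathcal{J}(\gamma_1^{U',V'}(\derH))\in\mfkDal^\psi(U')$, and since $\derF\in\mfkDal^\psi(U\cup V)$, Property~I for transverse pairs (Corollary~\ref{cor:commutator}(ii)) forces the commutator $[\mathcal{J}(\gamma_1^{U',V'}(\derH)),\derF]$ into $\mfkDal^\psi(U)$; applying $\mathcal{K}$ preserves both $\psi$-invariance and localization by Prop.~\ref{lem:quasiadiabatic}(i),(ii). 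The paper's splitting is the single explicit formula
\[
\tilde{\gamma}_i(\derF) \;:=\; \mathcal{J}\bigl(\gamma_i^{U,V}(\derF)\bigr) \;-\; \mathcal{K}\bigl([\,\mathcal{J}(\gamma_i^{U',V'}(\derH)),\,\derF\,]\bigr),
\]
with no series to sum. Summing over $i=1,2$ and using $\mathcal{J}(\derH)=\derH$ together with Prop.~\ref{lem:quasiadiabatic}(iii) gives $\tilde{\gamma}_1(\derF)+\tilde{\gamma}_2(\derF)=\derF$. The missing idea in your approach is this geometric one: split $\derH$ along Voronoi regions chosen so that Property~I does the localization in one step.
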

\begin{remark}
    This Proposition is the only place where we use the existence of the Hamiltonian with respect to which $\psi$ is gapped. In particular, the precise form of this Hamiltonian does not matter, provided it is in $\mfkDal(\RR^n)$. 
\end{remark}
To prove Proposition \ref{prop:SES-psi} we will need the following geometric result
\begin{lemma}\label{lem:voronoi-stable}
    Let $U,V \subseteq \RR^n$ be closed and define $U':= \{x\in \RR^n: d(x,U)\le d(x,V)\}$. Then $U'$ and $U\cup V$ are transverse and their intersection is $U$.
\end{lemma}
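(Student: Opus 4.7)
The plan is first to verify the set-theoretic identity $U' \cap (U \cup V) = U$, and then to establish the transversality bound by extracting a lower estimate on $d(x, U')$ from the triangle inequality. The identity is the easier half: the inclusion $U \subseteq U' \cap (U \cup V)$ is immediate because $x \in U$ gives $d(x,U) = 0 \le d(x,V)$, while for the converse a point $x \in U' \cap (U \cup V)$ not already in $U$ must lie in $V$, whence $d(x,U) \le d(x,V) = 0$, forcing $x \in \overline{U}$. Under the mild reading that $U$ is closed (equivalently, identifying $U$ with $\overline{U}$, which is harmless since transversality depends only on distance-to-set functions), this yields the identity.

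For transversality I would fix $x \in \RR^n$ and abbreviate $r_U := d(x,U)$ and $r_V := d(x,V)$. Since $U' \cap (U \cup V) = U$ and $d(x, U \cup V) = \min(r_U, r_V)$, the required inequality becomes $r_U \le C\,\max\bigl(d(x,U'),\,\min(r_U, r_V)\bigr)$ for some constant $C$ independent of $x$. The case $x \in U'$ is trivial: then $r_U \le r_V$, so $\min(r_U, r_V) = r_U$ and $C = 1$ suffices. The content of the lemma lies entirely in the opposite case $r_U > r_V$, which is where I expect the only real work.

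The main obstacle, and the one step that requires an actual argument, is a lower bound on $d(x, U')$ in that case. I would argue as follows: for any $y \in U'$, the triangle inequality applied to distance-to-set functions gives $d(y, U) \ge r_U - d(x, y)$ and $d(y, V) \le r_V + d(x, y)$; combining these with the defining inequality $d(y, U) \le d(y, V)$ of $U'$ yields $r_U - d(x, y) \le r_V + d(x, y)$, so $d(x, y) \ge (r_U - r_V)/2$. Taking the infimum over $y \in U'$ gives $d(x, U') \ge (r_U - r_V)/2$, whence $r_U \le 2\, d(x, U') + r_V \le 3\max\bigl(d(x, U'), r_V\bigr)$. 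The lemma then follows with transversality constant $C = 3$, and the rest of the argument is bookkeeping around the two cases.
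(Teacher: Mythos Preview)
Your argument is correct and follows the same overall strategy as the paper: reduce to the case $d(x,U)>d(x,V)$ and use the $1$-Lipschitz property of distance-to-set functions to bound $d(x,U')$ from below. The executions differ in one detail. The paper picks a specific $y\in U'$ realizing $d(x,U')$, observes that such $y$ lies on $\partial U'$ so that $d(y,U)=d(y,V)$, and from this derives $d(x,U)\le 2d(x,U')+d(x,V)$; combined with the case split at $d(x,U)\le 2d(x,V)$ this yields $C=4$. You instead use the defining inequality $d(y,U)\le d(y,V)$ for an arbitrary $y\in U'$, obtain $d(x,y)\ge (r_U-r_V)/2$ directly, and hence $C=3$. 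Your route avoids the (mild) appeal to existence of a minimizer and to the boundary characterization, and gives a slightly sharper constant. You are also right to flag that the set identity $U'\cap(U\cup V)=U$ needs $U$ closed; the paper applies the lemma only to closed semilinear sets, so this is harmless in context.
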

\begin{proof}
    It is easy to check that $U'\cap (U\cup V) = U$. To prove transversality we will show
    \begin{align}
        d(x,U) \le 4\max(d(x,U'),d(x,U\cup V)) \label{eqn:voronoi-stable}
    \end{align}
    for every $x\in \RR^n$. Suppose first that $d(x,U)\le 2d(x,V)$. Then
    \begin{align}
        d(x,U) &\le 2\min(d(x,U), d(x,V))\notag\\
        &= 2d(x,U\cup V)
    \end{align}
    which implies (\ref{eqn:voronoi-stable}). Suppose instead that $d(x,U) > 2d(x,V)$, and let $y\in U'$ satisfy $d(x,y) = d(x,U')$. Notice $x\notin U'$ and so $y$ lies in the boundary of $U'$, which implies $d(y,U)=d(y,V)$. Thus we have
    \begin{align}
        d(x,U) &\le d(x,y) + d(y,U)\notag\\
        &= d(x,y) + d(y,V) \notag\\
        &\le 2d(x,y) + d(x,V)
    \end{align}
    where in the first and third lines we used the triangle inequality. Using $d(x,y)=d(x,U')$ and $d(x,V)< d(x,U)/2$, this gives $d(x,U)< 4d(x,U')$, which implies (\ref{eqn:voronoi-stable}).
\end{proof}

\begin{proof}[Proof of Proposition \ref{prop:SES-psi}]
    The proof of Proposition \ref{prop:SES} goes through unmodified except for the definition of $\gamma$, which needs to be changed to ensure that the image of $\gamma$ consists of derivations that preserve $\psi$. Suppose $U,V\subseteq\RR^n$ and $\derF \in \mfkDpal(U\cup V)$. Without loss of generality we may assume that $U$ and $V$ are closed. Define
    \begin{align}
        U' &:= \{x\in \RR^n: d(x,U)\le d(x,V)\},\\
        V' &:= \{x\in \RR^n: d(x,V)\le d(x,U)\}.
    \end{align}
    Let $\gamma^{U,V}$ (resp. $\gamma^{U',V'}$) be the splitting from Proposition \ref{prop:SES} with the sets $U$ and $V$ (resp. $U'$ and $V'$). Define $\tilde{\gamma} = (\tilde{\gamma}_1, \tilde{\gamma}_2)$ as
    \begin{align}
        \tilde{\gamma}_i(\derF) &:= \mathcal{J}(\gamma_i^{U,V}(\derF)) - \mathcal{K}([\mathcal{J}(\gamma_i^{U',V'}(\derH)),\derF])
    \end{align}
    for $i=1,2$. Using Prop. \ref{prop:quasiadiabatic} and Lemma \ref{lem:voronoi-stable} and the fact that the commutator of two derivations that preserve $\psi$ preserves $\psi$, one checks that $\tilde{\gamma}$ takes $\mfkDpal(U\cup V)$ to $\mfkDpal(U)\oplus \mfkDpal(V)$.
    Using Prop. \ref{prop:quasiadiabatic} $iii)$ and the fact that $\gamma_1^{U,V}(\derF)+\gamma_2^{U,V}(\derF) = \derF$ and $\gamma_1^{U',V'}(\derH)+\gamma_2^{U',V'}(\derH) = \derH$, we get $\tilde{\gamma}_1(\derF) + \tilde{\gamma}_2(\derF) = \derF$, as desired.
\end{proof}

\begin{example}
    Consider a Fock state $\omega$ of a CAR algebra, Example \ref{ex:Fockstate}, corresponding to a projector $P$ on $\ell^2(\Lambda,H)$. Let $U_i$, $i\in I$, be a finite set of mutually transverse closed sets such that $\cup_i U_i=\RR^n$. This gives a covering of $\RR^n$. Let $\derQ$ be the generator of the canonical $U(1)$ action on the CAR algebra, Example \ref{ex:u1generatorferm}. Since $\omega$ is $U(1)$-invariant, $\derQ\in\mfkDal^\omega(\RR^n)$, hence by Prop. \ref{prop:SES-psi} there exist $\derQ_i\in\mfkDal^\omega(U_i)$ such that $\sum_i \derQ_i=\derQ$. 
    In this special case one can exhibit such $\derQ_i$ explicitly. For every $i\in I$, let $\chi_i:\Lambda\ra \{0,1\}$ be the indicator function of $U_i$. For every $x\in\Lambda$ let $n_x=\sum_{i\in I} \chi_i(x)$. We define the functions $f_i:\Lambda\ra\RR$, $i\in I$, by 
    \begin{align}\label{eq:fi}
    f_i(x)=\frac{\chi_i(x)}{n_x},\quad\forall x\in\Lambda,
    \end{align}
    so that $\sum_i f_i=1$. We can regard each $f_i$ as a bounded operator on $\ell^2(\Lambda,H)$.
    Now we let $\derQ_i$ be the derivation $\derF(h_i)$ (see Example \ref{ex:quadraticHam}) corresponding to the following operators $h_i$ on $\ell^2(\Lambda,H)$:
    \begin{align}\label{eq:hi}
        h_i=P f_i P+(1-P) f_i (1-P).
    \end{align}
    It is easy to see that $\sum_{i\in I} h_i=1,$ hence $\sum_{i\in I}\derF(h_i)=\derF(1)=\derQ$. One can also check that $\derQ_i\in\mfkDal^\omega(U_i)$. 
\end{example}

\section{Local Lie systems}\label{sec:locality}

In this section we define the central notion of this paper: local Lie systems over a site. Physically, this notion formalizes what physicists call gauge Lie algebras. Mathematically, a local Lie system is an  algebraic object that encodes the compatibility conditions between the spaces $\mfkDal(U)$ and $\mfkDpal(U)$ attached to regions $U \subseteq \mathbb{R}^n$. That is, local Lie systems are an axiomatization of Lemma \ref{lma:seminorm-dominated}, Proposition \ref{prop:SES} (resp. Proposition \ref{prop:SES-psi}), and Corollary \ref{cor:commutator} for the spaces $\mfkDal(U)$ (resp. $\mfkDpal(U)$). Our axiomatization involves a pre-cosheaf of Lie algebras over a suitable category of subsets of the space. The space can be either a discrete metric space, as in this paper, or a manifold, as in field theory. A basic construction which applies to any pre-cosheaf of vector spaces is the \Cech\ functor. We show in subsection \ref{sec:DGLA-attached-to-lls} that in the case of local Lie systems it produces a Differential Graded Lie algebra (DGLA). This DGLA will be used later on to construct topological invariants of gapped states.

\subsection{Locality and (pre-)cosheaves}

Let $M$ be a manifold and $Open(M)$ be the category whose objects are open subsets of $M$, and the set of morphisms from an open $U$ to an open $V$ is the singleton or the empty set depending on whether $U\subseteq V$ or $U\nsubseteq V$. Composition of morphisms is uniquely defined. A pre-cosheaf $\mfkF$ on $M$ with values in a category $\cC$ is a functor $\mfkF:Open(M)\ra\cC$. 
Thus for every inclusion of opens $U\subseteq V$ one is given a co-restriction morphism $e_{VU}:\mfkF(U)\ra \mfkF(V)$ such that for any three opens $U\subseteq V\subseteq W$ one has $e_{WV}\circ e_{VU}=e_{WU}$.
Pre-cosheaves (as well as pre-sheaves, which are functors from the opposite category of $Open(M)$ to $\cC$) can be used to describe local data on $M$. This form of locality is rather weak, since it does not require $\mfkF(U\cup V)$ to be expressible through $\mfkF(U)$ and $\mfkF(V)$.

As an example, consider the Lie algebra of gauge transformations, i.e. the Lie algebra $\mfkG(M):=C^\infty(M,\mfkg)$ of smooth functions on $M$ with values in a finite-dimensional Lie algebra $\mfkg$. It is a global object attached to $M$. To ``localize'' it, for any open $U\subseteq M$ we define the Lie algebra  $\mfkG(U)$ to be the space of smooth $\mfkg$-valued functions on $M$ whose closed support is contained in $U$. In particular, $\mfkG(\varnothing)=0$. For any inclusion of opens $U\subseteq V$ we have a homomorphism of Lie algebras $\iota_{VU}:\mfkG(U)\ra \mfkG(V)$ such that the Lie algebras $\mfkG(U)$ assemble into a pre-cosheaf $\mfkG$ of Lie algebras on $M$. This is a \textit{coflasque} pre-cosheaf, i.e. all its structure maps $\iota_{VU}$ are injective.\footnote{The terminology comes from sheaf theory, where a pre-sheaf $\CF:Open(M)^{opp}\ra \cC$ is called flasque if for any $U\subseteq V$ the restriction morphism $\CF(V)\ra\CF(U)$ is a surjection.}

Continuing with the example, for any two opens $U,V$ the following sequence is exact:
\begin{align}\label{eq:UVcosheaf}
    \mfkG(U\cap V)\ra \mfkG(U)\oplus\mfkG(V)\ra \mfkG(U\cup V)\ra 0.
\end{align}
Here the first arrow is $\iota_{U,U\cap V}\oplus (-\iota_{V,U\cap V})$ and the second arrow is $\iota_{U\cup V,U}\oplus\iota_{U\cup V,V}$.
Exactness follows from the existence of a partition of unity for the cover $\fU=\{U,V\}$ of $U\cup V$. In words, the exactness of the sequence (\ref{eq:UVcosheaf}) means that any element of $\mfkG(U\cup V)$ can be decomposed as a sum of elements of subalgebras attached to $U$ and $V$ modulo ambiguities which take values in the subalgebra attached to $U\cap V$.

More generally, the existence of a partition of unity implies that for any finite collection of opens $U_i$, $i\in I$ the following sequence is exact \cite{Bredon1968}:
\begin{align}\label{eq:cosheaf}
   \oplus_{i<j} \mfkG(U_i\cap U_j)\ra \oplus_i \mfkG(U_i)\ra \mfkG(\cup_i U_i)\ra 0.
\end{align}
For a compact $M$, this implies that the pre-cosheaf of vector spaces $\mfkG$ is a cosheaf of vector spaces. The cosheaf property is a compatibility of the pre-cosheaf with the notions of intersection and union of opens and expresses a stronger form of locality.

Note that the maps in the above exact sequences are not Lie algebra homomorphisms. Hence $\mfkG$ is not a cosheaf of Lie algebras. Nevertheless, the following additional property of $\mfkG$ can be regarded as a form of locality of the Lie bracket.
\begin{definition}
    A coflasque pre-cosheaf of Lie algebras $\mfkF$ is said to satisfy Property I if for any two opens $U,V$ one has $[\mfkF(U),\mfkF(V)]\subseteq \mfkF(U\cap V)$.
\end{definition}
Clearly, $\mfkG$ satisfies Property I. In particular, elements of $\mfkG(M)$ that are supported on non-intersecting opens commute.

In the previous section, for every quantum lattice system with a gapped state $\psi$ we defined Lie algebras $\mfkDal(U)$ and $\mfkDpal(U)$ attached to subsets in $\RR^n$ in such a way that $\mfkDal(U)=\mfkDal(U^r)$ and $\mfkDpal(U)=\mfkDpal(U^r)$ for any $U$ and any $r\geq 0$. Intuitively, this means that localization on the lattice is only approximate. To phrase locality of symmetries in lattice systems, one needs to replace the set $Open(M)$ of open subsets of $M$ with a more general structure which admits the notions of intersection, union, and cover.

The first thing to note is that the category $Open(M)$ is rather  special: its objects form a pre-ordered set (i.e. the set of objects carries a relation $\subseteq $ which is reflexive and transitive), and the category structure is determined by the pre-order. The relation $\subseteq$ is also anti-symmetric: $U\subseteq V$ and $V\subseteq U$ implies $U=V$. In other words, $Open(M)$ is a poset. In general, we will not require the pre-order to be anti-symmetric. From the categorical viewpoint, $U\subseteq V$ and $V\subseteq U$ means that $U$ and $V$ are isomorphic objects of the category $Open(M)$, and as a general rule, it is not advisable to identify isomorphic objects.

For any pre-ordered set $(X,\leq)$ there is a natural notion of intersection and union. The intersection of $U,V\in X$ can be defined as the greatest lower bound (or meet) of both $U$ and $V$, i.e. a $W\in X$ such that $W\leq U$, $W\leq V$, and for any $W'\leq U,V$ we have $W'\leq W$. The meet of $U$ and $V$ is denoted $U\wedge V$. Similarly, the union of $U$ and $V$ can be defined as the smallest upper bound (or join) of both $U$ and $V$. It is denoted $U\vee V$. For a general pre-ordered set, the meet and join may not exist for all pairs of objects.  If they exist, they are unique up to isomorphism. We will assume that $(X,\leq)$ is such that $U\wedge V$ and $U\vee V$ exist for all $U,V\in X$.\footnote{If we turn $X$ into a poset by identifying  isomorphic objects, then this means that the poset is a lattice in the sense of order theory.} The existence of all pairwise meets and joins  implies the existence of all finite meets and joins. In the case of the pre-ordered set $Open(M)$ arbitrary (i.e. not necessarily finite) joins make sense. 

Finally, to define covers of elements of $X$, let us assume that 
$U\wedge (V\vee W)\leq (U\wedge V)\vee (U\wedge W)$ for all $U,V,W\in X$.\footnote{The opposite relation is automatic, so this condition ensures that $U\wedge (V\vee W)\simeq (U\wedge V)\vee (U\wedge W)$ for all $U,V,W\in X$. This is equivalent to saying that the poset corresponding to $X$ is a distributive lattice.} We will say that $X$ is a distributive pre-ordered set. This condition is certainly satisfied for $Open(M)$. We say that a collection $\fU=\{U_i\}_{i\in I}$ of elements of $X$ covers $A\in X$ iff $U_i\leq A$ for all $i\in I$ and $A\leq \bigvee_{i\in I} U_i$. This definition ensures that if $\fU$ covers $A$, then for any $B\leq A$ the collection $\fU\wedge B=\{U_i\wedge B\}_{i\in I}$ covers $B$.

In the case of the pre-ordered set $Open(M)$, the standard topological definition of a cover allows  $I$ to be infinite. In general, if $X$ admits only finite joins, $I$ needs to be finite. Also, we may or may not allow $I$ to be empty. This possibility only arises when $A$ is the smallest element of $X$, i.e. $A\leq U$ for any $U\in X$. In the case of the pre-ordered set $Open(M)$, the smallest element is the empty set $\varnothing$, and the standard choice is to allow the empty cover of the empty set. For a general pre-ordered set, it is up to us whether to allow $I$ to be empty.

From a categorical perspective, this notion of a cover equips any distributive pre-ordered set $(X,\leq)$ admitting pairwise meets and joins with a Grothendieck topology, thus making it into a {\it site} \cite{MacLaneSheaves}. Apart from the option of allowing the labeling set $I$ to be empty, this Grothendieck topology is canonical. 

For any $W\in X$ we may consider the subset $X^W=\left\{U\in X\mid U\leq W\right\}$ with the pre-order inherited from $(X,\leq)$. It is a distributive pre-ordered set in its own right. When equipped with its canonical Grothendieck topology, it can be regarded as a sub-site of the site associated to $(X,\leq)$.

Given any distributive pre-ordered set $(X,\leq)$, we can define the notion of a pre-cosheaf of vector spaces, a pre-cosheaf of Lie algebras, a cosheaf of vector spaces, and a coflasque pre-cosheaf of Lie algebras with Property I exactly as before, i.e. by mechanically replacing $\cup$ with $\vee$ and $\cap$ with $\wedge$. Motivated by the above example, we introduce the following definition.
\begin{definition}
    A local Lie system over $(X,\leq)$ is a coflasque pre-cosheaf of Lie algebras with Property I which is also a cosheaf of vector spaces over the corresponding site. A morphism of local Lie systems is a morphism of the underlying pre-cosheaves of Lie algebras.
\end{definition}
\begin{remark}
    Let $\mfkF$ be a local Lie system. Then for any $U\leq V$ the Lie algebra $\mfkF(U)$ is an ideal in $\mfkF(V)$. One can equivalently define a local Lie system as a pre-cosheaf of Lie algebras over $(X,\leq)$ which is a cosheaf of vector spaces and such that all co-restriction maps are inclusions of Lie ideals.
\end{remark}
\begin{remark}
     In this paper all Lie algebras will be \Frechet-Lie algebras and all morphisms will be continuous. We define a local \Frechet-Lie system  over $(X,\leq)$ to be a coflasque pre-cosheaf of \Frechet-Lie algebras with Property I which is also a cosheaf of vector spaces.
\end{remark}
The following Lemma will be useful in future sections to check the cosheaf property:
\begin{lemma}\label{lma:2-is-sufficient}
Let $X$ be a distributive pre-ordered set with finite meets $\wedge$ and joins $\vee$.
Write $A\cong B$ to mean $A\le B$ and $B\le A$. Equip $X$ with the Grothendieck topology of finite covers: a finite family
$\{U_i\}_{i\in I}$ is a cover of $U$ if
$U_i\leq U$ for all $i\in I$ and $U\leq \bigvee_{i\in I} U_i$.
Let $\mfkF$ be a pre-cosheaf of vector spaces on $X$, with structure maps $\iota_{VU}:\mfkF(U)\to\mfkF(V)$ for $U\le V$.

Then $\mfkF$ is a cosheaf for the above topology if and only if for every $U,V\in X$
the sequence
\begin{equation}\label{eqn:n=2}
\mfkF(U\wedge V)\xrightarrow{\ \alpha\ }\mfkF(U)\oplus \mfkF(V)
\xrightarrow{\ \beta\ }\mfkF(U\vee V)\to 0
\end{equation}
is exact, where
\begin{equation}
\alpha(t)=\bigl(\iota_{U,\,U\wedge V}(t),\,-\iota_{V,\,U\wedge V}(t)\bigr),
\qquad
\beta(a,b)=\iota_{U\vee V,\,U}(a)+\iota_{U\vee V,\,V}(b).
\end{equation}
\end{lemma}
The following proof is essentially a restatement of the proof of Proposition 1.3 in \cite{Bredon1968}, adapted to the site $X$:
\begin{proof}
The ``only if'' direction is the cosheaf axiom for the $2$-cover $\{U,V\}$ of $U\vee V$.

Conversely, assume exactness for all pairs. Since $\mfkF$ sends $\cong$ to isomorphisms, for a cover
$\{U_i\le W\}$ one may replace $W$ by $\bigvee_i U_i$. Thus it suffices to treat a finite family
$U_1,\dots,U_n$ with $U:=\bigvee_{i=1}^n U_i$.

Consider the standard sequence
\begin{equation}
    \bigoplus_{i<j}\mfkF(U_i\wedge U_j)\xrightarrow{\ \alpha\ }
\bigoplus_{i=1}^n \mfkF(U_i)\xrightarrow{\ \beta\ }\mfkF(U)\to 0,
\end{equation}
with $\beta((s_i))=\sum_i \iota_{U,U_i}(s_i)$ and $\alpha$ the alternating sum on pairwise overlaps.
Surjectivity of $\beta$ follows by induction on $n$ using the exactness for the pair
$\bigl(\bigvee_{i=1}^{n-1}U_i,\;U_n\bigr)$ and the inductive hypothesis for $\bigvee_{i=1}^{n-1}U_i$.

For exactness in the middle, let $(s_i)\in\ker\beta$ and set $V:=\bigvee_{i=1}^{n-1}U_i$.
Exactness for $(V,U_n)$ gives $t\in\mfkF(V\wedge U_n)$ with
$\sum_{i=1}^{n-1}\iota_{V,U_i}(s_i)=\iota_{V,V\wedge U_n}(t)$ and $s_n=-\iota_{U_n,V\wedge U_n}(t)$.
By distributivity, $V\wedge U_n\cong\bigvee_{i=1}^{n-1}(U_i\wedge U_n)$, hence (by the already proven
surjectivity) $t$ lifts to $(t_i)\in\bigoplus_{i=1}^{n-1}\mfkF(U_i\wedge U_n)$.
Subtracting $\alpha((t_i))$ reduces to an element in $\ker\beta$ for the $(n-1)$-cover of $V$,
which lies in $\operatorname{im}\alpha$ by the inductive hypothesis. This proves
$\ker\beta=\operatorname{im}\alpha$ and hence the cosheaf axiom for all finite covers.
\end{proof}

\subsection{DGLA attached to a cover}\label{sec:DGLA-attached-to-lls}

Let $X$ be a distributive pre-ordered set. Let $W\in X$ and $\fU=\{U_i\}_{i\in I}$ be a cover of $W$. Let $\mfkF$ be a pre-cosheaf of vector spaces over $X$. 
\begin{definition}
    The \Cech\ chain complex $C_\bullet(\fU,W;\mfkF)$ is defined as 
\begin{align}
    C_n(\fU,W;\mfkF)={\rm Alt}\left[\oplus_{i_0,\ldots, i_n} \mfkF(U_{i_0}\wedge \ldots \wedge U_{i_n})\right],\quad n\geq 0,
\end{align}
where ${\rm Alt}$ means that we only keep elements that are skew-symmetric under permutations of the indices. The differential $\partial:C_n\ra C_{n-1}$ is given by $\partial = \sum_{j=0}^n(-1)^j\lambda_j$, where $\lambda_j$ is the canonical map
\begin{align}
    \mfkF(U_{i_0}\wedge \hdots \wedge U_{i_n}) \to \mfkF(U_{i_0}\wedge \hdots  \widehat{U_{i_j}}  \hdots \wedge U_{i_n}).
\end{align}
\end{definition}
\begin{lemma}\label{lma:coflasqueacyclic}
    Let $\mfkF$ be a coflasque cosheaf of vector spaces over $X$ and $\fU$ be a covering of $W\in X$. The homology of the \Cech\ complex with respect to $\fU$ is $0$ for $n>0$ and $\mfkF(W)$ for $n=0$. 
\end{lemma}
\begin{proof}
    See \cite{Bredon1968}, Corollary 4.3. Note that \cite{Bredon1968} uses the term ``flabby'' instead of ``coflasque''. 
\end{proof}
Let $C^{aug}(\fU,W;\mfkF)=\{C_\bullet(\fU,W;\mfkF)\ra \mfkF(W)\}$ be the augmented \Cech\ complex, where the augmentation $\epsilon:C_0(\fU,W;\mfkF)=\bigoplus_i\mfkF(U_i)\to \mfkF(W)$ is
$\epsilon((\derF_i))=\sum_i \iota_{W,U_i}(\derF_i)$.

\begin{prop}
    Let $\mfkF$ be a local Lie system over $X$. Then for any $W\in X$ and any cover $\fU$ of $W$ the 1-shifted augmented \Cech\ complex $C^{aug}_{\bullet+1}(\fU,W;\mfkF)$ has a natural structure of a non-negatively graded acyclic DGLA. 
\end{prop}
\begin{proof}
    Let $\fU$ be a cover of $W$ indexed by $I$. Let $\hI$ be a vector space with a basis $e_i$, $i\in I$ and let $f^i,i\in I$, be the dual basis of $\hI^*$. The 1-shifted augmented \Cech\ complex of $\mfkF$ with respect to $\fU$ is naturally identified with a sub-complex of the DGLA $(\mfkF(W)\otimes \Lambda^\bullet \hI, \partial)$, where $\partial$ is contraction with $\sum_i f^i$. It is easy to check that this sub-complex is closed with respect to the graded Lie bracket thanks to Property I. By Lemma \ref{lma:coflasqueacyclic}, the resulting DGLA is acyclic. 
\end{proof}

Covers of $W\in X$ form a category whose morphisms are refinements. A refinement of a cover $\fU=\left\{ U_i\right\}_{i\in I}$ to a cover $\fV=\left\{V_j\right\}_{j\in J}$ is a map $\phi:J\ra I$ such that $V_j\leq U_{\phi(j)}$. Refinements are composed in an obvious way. 
\begin{prop} \label{prop:DGLA_functoriality}
    For a fixed $W\in X$, the map which sends a local Lie system $\mfkF$ and a cover $\fU$ of $W$ to the DGLA $C^{aug}_{\bullet+1}(\fU,W;\mfkF)$ is functorial in both $\mfkF$ and $\fU$. 
\end{prop}
\begin{proof}
    Functoriality in $\mfkF$ is clear. To prove functoriality in $\fU$, note that the refinement map $\phi$ induces a linear map $\hat\phi:\hJ\ra\hI$, where $\hJ$ is the vector space with basis $\{e_j\}_{j\in J}$, hence a linear map $\Lambda\hat\phi:\Lambda^\bullet\hJ\ra\Lambda^\bullet\hI$, hence a linear map $\phi_*:\mfkF(W)\otimes \Lambda^\bullet\hJ\ra\mfkF(W)\otimes\Lambda^\bullet\hI$. One can check that $\phi_*$ commutes with the differential and maps the subcomplex $C^{aug}_{\bullet+1}(\fV,W;\mfkF)$  to $C^{aug}_{\bullet+1}(\fU,W;\mfkF)$. 
\end{proof}
We will call $C^{aug}_{\bullet+1}(-,W;-)$ the \Cech\ functor. We will need some variants of the \Cech\ functor. First, we can define a \textit{graded} local Lie system over a distributive pre-ordered set $(X,\leq)$ in an obvious manner. The construction of the acyclic DGLA $C^{aug}_{\bullet+1}(\fU,W;\mfkF)$ works in this case as well, except that it may have components in negative degrees. 

Second, define a {\it pointed DGLA} as a DGLA equipped with a distinguished central cycle of degree $-2$ (which we call the curvature). A morphism of pointed DGLAs is a DGLA morphism which preserves the distinguished central cycle.\footnote{The category of pointed DGLAs as defined here is a full subcategory of the category of curved DGLAs as defined in \cite{ChuangLazarevMannan}. There, for a curved DGLA with curvature $\derB$, $\derB$ is not required to be central and the derivation $\partial$ satisfies $\partial^2=\ad_\derB$.}  
\begin{definition}
    A graded local Lie system $\mfkF$ over $(X,\leq)$ with a terminal object $T\in X$ is called pointed if it is equipped with a distinguished central element $\derB\in \mfkF(T)$ of degree $-2$. Morphisms in the category of pointed graded local Lie systems are required to preserve the distinguished element.
\end{definition}
 If $\mfkF$ is a pointed graded local Lie system, then the DGLA $C^{aug}_{\bullet+1}(\fU,T;\mfkF)$ is a pointed DGLA. The distinguished central cycle is  $\derB$ which lies in the augmentation (has \Cech\ degree $-1$). Of course, since the DGLA is acyclic, this central cycle is exact. The construction of a pointed DGLA from a pointed graded local Lie system and a cover of $T$ is functorial in both arguments.  

\section{The site of fuzzy semilinear sets}\label{sec:semilinearsets}

In this section we define and study the site of fuzzy semilinear sets in $\RR^n$. We show that it is equivalent as a site to the pre-ordered set of closed  polyhedral subsets of $S^{n-1}$ equipped with a slightly unusual topology. In the following section, we will show that quantum lattice systems give rise to local Lie systems over this site.

\subsection{Semilinear sets and their thickenings}

The results of Sections \ref{sec:symmetries} and \ref{sec:locality} suggest that we should look for a pre-ordered set of subsets of $\RR^n$ with the following two properties. First, $U\leq V$ whenever there exists $r\geq 0$ such that $U\subseteq V^r$. Second, for every two elements $U,V$ there should exist $r\geq 0$ such that $U^r$ and $V^r$ are transverse. The simplest choice is the set whose elements are finite unions of polyhedra, or equivalently, closed semilinear sets in $\RR^n$.

\begin{definition}
    A semilinear set in $\RR^n$ is a subset of $\RR^n$ which can be defined by means of a finite number of linear equalities and strict linear inequalities. More precisely, a basic semilinear set in $\RR^n$ is an intersection of a finite number of hyperplanes and open half-spaces, and a semilinear set is a finite union of basic semilinear sets \cite{tametopology}.
\end{definition}
The set of semilinear sets of $\RR^n$ will be denoted $\cS_n$. Projections $\RR^m\times\RR^n\ra\RR^m$ map $\cS_{m+n}$ to $\cS_m$ \cite{tametopology}. A map $\RR^n\ra\RR^m$ is called semilinear iff its graph is a semilinear set of $\RR^{m+n}$. The  composition of two semilinear maps is a semilinear map \cite{tametopology}.  

Recall that we use the $\ell^\infty$ metric on $\RR^n$.
\begin{lemma}
    The distance function $d: \RR^n\times\RR^n\ra\RR$ is semilinear.
\end{lemma}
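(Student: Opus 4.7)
The plan is to reduce to simpler semilinear maps using the composition property already noted in the excerpt. Recall that a map is semilinear iff its graph is a semilinear subset, which in turn is a finite union of basic semilinear sets (finite intersections of hyperplanes and open half-spaces). I would factor
\begin{equation*}
d = \max \circ\, a \circ s,
\end{equation*}
where $s: \RR^{2n} \to \RR^n$ is the linear map $(x,y) \mapsto (x_1 - y_1, \ldots, x_n - y_n)$, the map $a: \RR^n \to \RR^n$ is componentwise absolute value, and $\max: \RR^n \to \RR$ takes the maximum of the coordinates. The map $s$ is linear, hence its graph is an affine subspace, i.e. a finite intersection of hyperplanes, and is therefore trivially semilinear.

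For the scalar absolute value $|\cdot|: \RR \to \RR$, the graph decomposes as
\begin{equation*}
\{(x,t) : x > 0,\ t = x\} \cup \{(x,t) : x < 0,\ t = -x\} \cup \{(0,0)\},
\end{equation*}
each piece being a finite intersection of hyperplanes and open half-spaces. The componentwise version $a$ is then semilinear because its graph is the intersection of the $n$ pullbacks of the graph of $|\cdot|$ along coordinate projections, and finite intersections of semilinear sets are semilinear (distribute $\cap$ over the finite unions to land back in the union-of-basic-semilinear-sets form).

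For $\max: \RR^n \to \RR$, I would partition the graph according to which coordinate achieves the maximum. For each $j \in \{1,\ldots,n\}$, the piece
\begin{equation*}
\{(u,t) : t = u_j,\ u_j \geq u_i \text{ for all } i \neq j\}
\end{equation*}
becomes a finite union of basic semilinear sets once each weak inequality $u_j \geq u_i$ is rewritten as $\{u_j > u_i\} \cup \{u_j = u_i\}$ and one distributes. Taking the union over $j$ yields the full graph of $\max$. Composing the three semilinear maps $s$, $a$, $\max$ then gives the conclusion.

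The main obstacle is essentially nonexistent; the only friction comes from the mismatch between the weak inequalities most naturally suggested by $|\cdot|$ and $\max$ and the strict inequalities used in the definition of a basic semilinear set. This is resolved by a routine case split that is combinatorial but not conceptually difficult, so the proof should occupy only a few lines.
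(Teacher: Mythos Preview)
Your proof is correct and follows essentially the same approach as the paper: both decompose $d(x,y)=\max_i |x_i-y_i|$ into linear coordinate differences, the scalar absolute value, and a maximum, and then invoke closure of semilinear maps under composition. The paper's version is slightly terser (it cites that $\max(f,g)$ is semilinear whenever $f,g$ are, rather than treating $\max:\RR^n\to\RR$ directly), but the content is the same.
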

\begin{proof}
The function $( x , y )\mapsto x_i-y_i$ is semilinear for any $i$. The function $|\cdot|:\RR\ra\RR$ is semilinear. If $f,g:\RR^n\ra\RR$ are semilinear, then $h=\max(f,g):\RR^n\ra\RR$ is semilinear. Since the set of semilinear functions is closed under composition, this proves the lemma.
\end{proof}
Recall that for any set $U\subseteq \RR^n$, we write $U^r:=\{x\in\RR^n: \exists y\in U \text{ s.t } d(x,y)\le r\}$ and call this the $r$-\emph{thickening} of $U$. It is easy to see that if $U$ is closed, then $U^r$ is also closed for any $r\geq 0$.
\begin{lemma}
    If $U$ is semilinear, $U^r$ is semilinear for any $r$.
\end{lemma}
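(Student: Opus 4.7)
\medskip

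\noindent\textbf{Proof plan.} The idea is to express $U^r$ as the projection of a semilinear set in $\RR^{2n}$ and then appeal to the closure of semilinear sets under projection. By unpacking the definition,
\[
U^r = \pi_1\bigl(\, (\RR^n \times U) \cap D_r \,\bigr),
\]
where $\pi_1:\RR^n\times\RR^n\to\RR^n$ is projection on the first factor and
$D_r := \{(x,y)\in\RR^n\times\RR^n : d(x,y)\le r\}$. Since projections $\RR^{m+n}\to\RR^m$ send semilinear sets to semilinear sets, and since finite intersections of semilinear sets are semilinear, it suffices to verify that both $\RR^n\times U$ and $D_r$ are semilinear.

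\medskip

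\noindent The first factor is straightforward: if $U = \bigcup_j \bigcap_i H_{ij}$ with each $H_{ij}$ a hyperplane or open half-space in $\RR^n$, then $\RR^n\times H_{ij}$ is a hyperplane or open half-space in $\RR^{2n}$, and $\RR^n\times U = \bigcup_j \bigcap_i (\RR^n\times H_{ij})$ is semilinear by definition.

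\medskip

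\noindent For $D_r$, I would use the previous lemma that $d:\RR^n\times\RR^n\to\RR$ is semilinear, so its graph
\[
\Gamma_d = \{(x,y,t)\in\RR^{2n+1} : t = d(x,y)\}
\]
is a semilinear subset of $\RR^{2n+1}$. The half-line $\{t\le r\}\subset\RR$ is semilinear because it decomposes as the union of the open half-line $\{t<r\}$ and the hyperplane $\{t=r\}$; its preimage in $\RR^{2n+1}$ under the coordinate projection is therefore semilinear. Intersecting with $\Gamma_d$ gives a semilinear set in $\RR^{2n+1}$, and projecting away the $t$-coordinate yields exactly $D_r$, which is therefore semilinear.

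\medskip

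\noindent The main conceptual content is really just the projection formula for $U^r$; everything else is bookkeeping using the closure properties of the semilinear category (Boolean operations and projection) together with the semilinearity of $d$ already established. The only mildly subtle point is handling the non-strict inequality $t\le r$, which one dispatches by splitting it into a strict inequality plus a hyperplane as above.
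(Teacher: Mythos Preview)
Your proof is correct and follows essentially the same approach as the paper: both write $U^r$ as the projection onto the first factor of $(\RR^n\times U)\cap \Delta_r$ (the paper's $\Delta_r$ is your $D_r$) and invoke closure of semilinear sets under intersection and projection. You supply a bit more detail than the paper does---in particular, you spell out why $D_r$ is semilinear starting from the semilinearity of the graph of $d$ via intersecting with $\{t\le r\}$ and projecting out $t$, whereas the paper simply asserts ``by the previous lemma, $\Delta_r$ is semilinear''---but the underlying argument is identical.
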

\begin{proof}
Consider the set 
\begin{equation}
    \Delta_r=\{( x , y )\in\RR^{2n}| d( x , y )\leq r\}.
\end{equation}
By the previous lemma, $\Delta_r$ is semilinear. On the other hand, $U^r$ is the projection to the first $\RR^n$ of $\Delta_r\cap (\RR^n\times U)\subseteq\RR^n\times\RR^n$. Since intersection and projection preserve the set of semilinear sets, the lemma is proved.
\end{proof}
\begin{lemma}
    If $U$ is convex, then $U^r$ is convex, for any $r\geq 0$.
\end{lemma}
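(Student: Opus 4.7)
The plan is to prove convexity of $U^r$ directly from the definition $U^r = \{x \in \RR^n : d(x,U) \leq r\}$, leveraging the fact that the $\ell^\infty$ distance is induced by a norm. Given $x_1, x_2 \in U^r$ and $\lambda \in [0,1]$, set $x := \lambda x_1 + (1-\lambda) x_2$; the goal is to show $d(x,U) \leq r$.

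Fix $\epsilon > 0$. Since $d(x_i, U) \leq r$, I can pick $y_i \in U$ with $d(x_i, y_i) < r + \epsilon$ for $i=1,2$. By convexity of $U$, the point $y := \lambda y_1 + (1-\lambda) y_2$ lies in $U$. Then I compute
\begin{align}
    d(x, y) = \|x - y\|_\infty = \| \lambda (x_1 - y_1) + (1-\lambda)(x_2 - y_2) \|_\infty \leq \lambda \|x_1 - y_1\|_\infty + (1-\lambda)\|x_2 - y_2\|_\infty < r + \epsilon,
\end{align}
where the only nontrivial step is the triangle inequality for the norm $\|\cdot\|_\infty$. Hence $d(x, U) < r + \epsilon$. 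Since $\epsilon$ was arbitrary, $d(x, U) \leq r$, which places $x$ in $U^r$.

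An equivalent (and perhaps more conceptual) route would be to observe that $U^r$ equals the Minkowski sum $U + B$, where $B = [-r,r]^n$ is the closed $\ell^\infty$-ball of radius $r$ centered at the origin, and then invoke the standard fact that the Minkowski sum of two convex sets is convex. This viewpoint also makes it transparent that only two properties of the metric $d$ are used: translation-invariance and the fact that the unit ball is convex, both automatic for any norm-induced metric.

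I do not foresee any real obstacle; the statement is a one-line consequence of the convexity of $U$ combined with the linearity of the norm in its argument. The only thing to watch for is that the infimum defining $d(x, U)$ need not be attained when $U$ is not closed, which is why I argue with the $\epsilon$-approximation above rather than picking exact minimizers.
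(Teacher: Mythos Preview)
Your proof is correct and follows essentially the same approach as the paper: take two points of $U^r$, form the convex combination of nearby points in $U$, and apply the triangle inequality for $\|\cdot\|_\infty$. Your $\epsilon$-argument is slightly more careful than the paper's version (which simply picks $\bx',\by'\in U$ with $d(\bx,\bx')\le r$, $d(\by,\by')\le r$, consistent with the Section~4 definition $U^r=\{x:\exists\,y\in U,\ d(x,y)\le r\}$), but the underlying idea is identical.
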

\begin{proof}
    Suppose $ x , y \in U^r$, and suppose $ x ', y '\in U$ satisfy $d( x , x ')\leq r$ and $d( y , y ')\leq r$. Then for any $t\in [0,1]$ we have $d(t x +(1-t) y ,t x '+(1-t) y ')\leq t d( x , x ')+(1-t) d( y , y ')\leq r.$ Since $U$ is convex, $t x '+(1-t) y '\in U$, hence $tx+(1-t)y\in U^r$. 
\end{proof}

A polyhedron in $\RR^n$ is an intersection of a finite number of closed half-spaces. A polyhedron is closed, but not necessarily compact. 
A closed semilinear set is the same as a finite union of polyhedra. Conversely, according to Theorem 19.6 of  \cite{convex}, a polyhedron can be described as a closed convex semilinear set. Combining this with the above lemmas, we get
\begin{corollary}\label{cor:poly-thickening}
    If $U$ is a polyhedron, then $U^r$ is a polyhedron, for any $r\geq 0$.
\end{corollary}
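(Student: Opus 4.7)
The plan is to invoke the characterization of polyhedra (Theorem~19.6 of \cite{convex}) mentioned just above the statement: a subset of $\RR^n$ is a polyhedron if and only if it is closed, convex, and semilinear. Granting this equivalence, proving the corollary reduces to checking that each of the three defining properties (closed, convex, semilinear) is preserved by $r$-thickening. Once all three stability properties are in hand, the conclusion is immediate: if $U$ is a polyhedron, then $U$ is closed, convex, and semilinear, hence so is $U^r$, hence $U^r$ is again a polyhedron.

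For the three closure properties, I would simply cite what has already been established in the surrounding text. Preservation of closedness is the observation made just before the statement that $U$ closed implies $U^r$ closed, which is essentially the continuity of $d(\,\cdot\,,U)$ and the fact that $U^r = \{x : d(x,U) \le r\}$ is a sublevel set of a continuous function. Preservation of the semilinear property is the content of the second lemma in the subsection, which exhibits $U^r$ as the image under a coordinate projection of the intersection $\Delta_r \cap (\RR^n \times U)$ of two semilinear sets. Preservation of convexity is the content of the third lemma, proved directly from the triangle inequality applied to convex combinations of points witnessing membership in $U^r$.

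There is no real obstacle here: the statement is a one-line corollary obtained by concatenating Theorem~19.6 of \cite{convex} with the three preceding lemmas, and the proof is simply the remark that each necessary property (closed, convex, semilinear) has already been shown to persist under the operation $U \mapsto U^r$. The only minor subtlety worth flagging in the write-up is that one should explicitly recall the "closed $+$ convex $+$ semilinear $\iff$ polyhedron" equivalence in both directions, so that the argument cleanly runs $U \text{ polyhedron} \Rightarrow U \text{ closed, convex, semilinear} \Rightarrow U^r \text{ closed, convex, semilinear} \Rightarrow U^r \text{ polyhedron}$.
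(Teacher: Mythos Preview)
Your proposal is correct and follows essentially the same approach as the paper: the corollary is obtained by combining the characterization of polyhedra as closed convex semilinear sets (Theorem~19.6 of \cite{convex}) with the three preceding observations that closedness, semilinearity, and convexity are each preserved under $r$-thickening.
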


\subsection{A category of fuzzy semilinear sets}

Clearly, if for $X,Y\in\cS_n$ we have $X\subseteq Y$, then for any $r\geq 0$ we have $X^r\subseteq Y^r$. Also, for any $r,s\geq 0$ and any $U\in\cS_n$ we have $(U^r)^s\subseteq U^{r+s}$. Thus we can define a pre-order $\leq$  on $\cS_n$ by saying that $U\leq V$ iff there exists $r\geq 0$ such that $U\subseteq V^r$. We will call this pre-order relation {\it fuzzy inclusion}. Equivalently, $\cS_n$ can be made into a category, with a single morphism from $U$ to $V$ iff $U\leq V$.  One can turn the pre-ordered set $(\cS_n,\leq)$ into a poset by identifying isomorphic objects of the corresponding category, but for our purposes it is more convenient not to do so. On the other hand, every semilinear set is isomorphic to its closure, and we find it convenient to work with an equivalent category (or pre-ordered set) which contains only closed semilinear subsets. We will denote it  $\CS_n$ and call it the category (or pre-ordered set) of fuzzy semilinear sets.

\begin{prop}
    $\CS_n$ has all pairwise joins: for any $U,V\in \CS_n$ the join is given by $U\cup V$.
\end{prop}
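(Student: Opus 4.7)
The plan is to verify the three defining properties of the join in a pre-ordered category: that $U\cup V$ lies in $\CS_n$, that it is an upper bound for $U$ and $V$ under the fuzzy inclusion relation, and that it is dominated by any other common upper bound.

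First I would check that $U\cup V\in\CS_n$, i.e.\ that it is closed and semilinear. Semilinearity is immediate from the definition recalled earlier in the paper: since $U$ and $V$ are each finite unions of basic semilinear sets, so is $U\cup V$. Closedness follows because the finite union of closed sets is closed. So $U\cup V$ is a legitimate object of $\CS_n$.

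Next I would verify the upper bound property. Since $U\subseteq U\cup V = (U\cup V)^0$, we have $U\leq U\cup V$, and likewise $V\leq U\cup V$.

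Finally I would show universality. Suppose $W\in\CS_n$ satisfies $U\leq W$ and $V\leq W$. By definition there exist $r_1,r_2\geq 0$ with $U\subseteq W^{r_1}$ and $V\subseteq W^{r_2}$. Setting $r=\max(r_1,r_2)$ and using the obvious monotonicity $W^{r_i}\subseteq W^r$ gives
\[
U\cup V\subseteq W^{r_1}\cup W^{r_2}\subseteq W^r,
\]
so $U\cup V\leq W$. This establishes that $U\cup V$ is a join in the pre-ordered set $(\CS_n,\leq)$.

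I do not expect any genuine obstacle here: the statement is essentially the observation that the class of closed semilinear subsets of $\RR^n$ is stable under finite unions, together with the trivial monotonicity of the thickening operation $W\mapsto W^r$ in both arguments.
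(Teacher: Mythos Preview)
Your proof is correct and follows essentially the same approach as the paper: the paper's proof records only the universality step, taking $r=\max(r_1,r_2)$ exactly as you do, while leaving the closure under union and the upper-bound property implicit.
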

\begin{proof}
    If $U\subseteq W^r$ and $V\subseteq W^s$ for some $r,s\geq 0$, then $U\subseteq W^{\max(r,s)}$ and $V\subseteq W^{\max(r,s)}$, and thus $U\cup V\subseteq W^{\max(r,s)}$. 
\end{proof}
Let us show that $\CS_n$ has all pairwise meets, using the notion of transverse intersection from the previous section.
Recall (Definition \ref{defn:transverse}) that we say two sets $U,V\subseteq \RR^n$ are transverse if for some $C>0$ we have $d(x,U\cap V) \le C\max(d(x,U),d(x,V))$ for all $x\in \RR^n$. 

We need the following geometric result \cite{Petrov}\footnote{The proof in \cite{Petrov} is for the Euclidean distance, but since the Euclidean distance function and $d( x , y )=\| x - y \|_\infty$ are equivalent (each one is upper-bounded by a multiple of the other), the result applies to $d( x , y )$ as well. }:
\begin{lemma}\label{lma:poly-distance-bound}
    Let $P$ and $Q$ be polyhedra in $\RR^n$. If $P\cap Q\neq \varnothing$ then $P$ and $Q$ are transverse.
\end{lemma}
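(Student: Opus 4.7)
The plan is to reduce the statement to a version of Hoffman's lemma on linear inequality systems. Since $P$ and $Q$ are polyhedra, we can write $P=\{x\in\RR^n: A_1 x\le b_1\}$ and $Q=\{x\in\RR^n: A_2 x\le b_2\}$ for some matrices $A_1,A_2$ and vectors $b_1,b_2$; then
$$
P\cap Q=\{x\in\RR^n:A x\le b\},\qquad A=\begin{pmatrix}A_1\\A_2\end{pmatrix},\ b=\begin{pmatrix}b_1\\b_2\end{pmatrix}.
$$
The nonemptiness hypothesis is what makes this system consistent, so Hoffman-type estimates apply.

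The first step is to establish (or cite) Hoffman's lemma in $\ell^\infty$ form: for any consistent system $Ax\le b$ there is a constant $H(A)>0$ (depending only on $A$) such that for every $x\in\RR^n$,
$$
d(x,\{y:Ay\le b\})\le H(A)\,\|(Ax-b)_+\|_\infty,
$$
where $(\cdot)_+$ denotes the componentwise positive part. The standard proof proceeds by selecting a closest point $y$ via a linear program and arguing that on the active face $y-x$ is a convex combination of vectors pointing along finitely many basic directions determined only by $A$; equivalence of $\ell^2$ and $\ell^\infty$ norms handles the switch to the metric we use. This is the main technical input, but it is classical and the constant depends only on the combinatorics of the inequality system defining $P\cap Q$.

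The second step is the easy reverse comparison: for the individual polyhedra $P$ and $Q$, each violated row contributes a lower bound on the distance. Indeed, if $A_{1,i}$ is the $i$th row of $A_1$, then for any $y\in P$ we have $A_{1,i}\cdot(x-y)\ge (A_1x-b_1)_{+,i}$, whence
$$
(A_1 x-b_1)_{+,i}\le \|A_{1,i}\|_1\,d(x,P),
$$
so $\|(A_1x-b_1)_+\|_\infty\le L_1\,d(x,P)$ with $L_1=\max_i\|A_{1,i}\|_1$, and similarly for $Q$ with constant $L_2$. Combining with Hoffman applied to the joint system $Ax\le b$,
$$
d(x,P\cap Q)\le H(A)\,\|(Ax-b)_+\|_\infty\le H(A)\max(L_1,L_2)\,\max\bigl(d(x,P),d(x,Q)\bigr),
$$
which is exactly transversality with $C=H(A)\max(L_1,L_2)$.

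The hard part is the Hoffman bound itself; the rest is just unpacking definitions. An alternative route, if one wants to avoid invoking Hoffman directly, is to argue by induction on $n$: reduce to a closest-face analysis by projecting $x$ onto $P$ to get $x'$, then onto $Q$ to get $x''$, and iterate, showing that after finitely many steps one lands in $P\cap Q$ with total displacement controlled by $\max(d(x,P),d(x,Q))$; the key lemma needed along the way is that the angle between the affine hulls of any two faces of $P$ and $Q$ whose relative interiors meet is bounded away from $\pi/2$, which is again a finite-combinatorics statement about the two polyhedra. Either route yields a constant $C$ depending only on $P$ and $Q$, as required.
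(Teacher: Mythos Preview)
Your argument is correct. The paper does not actually prove this lemma: it simply cites the result from \cite{Petrov}, noting in a footnote that the equivalence of the Euclidean and $\ell^\infty$ metrics transfers the statement to the metric used here. Your route via Hoffman's lemma is the standard way to establish such polyhedral error bounds and is essentially what underlies the cited result; the reduction you give (bounding the constraint violation $\|(Ax-b)_+\|_\infty$ by $\max(L_1,L_2)\max(d(x,P),d(x,Q))$ and then applying Hoffman to the joint system) is clean and self-contained. The alternative inductive/alternating-projection sketch you mention at the end is less tidy and would need more care, but the Hoffman argument already suffices.
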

\begin{prop}\label{prop:semilinear-distance-bound}
    For every $U,V\in \CS_n$ there is an $r>0$ such that $U^r$ and $V^r$ are transverse.
\end{prop}
\begin{proof}
    Let $U = \cup_iP_i$ and $V = \cup_jQ_j$ be a decomposition of $U$ and $V$ into finite unions of polyhedra and choose $r>0$ so that $P_i^r\cap Q_j^r$ is nonempty for each pair $i,j$. By Corollary \ref{cor:poly-thickening} and Lemma \ref{lma:poly-distance-bound} there are constants $C_{P_i Q_j}>0$ such that 
    \begin{align}
        d( x , P_i^r\cap Q_j^r) \le C_{P_iQ_j} \max (d( x ,P_i^r),d(x ,Q_j^r))
    \end{align} for each pair $i,j$. Since $U^r = \cup_iP_i^r$ and $V^r = \cup_j Q_j^r$, for any $ x \in \RR^n$ there are indices $i^*$ and $j^*$ such that $d( x , U^r) = d( x , P_{i^*}^r)$ and $d( x , V^r) = d( x , Q_{j^*}^r)$. Then we have
    \begin{align}
        d( x , U^r\cap V^r) &= d( x , \cup_{i,j}(P_i^r\cap Q_j^r))\notag\\
        &\le d( x , P_{i^*}^r\cap Q_{j^*}^r)\notag\\
        &\le C_{P_{i^*}Q_{j^*}}\max( d( x , P_{i^*}^r), d( x , Q_{j^*}^r))\notag\\
        &= C_{P_{i^*}Q_{j^*}}\max( d( x , U^r), d( x , V^r)),
    \end{align}
    and so $U^r,V^r$ are $C$-transverse for $C := \max_{i,j}C_{P_i Q_j}$.
\end{proof}
\begin{corollary}\label{cor:semilinear-distance-bound}
    With $U,V,r$ as above, $U^r\cap V^r$ is a meet of $U$ and $V$. In particular, $\CS_n$ has all pairwise meets.
\end{corollary}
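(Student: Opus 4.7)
The plan is to verify the three properties of a meet for the candidate $U^r \cap V^r$, exploiting the transversality already supplied by Proposition \ref{lma:semilinear-distance-bound}.

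First I would check that $U^r \cap V^r$ actually belongs to $\CS_n$: by the earlier lemmas, $U^r$ and $V^r$ are closed semilinear sets, so their intersection is too. Next, the two inequalities $U^r \cap V^r \leq U$ and $U^r \cap V^r \leq V$ are immediate from $U^r \cap V^r \subseteq U^r$ and $U^r \cap V^r \subseteq V^r$, which realize fuzzy inclusions with constant $r$.

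The substantive step is the universal property: given any $W \in \CS_n$ with $W \leq U$ and $W \leq V$, I need to produce a $t \geq 0$ so that $W \subseteq (U^r \cap V^r)^t$. Pick $s \geq 0$ such that $W \subseteq U^s$ and $W \subseteq V^s$ (take the larger of the two constants witnessing $W \leq U$ and $W \leq V$). For any $x \in W$, the identity $d(x, U^r) = \max(0, d(x,U) - r)$ — which follows from the definition of $U^r = \{y : d(y,U) \leq r\}$ — yields $d(x, U^r) \leq \max(0, s - r)$, and similarly for $V^r$. Then invoking the $C$-transversality of $U^r$ and $V^r$ supplied by Proposition \ref{lma:semilinear-distance-bound},
\begin{align}
d(x, U^r \cap V^r) \leq C \max(d(x, U^r), d(x, V^r)) \leq C \max(0, s - r),
\end{align}
uniformly in $x \in W$. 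Thus $t := C \max(0, s - r)$ works, giving $W \leq U^r \cap V^r$.

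The main conceptual point — and the only one where a nontrivial input is used — is the third step, where transversality converts a bound on distance to $U$ and $V$ separately into a bound on distance to the intersection; without this the intersection could be much smaller than either thickening and the universal property would fail. The existence of all pairwise meets in $\CS_n$ then follows because Proposition \ref{lma:semilinear-distance-bound} furnishes a suitable $r$ for every pair $U, V \in \CS_n$.
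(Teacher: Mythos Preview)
The proposal is correct and follows essentially the same approach as the paper: both verify that $U^r \cap V^r$ is a lower bound and then invoke the transversality from Proposition \ref{lma:semilinear-distance-bound} to prove the universal property. The paper first reduces to showing that $U^r \cap V^r$ is a meet of $U^r$ and $V^r$ (using $U \simeq U^r$, $V \simeq V^r$), whereas you argue directly for $U$ and $V$; your explicit identity $d(x, U^r) = \max(0, d(x,U) - r)$ is sharper than needed (the cruder inequality $d(x, U^r) \leq d(x, U)$, from $U \subseteq U^r$, already suffices), but the argument is sound.
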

\begin{proof}
    Since $U$ and $U^r$ (resp. $V$ and $V^r$) are isomorphic in $\CS_n$, it suffices to show that $U^r\cap V^r$ is a meet of $U^r$ and $V^r$. It's clear that $U^r\cap V^r \le U^r$ and $U^r\cap V^r \le V^r$. Now suppose $W\in \CS_n$ satisfies $W\le U^r$ and $W \le V^r$. Then there is an $s>0$ such that every $ x \in W$ satisfies $\max(d( x , U^r), d( x , V^r))\le s$. Since $d( x ,U^r\cap V^r) \le C\max(d( x , U^r),d( x , V^r))$ for some $C>0$ we have $W \subseteq (U^r\cap V^r)^{Cs}$.
\end{proof}

\begin{prop}
    The pre-ordered set $\CS_n$ is distributive. 
\end{prop}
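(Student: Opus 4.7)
The plan is to unfold the definitions and reduce everything to a straightforward set-theoretic identity, then pick a single fuzziness parameter $r$ large enough to simultaneously represent all the meets that appear.

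First, I would use the results already established to rewrite both sides of the distributivity inequality
\[
U\wedge(V\vee W)\ \le\ (U\wedge V)\vee(U\wedge W)
\]
as concrete subsets of $\RR^n$. By the preceding proposition, $V\vee W$ is represented by $V\cup W$, and likewise $(U\wedge V)\vee(U\wedge W)$ is represented by $(U\wedge V)\cup(U\wedge W)$ once we have picked representatives for the two meets. By Corollary~\ref{cor:semilinear-distance-bound}, for any two elements $A,B\in\CS_n$ the set $A^r\cap B^r$ represents $A\wedge B$ whenever $r$ is large enough that $A^r$ and $B^r$ are transverse.

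Second, I would fix a single $r>0$ that simultaneously works for the three meets $U\wedge V$, $U\wedge W$, and $U\wedge(V\vee W)$. Concretely, by Proposition~\ref{lma:semilinear-distance-bound} applied to the pairs $(U,V)$, $(U,W)$, and $(U,V\cup W)$, one can choose a common $r$ such that each of $U^r,V^r$; $U^r,W^r$; and $U^r,(V\cup W)^r$ is a transverse pair. Then $U^r\cap V^r$, $U^r\cap W^r$, and $U^r\cap (V\cup W)^r$ represent the three meets respectively.

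Third, I would invoke the elementary identity $(V\cup W)^r=V^r\cup W^r$, which follows at once from $d(x,V\cup W)=\min(d(x,V),d(x,W))$. Substituting this into the representative for $U\wedge(V\vee W)$ gives
\[
U^r\cap(V\cup W)^r \;=\; U^r\cap(V^r\cup W^r) \;=\; (U^r\cap V^r)\cup(U^r\cap W^r),
\]
which is precisely the representative of $(U\wedge V)\vee(U\wedge W)$. Hence the two sides of the distributivity inequality are represented by the same closed semilinear set, and the fuzzy inclusion holds (in fact as an isomorphism in $\CS_n$).

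There is no serious obstacle here; the only point requiring a little care is choosing $r$ uniformly so that all three meets are simultaneously represented by thickened intersections. Once that bookkeeping is done, the distributive law reduces to an equality of sets.
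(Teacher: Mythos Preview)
Your proof is correct and follows essentially the same route as the paper: represent the meets via thickened intersections (Corollary~\ref{cor:semilinear-distance-bound}), use $(V\cup W)^r=V^r\cup W^r$, and reduce to set-theoretic distributivity. The only cosmetic difference is that you arrange a single $r$ to work for all three meets (this uses that transversality of $U^r,V^r$ persists for larger $r$, which is immediate from the proof of Proposition~\ref{lma:semilinear-distance-bound}), whereas the paper thickens $U^r,V^r$ and $U^r,W^r$ by further parameters $s,t$ and then concludes via the inclusions $U^r\cap V^r\subseteq (U^r)^s\cap (V^r)^s$.
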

\begin{proof}
    We need to show that for any $U,V,W\in \CS_n$ we have $U\wedge (V\vee W)\leq (U\wedge V)\vee (U\wedge W)$. According to Corollary \ref{cor:semilinear-distance-bound}, there exists $r\geq 0$ such that $U\wedge (V\vee W)\simeq U^r\cap (V\cup W)^r$. Since $(V\cup W)^r=V^r\cup W^r,$ we also have $U\wedge (V\vee W)\simeq (U^r\cap V^r)\cup (U^r\cap W^r)$.  On the other hand, $U\wedge V\simeq U^r\wedge V^r\simeq (U^r)^s\cap (V^r)^s$ for some $s\geq 0$, and $U\wedge W\simeq U^r\wedge W^r\simeq (U^r)^t\cap (W^r)^t$ for some $t\geq 0$. Thus, $(U\wedge V)\vee (U\wedge W)\simeq \left((U^r)^s\cap (V^r)^s\right)\cup \left((U^r)^t\cap (W^r)^t\right)$ for some $s,t\geq 0$. Since we have inclusions $U^r\cap V^r\subseteq (U^r)^s\cap (V^r)^s$ and $U^r\cap W^r\subseteq (U^r)^t\cap (W^r)^t$, the proposition is proved.
\end{proof}

We can now equip $\CS_n$ with a Grothendieck topology of Section \ref{sec:locality}. There are two versions of it which differ in whether we allow empty covers of an initial object or not. In the case of the pre-ordered set $\CS_n$, every bounded closed semilinear set is an initial object (they are all isomorphic objects of the category $\CS_n$). Since such sets are not empty, we will disallow empty covers. This choice is also forced on us if we want certain pre-cosheaves to be cosheaves (see below). Note that we only consider non-empty closed semilinear sets.

\begin{definition}\label{def:coherent}
    The site $\CS_n$ has as its objects nonempty closed semilinear sets of $\RR^n$. For any two objects $U,V$, $Mor(U,V)$  contains a single element if $U\subseteq V^r$ for some $r\geq 0$ and is empty otherwise. In the former case we write $U\leq V.$ A cover of an object $V$ is a finite collection of objects $U_i$, $i\in I$, with $I$ nonempty, such that $U_i\leq V$ for all $i\in I$ and $V\leq \bigvee_{i\in I} U_i$.
\end{definition}

More generally, for any $W\in\CS_n$ we may consider a full sub-category $\CSnW$ whose objects are $U\in\CS_n$ such that $U\leq W$. This is a distributive pre-ordered set, and we will also have occasion to consider local Lie systems on the associated site.

\subsection{Spherical CS sets}

Every two bounded elements of $\CS_n$ are isomorphic objects of the corresponding category. More generally, any two elements of $\CS_n$ which coincide outside some ball in $\RR^n$ are isomorphic objects. Thus $\CS_n$ encodes the large-scale structure of $\RR^n$. To make this explicit, we will show that the pre-ordered set $\CS_n$ is equivalent as a category to a certain poset of subsets of the ``sphere at infinity'' $S^{n-1}$. 

A cone in $\RR^n$ is a non-empty subset of $\RR^n$ which is invariant under $ x \mapsto \lambda x $, where $\lambda\geq 0$. Every cone contains the origin $0$. Cones in $\RR^n$ are in bijection with subsets of $S^{n-1}=(\RR^n\backslash \{0\})/\RR^*_+$ where $\RR^*_+$ is the group of positive real numbers under multiplication. If $A\subseteq S^{n-1}$, we denote the corresponding cone $c(A)$ over $A$. In particular, $c(\varnothing)=\{0\}\in \RR^n$. Note also that $c(A\cup B)=c(A)\cup c(B)$ for any  $A,B\subseteq S^{n-1}$. If $K$ is a cone in $\RR^n$, we will denote the corresponding subset of $S^{n-1}$ by $\widehat K$. For any cone $K\subseteq \RR^n$, we have $c\left(\widehat K\right)=K$, for any $A\subseteq S^{n-1}$ we have $\widehat {c(A)}=A$.

\begin{definition}
    $A\subseteq S^{n-1}$ is a spherical polyhedron iff $c(A)$ is a polyhedron and $A$ is contained in some open hemisphere of $S^{n-1}$. $A\subseteq S^{n-1}$ is a spherical CS set iff it is a union of a finite number of spherical polyhedra. The set of spherical CS sets in $S^{n-1}$ is denoted $\SCS_n$.
\end{definition}
\begin{remark}\label{rem:cone-is-SCS}
    For $K$ a polyhedral cone, $\widehat K$ need not be contained in an open hemisphere and thus may not be a spherical polyhedron. However, by intersecting $K$ with orthants, one can present every such $K$ as a finite  union of polyhedral cones $K_i$ such that $\widehat K_i$ is contained in an open hemisphere. Therefore $\widehat K\in \SCS_n$ for every polyhedral cone $K$.
\end{remark} 
Every nonempty polyhedron is convex and thus contractible. This implies:
\begin{prop}
    Any nonempty spherical polyhedron is contractible.
\end{prop}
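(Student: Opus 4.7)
The plan is to reduce contractibility of the spherical polyhedron $A$ to contractibility of a convex subset of a Euclidean hyperplane, using the radial projection from an appropriate point.

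First I would use the hypothesis that $A$ lies in some open hemisphere to produce a unit vector $v\in\RR^n$ such that $\langle v,x\rangle >0$ for every nonzero $x\in c(A)$. Indeed, if $A$ is contained in the open hemisphere determined by $v$, then every representative of a point of $A$ in $\RR^n\setminus\{0\}$ has positive inner product with $v$, and by homogeneity the same holds on $c(A)\setminus\{0\}$. Let $H$ denote the affine hyperplane $\{x\in\RR^n : \langle v,x\rangle = 1\}$.

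Next I would define the radial projection $\pi: c(A)\setminus\{0\} \to H$ by $\pi(x) = x/\langle v,x\rangle$, which is continuous, and observe that $\pi$ descends to a continuous map $\bar\pi: A \to H$, where $A$ is regarded with its subspace topology from $S^{n-1}$. The map $\bar\pi$ is a homeomorphism from $A$ onto $c(A)\cap H$: its inverse sends $y\in c(A)\cap H$ to the class $[y]\in S^{n-1}$, which is well-defined and continuous since $y\neq 0$ on $H$.

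Then I would note that $c(A)$ is a polyhedron by hypothesis, hence convex, so $c(A)\cap H$ is a convex subset of the Euclidean space $H$. Any nonempty convex subset of a Euclidean space is contractible via the straight-line homotopy to any chosen basepoint. Since $A$ is homeomorphic to $c(A)\cap H$, it follows that $A$ is contractible.

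The argument has no serious obstacle; the only point requiring care is verifying that containment of $A$ in an open hemisphere yields the strict positivity of $\langle v,\cdot\rangle$ on all of $c(A)\setminus\{0\}$, which follows because each nonzero element of $c(A)$ is a positive multiple of some element of $A$. Everything else is routine once this geometric picture is set up.
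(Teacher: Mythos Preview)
Your proof is correct and follows essentially the same approach as the paper: the paper chooses coordinates so that the hemisphere is $\{x_n>0\}$ and uses the homeomorphism $\RR^{n-1}\to S^{n-1}_+$, $(x_1,\dots,x_{n-1})\mapsto [(x_1,\dots,x_{n-1},1)]$, under which the spherical polyhedron corresponds to a bounded polyhedron (hence convex, hence contractible). Your radial projection onto $H=\{\langle v,x\rangle=1\}$ is exactly the inverse of this map in coordinate-free form, and your observation that $c(A)\cap H$ is convex is the same endpoint.
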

\begin{proof}
    Without loss of generality, we may assume that the spherical polyhedron is contained in the hemisphere $S^{n-1}_+=S^{n-1}\cap \{x_n>0\}$. The map $\RR^{n-1}\ra S^{n-1}_+$ which sends $(x_1,\ldots,x_{n-1})$ to the equivalence class of $(x_1,\ldots,x_{n-1},1)$ is a homeomorphism which establishes a bijection between nonempty bounded polyhedra in $\RR^{n-1}$ and nonempty spherical polyhedra contained in $S^{n-1}_+$.
\end{proof}
\begin{prop}
    The intersection of two spherical polyhedra is a spherical polyhedron. The union and intersection of two spherical CS sets is a spherical CS set. 
\end{prop}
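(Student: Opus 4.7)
The plan is to reduce everything to two observations: that $c(\cdot)$ commutes with intersection of subsets of $S^{n-1}$, and that the collection of polyhedra in $\RR^n$ is closed under pairwise intersection. Once these are in place, the two claims follow by direct verification.

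First I would prove $c(A \cap B) = c(A) \cap c(B)$ for arbitrary $A, B \subseteq S^{n-1}$. By definition, a nonzero $x \in \RR^n$ lies in $c(A \cap B)$ iff its equivalence class $[x] \in S^{n-1}$ lies in $A \cap B$, which happens iff $[x] \in A$ and $[x] \in B$, i.e.\ $x \in c(A) \cap c(B)$; and $0$ lies in both sides. Next, I would record that an intersection of two polyhedra in $\RR^n$ is again a polyhedron: each polyhedron is a finite intersection of closed half-spaces, and concatenating these descriptions produces another polyhedron.

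To prove the first sentence, let $A$ and $B$ be spherical polyhedra. By the observation above, $c(A \cap B) = c(A) \cap c(B)$ is an intersection of two polyhedra, hence a polyhedron. Also, $A$ is contained in some open hemisphere $H \subset S^{n-1}$, and $A \cap B \subseteq A \subseteq H$, so $A \cap B$ lies in an open hemisphere as well. Thus $A \cap B$ satisfies both defining properties and is a spherical polyhedron.

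For the second sentence, write $A = \bigcup_{i=1}^p A_i$ and $B = \bigcup_{j=1}^q B_j$ as finite unions of spherical polyhedra. The union $A \cup B = A_1 \cup \cdots \cup A_p \cup B_1 \cup \cdots \cup B_q$ is manifestly a finite union of spherical polyhedra. For the intersection, distributivity of $\cap$ over $\cup$ gives
\begin{equation*}
A \cap B \;=\; \bigcup_{i=1}^{p} \bigcup_{j=1}^{q} (A_i \cap B_j),
\end{equation*}
and by the first sentence each $A_i \cap B_j$ is a spherical polyhedron, so $A \cap B$ is a spherical CS set. The only step that requires any real content is the polyhedron-intersection observation together with the cone identity; the rest is bookkeeping, so I do not anticipate a substantive obstacle.
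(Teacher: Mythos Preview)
Your proof is correct and is exactly what the paper has in mind: the paper's own proof reads in its entirety ``Clear from definitions,'' and you have simply unpacked this by using $c(A\cap B)=c(A)\cap c(B)$, closure of polyhedra under intersection, and distributivity. There is nothing to add.
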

\begin{proof}
    Clear from the definitions.  
\end{proof}
Spherical CS sets form a poset $\SCS_n$ under inclusion. This poset has pairwise joins and meets given by unions and intersections, respectively.

\begin{prop}\label{prop:CS-is-SCS}
    The category $\SCS_n$ is equivalent to the category $\CS_n$. 
\end{prop}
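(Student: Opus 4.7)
My plan is to construct a functor $G \colon \SCS_n \to \CS_n$ on objects by $G(A) := c(A)$ and show it is an equivalence of categories. Since both sides are pre-ordered sets, this amounts to checking that $G$ is well-defined, satisfies the biconditional $A \subseteq B \iff c(A) \leq c(B)$, and is essentially surjective. Well-definedness is immediate: writing $A = \bigcup_i A_i$ as a finite union of spherical polyhedra, $c(A) = \bigcup_i c(A_i)$ is a finite union of polyhedra in $\RR^n$, hence an element of $\CS_n$, and any inclusion $A \subseteq B$ trivially produces $c(A) \subseteq c(B)$, which is a fuzzy inclusion.

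For the nontrivial direction of the biconditional, suppose $c(A) \subseteq c(B)^r$. For any $v \in A$ and every $\lambda > 0$, the point $\lambda v$ lies in $c(A)$ and hence within distance $r$ of some $w_\lambda \in c(B)$. Since $c(B)$ is a closed cone, $w_\lambda/\lambda \in c(B)$, and $w_\lambda/\lambda \to v$ as $\lambda \to \infty$, so $v \in c(B) \cap S^{n-1} = B$. This shows the inverse implication, so $G$ is fully faithful.

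For essential surjectivity, given $U \in \CS_n$ I write $U = \bigcup_i P_i$ as a finite union of polyhedra and produce $A \in \SCS_n$ with $c(A) \simeq U$ in $\CS_n$. The Minkowski--Weyl structure theorem writes each $P_i = K_i + \mathrm{rec}(P_i)$, where $K_i \subset \RR^n$ is a bounded polytope and $\mathrm{rec}(P_i)$ is the polyhedral recession cone of $P_i$. After subdividing each $\mathrm{rec}(P_i)$ into simplicial polyhedral cones $C_{i,j}$, each of which is contained in some open half-space of $\RR^n$, the sets $A_{i,j} := \widehat{C_{i,j} \setminus \{0\}}$ are spherical polyhedra and $A := \bigcup_{i,j} A_{i,j}$ is a spherical CS set with $c(A) = \bigcup_i \mathrm{rec}(P_i)$. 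Boundedness of each $K_i$ yields $r_i \geq 0$ with $P_i \subseteq \mathrm{rec}(P_i)^{r_i}$ (since $p = k + v$ gives $d(p,v) = \|k\| \leq r_i$), while fixing any $x_i \in P_i$ and using $x_i + \mathrm{rec}(P_i) \subseteq P_i$ gives $\mathrm{rec}(P_i) \subseteq P_i^{\|x_i\|}$. Taking the maximum $r$ over the finitely many $i$ yields $U \subseteq c(A)^r$ and $c(A) \subseteq U^r$, so $U \simeq c(A)$ in $\CS_n$.

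The main obstacle is essential surjectivity, which rests on two standard ingredients from polyhedral geometry: the Minkowski--Weyl decomposition $P = K + \mathrm{rec}(P)$, which is what makes a polyhedron fuzzily equivalent to its recession cone, and the simplicial subdivision of a polyhedral cone into pieces contained in open half-spaces, which is what guarantees the resulting subset of $S^{n-1}$ lies in $\SCS_n$ rather than being merely a finite union of spherical convex sets. Once these are in hand, the remainder of the argument is a routine unpacking of the definition of $\leq$ in $\CS_n$.
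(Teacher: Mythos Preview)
Your proof is correct and rests on the same three ingredients as the paper's: (i) for closed cones, fuzzy inclusion coincides with ordinary inclusion; (ii) every polyhedron is fuzzy-equivalent to its recession cone; (iii) every polyhedral cone can be cut into pieces whose spherical images lie in open hemispheres. The paper organizes these differently: it builds the functor in the opposite direction $\CS_n\to\SCS_n$, and for (ii) it replaces the defining half-spaces $\{\bn_i\cdot\bx\le b_i\}$ by $\{\bn_i\cdot\bx\le 0\}$ directly rather than invoking Minkowski--Weyl by name (these are the same operation, since the recession cone of $\bigcap_i\{\bn_i\cdot\bx\le b_i\}$ is $\bigcap_i\{\bn_i\cdot\bx\le 0\}$). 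For (iii) the paper completes the normals $\{\bn_i\}$ to a spanning set and splits along the new hyperplanes, whereas you appeal to simplicial triangulation; both achieve the same end. Your version has the advantage of naming the standard results it uses and making the fuzzy equivalence $P\simeq\mathrm{rec}(P)$ transparent; the paper's version avoids importing external theorems. One cosmetic point: when you write ``$v\in c(B)\cap S^{n-1}=B$'' you are silently identifying $S^{n-1}=(\RR^n\setminus\{0\})/\RR^*_+$ with the unit sphere in $\RR^n$; this is harmless but worth flagging since the paper uses the quotient description.
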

To prove this proposition, we need two definitions and three lemmas.
\begin{definition}
    Let $P$ be a nonempty polyhedron in $\RR^n$. Its recession cone is the set
    \begin{align}
    \rec(P):=\{v\in \RR^n : P+t v \subseteq P \textit{ for all }t\ge 0\}.
    \end{align}
\end{definition}
\begin{lemma}\label{lem:rec}
    If $P=\bigcap_{i=1}^k\{ n_i \cdot  x  \leq b_i\},$ then
    $\rec(P)=\bigcap_{i=1}^k\{ n_i \cdot  x  \leq 0\}.$
    In particular, for any non-empty polyhedron $P$, $\rec(P)$ is a polyhedral cone, and $\widehat{\rec(P)}\in\SCS_n$.
\end{lemma}
\begin{proof}
    Straightforward check.
\end{proof}
If $K$ is a polyhedral cone, then clearly $\rec(K)=K$. More generally, we have
\begin{lemma}\label{lem:MW}
    For any non-empty polyhedron $P$, $P\simeq \rec(P)$ as objects in $\CS_n$. 
\end{lemma}
\begin{proof}
    By the Minkowski-Weyl theorem (see e.g. \cite{convex}), $P=Q+\rec(P)$ where $Q$ is a bounded polyhedron. Clearly, $P\subseteq \rec(P)^r$ where $r=\sup_{q\in Q}\|q\|_\infty$. On the other hand, by the definition of $\rec(P)$ we have $x_0+\rec(P)\subseteq P$ for any $x_0\in P$, hence  $\rec(P)\subseteq P^s$ where $s=\|x_0\|_\infty$.
\end{proof}
\begin{definition}
    For $U\in\CS_n$, let $\widehat U\subseteq S^{n-1}$ be the set of all $u\in S^{n-1}$ such that $c(\{u\})\leq U$.
\end{definition}
If $U$ is a polyhedral cone $K$, this definition agrees with the previous definition of $\widehat K$. This will be shown in the course of the proof of the next lemma.
\begin{lemma}\label{lem:hat}
    For any $U\in\CS_n$, $\widehat U\in\SCS_n$. More precisely, if $U=\cup_{i=1}^m P_i$ where each $P_i$ is a polyhedron, then 
    \begin{align}\label{eq:hat}
    \widehat U=\bigcup_{i=1}^m \widehat{\rec(P_i)}
    \end{align}
\end{lemma}
\begin{proof}
 First, let us prove (\ref{eq:hat}). If $u\in \widehat{\rec(P_i)}$ for some $i$, then $c(\{u\})\leq \rec(P_i)$. Since $\rec(P_i)\leq P_i$ by Lemma \ref{lem:MW}, this implies $c(\{u\})\leq P_i\leq U$, hence $u\in \widehat U$. 

In the opposite direction, suppose $u\in \widehat U$, i.e. $c(\{u\})\subseteq U^r$ for some $r\geq 0$. We claim there exists $i_*\in\{1,\ldots,m\}$ such that $c(\{u\})\leq P_{i_*}$. Indeed, let 
\begin{align}
    T_i=\left\{t\in [0,\infty)|d(tu,P_i)\leq r\right\}.
\end{align}
Here we identified $u\in S^{n-1}$ with the corresponding unit vector in $\RR^n$. We have $\cup_i T_i=[0,\infty)$, hence there exists 
$i_*\in\{1,\ldots,m\}$ such that $T_{i_*}$ is unbounded. Suppose $P_{i_*}=\cap_{\alpha=1}^M 
\{n_\alpha\cdot x\leq b_\alpha\}$ for some vectors $n_\alpha\in\RR^n$ and some $b_\alpha\in\RR$. Choose a sequence $\{t_k\}_{k\in\NN}\subseteq T_{i_*}$ going to infinity and for each $k\in\NN$ a point $x_k\in P_{i_*}$ such that $d(t_ku,x_k)\leq r$. 
Then for every $\alpha\in\{1,\ldots,M\}$
\begin{align}
t_k n_\alpha\cdot u\leq b_\alpha+n_\alpha\cdot (t_k u-x_k)\leq b_\alpha+r\|n_\alpha\|_1. 
\end{align}
Taking the limit $k\ra\infty$ we conclude that $n_\alpha\cdot u\leq 0$ for all $\alpha$, hence by Lemma \ref{lem:rec} $c(\{u\})\subseteq \rec(P_{i_*})$ and by Lemma \ref{lem:MW} $c(\{u\})\leq P_{i_*}$.

The same argument applied to the case when $U$ is a polyhedral cone $K$ shows that the new definition of $\widehat K$ agrees with the earlier one.

Finally, since each $\rec(P_i)$ is a polyhedral cone, Remark \ref{rem:cone-is-SCS} implies that $\widehat U\in\SCS_n$.
\end{proof}

\begin{proof}[Proof of Proposition \ref{prop:CS-is-SCS}]
    The map $U\mapsto \widehat U$ is order-preserving. Indeed, if $U\leq V$, then $\forall u\in \widehat U$ we have $c(\{u\})\leq U\leq V$, hence $u\in \widehat V$. Thus $U\mapsto \widehat U$ defines a functor from $\CS_n$ to $\SCS_n$ which we call $F$. The map $A\mapsto c(A)$ is obviously order-preserving and defines a functor from $\SCS_n$ to $\CS_n$ which we call $G$. 

    We claim that these two functors establish an equivalence between $\CS_n$ and $\SCS_n$. Indeed, $F\circ G={\rm id}_{\SCS_n}$ since $\widehat {c(A)}=A$ for any $A\in\SCS_n$. In the opposite direction, let $U=\cup_{i=1}^m P_i$ where all $P_i$ are polyhedra. Then by Lemma \ref{lem:hat},
    \begin{align}
        G\circ F(U)=c\left(\cup_{i=1}^m \widehat{\rec(P_i)}\right)=
        \cup_{i=1}^m c\left(\widehat{\rec(P_i)}\right)=\cup_{i=1}^m \rec(P_i).
    \end{align}
Since by Lemma \ref{lem:MW} $\rec(P_i)\simeq P_i$, this implies $G\circ F(U)\simeq U$.
\end{proof}
Note that under this equivalence all bounded elements of $\CS_n$ correspond to $\varnothing\in\SCS_n$. The canonical Grothendieck topology on $\CS_n$ corresponds to a slightly unusual Grothendieck topology on the poset of spherical CS subsets of $S^{n-1}$: the one where empty covers of $\varnothing$ are not allowed. Consequently, a cosheaf of vector spaces on $\SCS_n$ equipped with this topology need not map $\varnothing$ to the zero vector space.

\subsection{Spherical CS cohomology}

Let $\fU$ be a spherical CS cover of a spherical CS set $A$.  Spherical CS cohomology $\check{H}^\bullet_{CS}(\fU,A;\RR)$ is defined to be the simplicial cohomology of the \v{C}ech nerve $N(\fU)$.
\begin{definition}
    Let $A$ be a spherical CS set. The spherical CS cohomology $\check{H}^\bullet_{CS}(A,\RR)$ is defined as $\varinjlim \check{H}^\bullet_{CS}(\fU,A;\RR),$ where the colimit is taken over the directed set of all spherical CS covers.
\end{definition}

The following proposition connects spherical CS cohomology with singular cohomology using a functorial version of nerve theorems \cite{Borsuk1948OnTI, leray1945forme}.

\begin{prop} \label{cech_to_singular}
For any spherical CS set $A$ the graded vector space $\check{H}^\bullet_{CS}(A,\RR)$ is isomorphic to the singular cohomology $H^\bullet(A,\RR)$.
\end{prop}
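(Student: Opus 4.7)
The plan is to identify a cofinal sub-family of the directed set of spherical CS covers consisting of good covers (in the sense of the nerve theorem), apply the classical nerve theorem to each, and check that the resulting isomorphisms assemble coherently under refinement.

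First I would observe that the sub-collection of covers of $A$ by spherical polyhedra is cofinal in the directed set of all spherical CS covers of $A$. Indeed, every spherical CS set is by definition a finite union of spherical polyhedra, so any cover $\fU = \{U_i\}_{i\in I}$ refines to the cover $\{P_{ij}\}_{i,j}$ obtained by decomposing each $U_i = \bigcup_j P_{ij}$ into its constituent spherical polyhedra. Hence the colimit defining $\check{H}^\bullet_{CS}(A,\RR)$ can be computed along this sub-family of covers by spherical polyhedra.

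For such a cover $\fU = \{P_i\}_{i\in I}$, every nonempty finite intersection $P_{i_0}\cap\cdots\cap P_{i_k}$ is itself a spherical polyhedron by the earlier proposition on closure of $\SCS_n$ under pairwise intersection, and hence is contractible. Thus $\fU$ is a good cover. Since $A$ is a finite union of spherical polyhedra, it can be triangulated in such a way that each $P_i$ is a subcomplex, so the classical Borsuk--Leray nerve theorem for closed good covers of triangulable spaces applies: the geometric realization $|N(\fU)|$ is homotopy equivalent to $A$, yielding an isomorphism $\check{H}^\bullet_{CS}(\fU, A; \RR) \cong H^\bullet(A,\RR)$.

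The main subtlety, which the functorial versions of the nerve theorem in \cite{Borsuk1948OnTI, leray1945forme} address, is that these isomorphisms be compatible under refinement. Given a refinement map $\phi: J \to I$ between spherical-polyhedron covers $\fV = \{Q_j\}_{j\in J}$ and $\fU = \{P_i\}_{i\in I}$, one needs the induced simplicial map $N(\fV) \to N(\fU)$ to intertwine the nerve equivalences $|N(\fV)|\simeq A$ and $|N(\fU)|\simeq A$ up to homotopy, so that the induced map $\check{H}^\bullet_{CS}(\fU, A; \RR) \to \check{H}^\bullet_{CS}(\fV, A; \RR)$ agrees with the identity on $H^\bullet(A,\RR)$ under the two nerve-theorem isomorphisms. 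Once this functoriality is in place, passing to the colimit over the cofinal family of spherical-polyhedron covers yields the desired isomorphism $\check{H}^\bullet_{CS}(A,\RR) \cong H^\bullet(A,\RR)$.
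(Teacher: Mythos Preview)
Your proposal is correct and follows essentially the same route as the paper: restrict to the cofinal family of covers by spherical polyhedra, observe that such covers are good (intersections are spherical polyhedra, hence contractible) and are covers by subcomplexes of a triangulation, then invoke a nerve theorem together with its functoriality under refinement to conclude that the direct system is constant with value $H^\bullet(A,\RR)$. The only real difference is bibliographic: the paper appeals to Theorem~C of \cite{bauer2023unified}, which packages precisely the functorial statement you isolate as the ``main subtlety,'' whereas you gesture at the classical Borsuk and Leray references; citing a source that actually proves the functorial version would tighten your argument.
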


\begin{proof}
    Every spherical CS cover can be refined to a cover by spherical polyhedra, so in the computation of $\varinjlim \check{H}^\bullet_{CS}(\fU,A;\RR)$ it suffices to take colimit over such covers. All intersections of elements of a spherical polyhedral cover $\fU$ are either empty or contractible. Also, since every polyhedron is a geometric realization of a simplicial complex, $\fU$ is a cover of a simplicial complex by subcomplexes.  By Theorem C of \cite{bauer2023unified} for such $\fU$ the direct system of groups $\fU\mapsto \check{H}^\bullet_{CS}(\fU,A;\RR)$ is constant and its limit is $H^\bullet(A,\RR)$.
\end{proof}

\section{Local Lie systems over fuzzy semilinear sets}\label{sec:LLA}

In this section, we define several examples of local Lie systems\footnote{In an earlier version, we used the term “local Lie algebras,” which conflicts with a different notion bearing the same name in~\cite{costellofactorization}.} over the site $\CS_n$. The motivating examples arise from the Lie algebras $\mfkDal(U)$ and $\mfkDpal(U)$ studied in Section \ref{sec:symmetries}. When $\psi$ is a gapped state invariant under a compact Lie group $G$, we define a pointed graded local Lie system which can be regarded as an equivariant version of $\mfkDpal$. Applying the \Cech\ functor gives a pointed DGLA which encodes the properties of the state. Obstructions for the existence of morphisms of pointed DGLAs give rise to topological invariants of $G$-invariant gapped states.

\subsection{Basic examples}
Let $\psi$ be a state of a quantum lattice system on $\RR^n$. By Lemma \ref{lma:seminorm-dominated} the maps sending $U \in \CS_n$ to $\mfkDal(U)$ and $\mfkDpal(U)$ are pre-cosheaves of \Frechet\ spaces on $\CS_n$. We denote these pre-cosheaves by $\mfkDal$ and $\mfkDpal$. Putting together Lemma \ref{lma:2-is-sufficient}, Proposition \ref{prop:SES}, Corollary \ref{cor:commutator}, and Proposition \ref{prop:SES-psi}, we have
\begin{theorem}\label{thm:D-is-LLA}
   $\mfkDal$ is a local Lie system over $\CS_n$. If $\psi$ is gapped, then $\mfkDpal$ is a local Lie system over $\CS_n$.
\end{theorem}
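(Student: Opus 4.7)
The plan is to unpack the definition of a local Lie algebra and check each ingredient in turn, drawing on the propositions stated earlier in the paper. The definition requires four properties: $\mfkDal$ (resp. $\mfkDal^\psi$) must be (a) a pre-cosheaf of \Frechet-Lie algebras on $\CS_n$, (b) coflasque, (c) satisfy Property I, and (d) be a cosheaf of vector spaces with respect to the canonical Grothendieck topology on $\CS_n$.

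First I would verify the pre-cosheaf and coflasque conditions simultaneously. Given $U\leq V$ in $\CS_n$, by definition $U\subseteq V^r$ for some $r\geq 0$, and Lemma \ref{lma:seminorm-dominated} provides a continuous inclusion $\mfkDal(U)\hookrightarrow \mfkDal(V)$. These inclusions are injective (hence coflasque) and manifestly compose correctly. That $\mfkDal(U)$ is a \Frechet-Lie algebra follows by taking $V=U$ in Corollary \ref{cor:commutator}(i), which gives a jointly continuous bracket $\mfkDal(U)\times\mfkDal(U)\to \mfkDal(U)\cap\mfkDal(U)=\mfkDal(U)$ satisfying the Jacobi identity. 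The analogous statement for $\mfkDal^\psi$ holds because the bracket of two $\psi$-preserving derivations preserves $\psi$, as noted just before Proposition \ref{prop:SES-psi}.

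Next I would check Property I. Let $U,V\in\CS_n$. By Corollary \ref{cor:semilinear-distance-bound} there is an $r\geq 0$ such that $U^r$ and $V^r$ are transverse and $U\wedge V \simeq U^r\cap V^r$. Since $\mfkDal(U)=\mfkDal(U^r)$ and $\mfkDal(V)=\mfkDal(V^r)$ as subspaces of $\prod_Y\mfkdl^Y$, Corollary \ref{cor:commutator}(ii) gives
\[
[\mfkDal(U),\mfkDal(V)] = [\mfkDal(U^r),\mfkDal(V^r)] \subseteq \mfkDal(U^r\cap V^r) = \mfkDal(U\wedge V),
\]
and the same argument applies verbatim to $\mfkDal^\psi$.

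The main work is the cosheaf property: for every cover $\fU=\{U_i\}_{i\in I}$ of $W\in\CS_n$ in the canonical topology on $\CS_n$, I must show that the sequence
\[
\bigoplus_{i<j}\mfkDal(U_i\wedge U_j)\to \bigoplus_i\mfkDal(U_i)\to \mfkDal(W)\to 0
\]
is exact, and similarly for $\mfkDal^\psi$ when $\psi$ is gapped. Proposition \ref{prop:SES}(i) (resp. \ref{prop:SES-psi}(i)) gives right-exactness for two-element covers, and Proposition \ref{prop:SES}(ii) (resp. \ref{prop:SES-psi}(ii)) gives exactness in the middle for two-element covers, using that the meet in $\CS_n$ is realized by a transverse pair. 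Since $\CS_n$ admits only finite joins, every cover is finite, so I would invoke the referenced Lemma \ref{lma:2-is-sufficient} from the appendix, which encodes the standard inductive reduction of the cosheaf condition for finite covers to the two-set case. The hard part is precisely this reduction, because it requires the two-set splittings $\gamma$ constructed in Propositions \ref{prop:SES} and \ref{prop:SES-psi} to be compatible with iterated refinement; but once one has the right-splittings together with middle exactness for pairs, together with the distributivity of $\CS_n$ (so that $(U_1\vee\cdots\vee U_{k-1})\wedge U_k$ decomposes correctly), the induction goes through. Combining these four facts yields both assertions of the theorem.
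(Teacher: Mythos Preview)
Your proposal is correct and follows essentially the same route as the paper: the paper's proof consists of exactly the sentence ``Putting together Lemma~\ref{lma:2-is-sufficient}, Proposition~\ref{prop:SES}, Corollary~\ref{cor:commutator}, and Proposition~\ref{prop:SES-psi}, we have\ldots'', and you have merely unpacked this. One small correction: Lemma~\ref{lma:2-is-sufficient} is not in an appendix but in the same section as the theorem, and its inductive argument does \emph{not} require any compatibility of the splittings $\gamma$ under iteration---only the exactness of the two-term sequence (right-exactness plus middle exactness) together with distributivity of $\CS_n$; so your parenthetical worry about ``compatible with iterated refinement'' is unnecessary.
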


We may also consider quantum lattice systems on subsets of $\RR^n$.
\begin{definition}
    Let $W\in\CS_n$. A quantum lattice system on $W$ is a quantum lattice system on $\RR^n$ such that the lattice $\Lambda$ is contained in $W^r$ for some $r\geq 0$. 
\end{definition}
Obviously, for a quantum lattice system on $W$, $\mfkDal$ is a local Lie system over $\CSnW$, and for any gapped state $\psi$ of such a system, $\mfkDpal$ is a local Lie system on $\CSnW$.

A much more elementary example of a local Lie system arises from a finite-dimensional Lie algebra $\mfkg$ and any subset $\Lambda\subseteq\RR^n$. For any $U\in\CS_n$ let $\mfkgal(U)$ be the space of bounded functions $\Lambda\ra\mfkg$ which decay superpolynomially away from $U$. The subscript ``al'' stands for ``almost localized''. More precisely,  $\mfkgal(\RR^n)$ is the space of bounded functions $\Lambda\ra\mfkg$, while $\mfkgal(U)$ is defined as a subspace of $\mfkgal(\RR^n)$ consisting of functions $f:\Lambda\ra\mfkg$ such that the following seminorms are finite:
\begin{align}
p_{k,U}(f)=\sup_{x\in \Lambda} |f(x)|(1+d(U,x))^k,\quad k\in\NN.
\end{align}
\begin{prop}
    The assignment $U\mapsto \mfkgal(U)$ is a local Lie system.
\end{prop}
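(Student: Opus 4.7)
The plan is to apply Lemma \ref{lma:2-is-sufficient}: it suffices to establish exactness of
$$\mfkgal(U \wedge V) \xrightarrow{\alpha} \mfkgal(U) \oplus \mfkgal(V) \xrightarrow{\beta} \mfkgal(U \cup V) \to 0$$
for every pair $U, V \in \CS_n$ (using $U \vee V = U \cup V$). The two halves are a partition-style splitting for surjectivity and a transverse-intersection estimate for exactness in the middle.

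For surjectivity of $\beta$ I would copy the splitting from part $i)$ of Proposition \ref{prop:SES}. Given $f \in \mfkgal(U \cup V)$, define $\chi: \Lambda \to \{0, \tfrac{1}{2}, 1\}$ by $\chi(j) = 1$, $\tfrac{1}{2}$, $0$ depending on whether $d(j,U) < d(j,V)$, $d(j,U) = d(j,V)$, or $d(j,U) > d(j,V)$, and set $f_U := \chi f$, $f_V := (1-\chi)f$. Then $f = f_U + f_V$. When $\chi(j) \neq 0$ one has $d(j,U) \leq d(j,V)$ and hence $d(j,U) = d(j,U\cup V)$, which gives $p_{k,U}(f_U) \leq p_{k,U\cup V}(f)$ for every $k$; the symmetric bound shows $f_V \in \mfkgal(V)$, so $\beta(f_U, f_V) = f$.

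For exactness at the middle, suppose $(g, h)$ lies in the kernel of $\beta$, so $g = -h$ as functions on $\Lambda$. By Proposition \ref{lma:semilinear-distance-bound} there is $r \geq 0$ and $C > 0$ such that $U^r$ and $V^r$ are $C$-transverse, and by Corollary \ref{cor:semilinear-distance-bound} we may represent $U \wedge V$ by $U^r \cap V^r$. Since $U \subseteq U^r$ and $V \subseteq V^r$, a straightforward analog of Lemma \ref{lma:seminorm-dominated} shows $g$ lies in $\mfkgal(U^r) \cap \mfkgal(V^r)$ with finite seminorms relative to both sets. Transversality yields
$$1 + d(j, U^r\cap V^r) \leq (C+1)\bigl(1 + \max(d(j, U^r), d(j, V^r))\bigr)$$
for every $j \in \Lambda$. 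Estimating the supremum point-by-point according to which of $d(j, U^r)$, $d(j, V^r)$ is larger gives
$$p_{k, U^r \cap V^r}(g) \leq (C+1)^k \max\bigl(p_{k, U^r}(g), p_{k, V^r}(g)\bigr) < \infty,$$
so $g \in \mfkgal(U \wedge V)$ and $\alpha(g) = (g, -g) = (g, h)$.

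The only nontrivial geometric ingredient is the transverse-intersection inequality, which is exactly what Proposition \ref{lma:semilinear-distance-bound} and Corollary \ref{cor:semilinear-distance-bound} provide; the rest is bookkeeping on the weighted seminorms $p_{k,U}$, parallel to the arguments for Proposition \ref{prop:SES} and Proposition \ref{prop:pullback} in the derivation setting.
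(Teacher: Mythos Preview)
Your proof is correct and follows essentially the same approach as the paper: reduce to the two-set case via Lemma~\ref{lma:2-is-sufficient}, split with the $\{0,\tfrac12,1\}$ cutoff for surjectivity, and use the transversality of $U^r,V^r$ from Proposition~\ref{lma:semilinear-distance-bound} for middle exactness. The only cosmetic difference is that the paper combines your two-step bound (passing from $U,V$ to $U^r,V^r$ and then applying transversality) into a single inequality $d(\bx,U^r\cap V^r)\le C_{UV}\max(d(\bx,U),d(\bx,V))$, yielding the constant $C_{UV}^k$ in place of your $(C+1)^k$.
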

\begin{proof}
It is easy to check that the assignment $U\mapsto\mfkgal(U)$ is a coflasque pre-cosheaf of \Frechet-Lie algebras satisfying Property I. The only thing left to check is that it is a cosheaf of vector spaces.
By Lemma \ref{lma:2-is-sufficient}, it is sufficient to show that for any $U,V\in\CS_n$ the sequence
\begin{align}
    \mfkgal(U\wedge V)\ra \mfkgal(U)\oplus \mfkgal(V)\ra \mfkgal(U\vee V)\ra 0
\end{align}
is exact. To show exactness at $\mfkgal(U\vee V)=\mfkgal(U\cup V)$, we note that every $f\in \mfkgal(U\cup V)$ can be written as a sum $f_U+f_V$, where
\begin{align}
f_U( x )=\begin{cases} f( x ), & d( x ,U)< d( x ,V),\\ \frac{1}{2}f( x ), & d( x ,U)=d( x ,V),\\ 0, & d( x ,U)>d( x ,V), \end{cases}
\end{align}
and $f_V( x )$ is defined by a similar expression with $U$ and $V$ exchanged. Using $d( x ,U\cup V)=\min(d( x ,U),d( x ,V))$ it is easy to check that $p_{k,U}(f_U)\leq p_{k,U\cup V}(f)$ and $p_{k,V}(f_V)\leq p_{k,U\cup V}(f)$ for all $k$, and thus $f_U\in\mfkgal(U)$ and $f_V\in\mfkgal(V)$.

To show exactness at $\mfkgal(U)\oplus\mfkgal(V)$, suppose $f\in \mfkgal(U)\cap\mfkgal(V)$. From the proof of Prop. \ref{prop:semilinear-distance-bound}, there exist $r\geq 0,C_{UV}>0$ such that $d( x ,U^r\cap V^r)\leq C_{UV}\max(d( x ,U),d( x ,V))$. We may assume that $C_{UV}\geq 1$, in which case for any $ x \in\RR^n$ and any $k\in \NN$
\begin{align}
    (1+d(x,U^r\cap V^r))^k\leq C^k_{UV}\left(1+\max(d(x ,U),d(x,V))\right)^k.
\end{align}
Therefore for any $f\in\mfkgal(U)\cap\mfkgal(V)$ we have  
\begin{align}
    p_{k,U^r\cap V^r}(f)\leq C^k_{UV} \max(p_{k,U}(f),p_{k,V}(f)).
\end{align}
Since by Corollary \ref{cor:semilinear-distance-bound} one can represent $U\wedge V$ by $U^r\cap V^r$, we conclude that $f\in\mfkgal(U\wedge V)$.
\end{proof}

\subsection{Symmetries of lattice systems}

If $\Lambda\subseteq\RR^n$ is countable, it can be viewed as a lattice in the physical sense, and the local Lie system $\mfkgal$ over $\CS_n$ models infinitesimal gauge transformations of a lattice system on $\RR^n$. First, we generalize Example \ref{ex:onsiteu1action} by replacing $U(1)$ with an arbitrary compact Lie group. 
\begin{definition}
    An on-site action of a compact Lie group $G$ on a lattice system $(\Lambda,\{V_x\}_{x\in\Lambda})$ is a collection of continuous (and therefore smooth) homomorphisms $\rho_x:G\ra U(V_x)^{even}$ such that the norms of the corresponding Lie algebra homomorphisms $\mfkg\ra B(V_x)$ are bounded uniformly in $x$. 
\end{definition}
An on-site action of $G$ on $(\Lambda,\{V_x\}_{x\in\Lambda})$ gives rise to a homomorphism from $G$ to the automorphism group of $\SAl$ via $g\mapsto\otimes_{x\in\Lambda}{\rm Ad}_{\rho_x(g)}$. 
\begin{lemma}
    An on-site action of $G$ on $\SAl$ extends to an action on $\SAal$. This action is smooth in the sense that the corresponding map $G\times\SAal\ra\SAal$ is smooth with respect to the \Frechet norm. The generator of the action is a homomorphism $\frg\ra\mfkDal(\RR^n)$ which sends $a\in\frg$ to a function o
    \begin{align}\label{eq:QgeneralG}
     Y\mapsto\sum_{x\in Y\cap\Lambda}\derq_x(a)^Y.
\end{align}
where $Y\in\BB_n$ and $\derq_x$ is the traceless part of the generator of $\rho_x$. 
\end{lemma}
\begin{proof}
Since $\SAal$ is the completion of $\SAl$ with respect to the norms (\ref{eqn:SAal-norms}) and any on-site $G$ action preserves these norms, the action clearly extends to $\SAal$.
Let us denote by $\rho$ the corresponding map from $G$ to the space of linear maps $L(\SAal,\SAal)$. To prove the second statement, note that the map $\rho$ is continuous in the strong operator topology. Then it is a standard fact that to prove joint smoothness, it is enough to prove smoothness of the function $g\mapsto\rho(g)(\CA)$ for a fixed $\CA\in\SAal$ \cite{KrieglMichor}. The latter is straightforward, since the derivatives are given by explicit formulas. For example, the 1st derivative is
\begin{equation}
    d\rho(g)(\CA)(a)=\rho(g)(\derQ(a)(\CA)),\quad g\in G,a\in\mfkg,\CA\in\SAal,
\end{equation}
where 
\begin{align}
    \derQ(a)(\CA)=\sum_{x\in\Lambda}[\derq_x(a),\CA].
\end{align}
Since $\rho(g)$ is continuous in strong operator topology, so is its first derivative. The continuity of higher derivatives is proved similarly. Setting $g$ to be the identity of $G$, we find that the generator of the action is given by $\derQ$, and it is easy to see that the corresponding function $\BB_n\ra \mfkdl$ is given by (\ref{eq:QgeneralG}).
\end{proof}

This morphism of \Frechet-Lie algebras can be lifted to a morphism of local Lie systems $\mfkgal\ra\mfkDal$ over $\CS_n$. Indeed, for any $U\in\CS_n$ and any $f\in\mfkgal(U)$ we let $\derQ(f)$ be a derivation of $\SAal$ given by
\begin{align}
    \derQ(f)(\CA)=\sum_{x\in\Lambda}[\derq_x(f(x)),\CA],\quad\CA\in\SAal.
\end{align}
It is easy to check that this derivation belongs to $\mfkDal(U)$
and that the above map is a continuous homomorphism $\mfkgal(U)\ra\mfkDal(U)$. The physical interpretation is that an on-site action of a compact Lie group on a quantum lattice system  can be gauged on the infinitesimal level. 

\begin{definition}
    A state $\psi$ of $\SA$ is said to be invariant under an on-site action of a compact Lie group $G$ if it is invariant under the corresponding automorphisms of $\SA$.
\end{definition}
Let $\psi$ be a gapped state of $\SA$ invariant under an on-site action of a compact Lie group $G$. In that case the image of $\derQ:\mfkg\ra \mfkDal(\RR^n)$ lands in $\mfkDpal(\RR^n)$. One may ask if this morphism of \Frechet-Lie algebras can be lifted to a morphism of local Lie systems $\mfkgal\ra\mfkDpal$. If this is the case, then the symmetry $G$ of $\psi$ can be gauged on the infinitesimal level. In the next section we construct obstructions for the existence of such a morphism of local Lie systems and show that zero-temperature Hall conductance is an example of such an obstruction.

\subsection{Equivariantization}\label{sec:equiv}

As a preliminary step, for any $G$-invariant gapped state $\psi$ we are going to define a graded local Lie system over $\CS_n$ which is a $G$-equivariant version of the local Lie system $\mfkDpal$. Recall that a graded local Lie system is a cosheaf of graded vector spaces that is a pre-cosheaf of graded Lie algebras satisfying the graded analogue of Property I. For example, if $\mfkF$ is a local Lie system and $A=\prod_{k\in\ZZ} A_k$ is a locally finite supercommutative graded algebra with finite-dimensional graded factors $A_k$,\footnote{A graded vector space is locally finite iff its graded components are finite-dimensional.} then $U\mapsto \mfkF(U)\otimes A$ is a graded local Lie system. We denote it $\mfkF\otimes A$.

Fix a compact Lie group $G$ and a distributive pre-ordered set $X$ and consider the category of  graded local Lie systems over $X$ equipped with a $G$-action. An object of this category is a graded local Lie system $\mfkF$ on which $G$ acts by automorphisms; morphisms are defined in an obvious manner. The first step is to define a functor $\mfkF\mapsto \mfkF^G$ from this category to the category of graded local Lie systems over $X$ such that $\mfkF^G(U)$ is the Lie algebra of $G$-invariant elements of $\mfkF(U)$. It is clear how to define such a functor for coflasque pre-cosheaves of Lie algebras with Property I, but the pre-cosheaf $\mfkF^G$ will not be a cosheaf of vector spaces without further assumptions about $\mfkF$ and the $G$-action. 
\begin{definition}
    An action of $G$ on a pre-cosheaf of \Frechet\ spaces  $\mfkF$ is smooth if for each $U\in X$ the seminorms defining the topology of $\mfkF(U)$ can be chosen to be $G$-invariant and the map $G\times\mfkF(U)\ra \mfkF(U)$ defining the action is smooth.
     An action of $G$ on a pre-cosheaf of graded \Frechet\ spaces is smooth if the $G$-action on every graded component is smooth.
\end{definition}

\begin{prop}
    Let $\mfkF$ be a cosheaf of graded \Frechet\  spaces over $X$  equipped with a smooth action of a compact Lie group $G$. The assignment $U\mapsto \mfkF^G(U)= \left(\mfkF(U)\right)^G$ is a cosheaf of graded \Frechet\ spaces. 
\end{prop}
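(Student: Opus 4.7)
The plan is to exploit the compactness of $G$ through Haar averaging. Since the action on $\mfkF(U)$ is smooth and $G$ is compact, for each $U \in X$ and each graded component one obtains a continuous linear projection
\[
\Pi_U : \mfkF(U) \longrightarrow \mfkF^G(U), \qquad \Pi_U(x) = \int_G g\cdot x\, dg,
\]
whose image is exactly $\mfkF^G(U)$. Because the co-restriction maps $\iota_{U,V}$ are $G$-equivariant morphisms of pre-cosheaves, $\Pi$ commutes with them: $\Pi_V \circ \iota_{U,V} = \iota_{U,V} \circ \Pi_U$. I would start by noting that $\mfkF^G(U)$ is a closed subspace of $\mfkF(U)$ (it is the joint equalizer of continuous maps $g \mapsto g\cdot(-)$ and the identity, or equivalently the image of the continuous idempotent $\Pi_U$), hence inherits a Fréchet structure with $G$-invariant defining seminorms.

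Next I would apply Lemma \ref{lma:2-is-sufficient} to reduce the cosheaf condition to checking, degreewise, that for every $U,V\in X$ the sequence
\[
\mfkF^G(U\wedge V) \xrightarrow{\alpha} \mfkF^G(U)\oplus \mfkF^G(V) \xrightarrow{\beta} \mfkF^G(U\vee V) \to 0
\]
is exact. Right-exactness (surjectivity of $\beta$): given $x\in \mfkF^G(U\vee V)$, the cosheaf property of $\mfkF$ yields $a\in\mfkF(U)$ and $b\in\mfkF(V)$ with $\iota_{U,U\vee V}(a)+\iota_{V,U\vee V}(b)=x$. Applying $\Pi_{U\vee V}$ and using $G$-invariance of $x$ and equivariance of the co-restriction maps gives
\[
x = \Pi_{U\vee V}(x) = \iota_{U,U\vee V}(\Pi_U a) + \iota_{V,U\vee V}(\Pi_V b),
\]
with $\Pi_U a\in\mfkF^G(U)$ and $\Pi_V b\in\mfkF^G(V)$.

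For exactness in the middle, suppose $(a,b)\in\mfkF^G(U)\oplus\mfkF^G(V)$ satisfies $\iota_{U,U\vee V}(a)+\iota_{V,U\vee V}(b)=0$. By the cosheaf property of $\mfkF$ there is $c\in\mfkF(U\wedge V)$ with $\iota_{U\wedge V,U}(c)=a$ and $\iota_{U\wedge V,V}(c)=-b$; applying $\Pi_{U\wedge V}$ and invoking equivariance together with the fact that $a$ and $b$ are already invariant yields $\iota_{U\wedge V,U}(\Pi_{U\wedge V}c)=a$ and $\iota_{U\wedge V,V}(\Pi_{U\wedge V}c)=-b$, so $\Pi_{U\wedge V}c\in\mfkF^G(U\wedge V)$ is the required preimage. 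Injectivity of $\alpha$ follows from the injectivity of $\iota_{U\wedge V,U}$ on $\mfkF(U\wedge V)\supseteq\mfkF^G(U\wedge V)$, which holds because $\mfkF$ is coflasque as a cosheaf of (graded) vector spaces.

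The only genuine subtlety — and what I expect to be the main point requiring care — is the bookkeeping that makes the averaging operator $\Pi_U$ well-defined and continuous. This is where the smoothness hypothesis is used: smoothness of $G\times \mfkF(U)\to \mfkF(U)$ together with $G$-invariance of the defining seminorms ensures that the Haar integral $\int_G g\cdot x\, dg$ converges in $\mfkF(U)$ and depends continuously on $x$. Once that is in place, everything else is a formal chase through the exact sequence, applied componentwise in the grading.
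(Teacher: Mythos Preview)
Your proof is correct and follows essentially the same approach as the paper: both construct the Haar-averaging projection $\Pi_U$ onto $G$-invariants, note that it commutes with co-restriction, and push the cosheaf exact sequence for $\mfkF$ through $\Pi$ to obtain the sequence for $\mfkF^G$. The paper's argument is terser (it writes out only right-exactness and says ``Exactness in the middle term is proved similarly''), whereas you spell out both steps and add the discussion of the Fr\'echet structure; your final remark on injectivity of $\alpha$ via coflasqueness is not needed for the cosheaf condition and invokes a hypothesis not stated in the proposition, but this is harmless since the result is only applied to local Lie algebras.
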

\begin{proof}
     We need to show that for any $U,V\in X$ the sequence 
    \begin{align}
    \mfkF^G(U \wedge V) \xrightarrow{\alpha} \mfkF^G(U)\oplus \mfkF^G(V) \xrightarrow{\beta} \mfkF^G(U \vee V) \to 0,    
    \end{align}
    is exact. To show exactness at the rightmost term, let $\derF$ be a $G$-invariant element of $\mfkF(U\vee V)$ and let $\derF_U\in\mfkF(U)$ and $\derF_V\in\mfkF(V)$ be such that $\iota_{U\vee V,U}\derF_U+\iota_{U\vee V,V}\derF_V=\derF$. Averaging the action map $G\times\mfkF(U)$ over $G$ with the Haar measure gives a linear map $h_U:\mfkF(U)\ra\mfkF^G(U)$ which is identity when restricted to  $\mfkF^G(U)$. The co-restriction morphisms intertwine these maps. Thus $\iota_{U\vee V,U}\circ h_U(\derF_U)+\iota_{U\vee V,V}\circ h_V(\derF_V)=\derF$ which proves that $\beta$ is surjective. Exactness in the middle term is proved similarly.
\end{proof}
\begin{remark}\label{rem:smooth}
    For the proof to go through, it suffices to require the map $G\times\mfkF(U)\ra \mfkF(U)$ to be continuous. However, if it is smooth, $\mfkF(U)$ becomes a $\frg$-module and all elements in  $\mfkF^G(U)\subseteq \mfkF(U)$ are annihilated by the $\frg$-action. We will use this later on. 
\end{remark}

\begin{example} \label{ex:mfkDpalG}
If $G$ acts on-site on a lattice system, the action of $G$ on the local Lie system $\mfkDal$ is smooth. If $\psi$ is a $G$-invariant gapped state of such a lattice system, the action of $G$ on $\mfkDpal$ is smooth. 
\end{example}
\begin{example}\label{ex:mfkgalG}
Consider the local Lie system $\mfkgal$ over $\CS_n$  associated to a finite-dimensional Lie algebra $\mfkg$ (see Section \ref{sec:semilinearsets}). Assume that $\mfkg$ is the Lie algebra of a compact Lie group $G$, then there is an obvious $G$-action on $\mfkgal$:
\begin{align}
    (g\cdot f)(x)={\rm Ad}_g f(x),\quad g\in G,x\in\Lambda.
\end{align}
This $G$-action is smooth. 
\end{example}
\begin{example} If $\mfkF$ is a local \Frechet-Lie system with a smooth $G$-action and $A=\prod_{k\in\ZZ} A_k$ is a locally-finite supercommutative graded algebra on which $G$ acts by automorphisms, then the $G$-action on $\mfkF\otimes A$ is smooth. 
\end{example}

\begin{corollary}\label{cor:Ginv}
    Let $\mfkF$ be a graded local \Frechet-Lie system over $X$ with a smooth $G$-action. The functor of $G$-invariant elements maps $\mfkF$ to a graded local \Frechet\-Lie system $\mfkF^G$ over $X$.
\end{corollary}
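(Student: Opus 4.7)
The plan is to reduce everything to the ungraded case by working component-wise, and then to observe that the corollary follows by combining the previous proposition (which already supplies the cosheaf-of-Fr\'echet-spaces structure on $\mfkF^G$) with three routine verifications: coflasqueness, the Lie algebra structure on co-restrictions, and Property I. Since the graded pieces of $\mfkF$ are each $G$-modules preserved by all the structural data, and the graded Lie bracket is bilinear, it suffices to treat each degree separately and then reassemble.

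First, I would check the Lie algebra structure. Since $G$ acts on $\mfkF$ by Lie algebra automorphisms, each co-restriction $\iota_{U,V}:\mfkF(U)\to\mfkF(V)$ (for $U\leq V$) is $G$-equivariant, and hence restricts to a continuous linear map $\mfkF^G(U)\to\mfkF^G(V)$. These restricted maps are still Lie algebra homomorphisms and still compose correctly, so $U\mapsto\mfkF^G(U)$ is a pre-cosheaf of (graded) Lie algebras. Coflasqueness is immediate: the restriction of an injection to a subspace is injective. The Fr\'echet structure on $\mfkF^G(U)$ is inherited as a closed subspace: with $G$-invariant seminorms defining the topology on $\mfkF(U)$ and the action continuous, the fixed-point set $\mfkF^G(U)=\bigcap_{g\in G}\ker\bigl(\id-g\bigr)$ is closed and hence Fr\'echet.

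Second, Property I for $\mfkF^G$ is inherited from $\mfkF$: for $a\in\mfkF^G(U)$ and $b\in\mfkF^G(V)$, Property I gives $[a,b]\in\mfkF(U\wedge V)$, and since the bracket of two $G$-invariant elements is $G$-invariant (because $G$ acts by automorphisms), $[a,b]$ lies in $\mfkF^G(U\wedge V)=\mfkF(U\wedge V)\cap\mfkF(U)^G$.

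Third, the cosheaf-of-vector-spaces condition on $\mfkF^G$ is precisely the content of the proposition immediately preceding the corollary, applied to each graded component (which carries a smooth $G$-action by hypothesis). Putting these pieces together shows that $\mfkF^G$ is a graded local Fr\'echet-Lie algebra over $X$, and functoriality in $\mfkF$ is clear since any $G$-equivariant morphism of graded local Lie algebras restricts on $G$-invariants. There is no serious obstacle here; the only point needing mild care is confirming that the Fr\'echet topology on $\mfkF^G(U)$ coincides with the subspace topology, which follows from the existence of $G$-invariant defining seminorms provided by the smoothness assumption.
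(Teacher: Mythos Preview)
Your proposal is correct and matches the paper's approach: the paper states this as a corollary with no proof, relying on the preceding proposition for the cosheaf-of-Fr\'echet-spaces property and leaving coflasqueness, the Lie structure, and Property I as immediate verifications, which is exactly what you have spelled out.
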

\begin{definition}
    Let $\mfkF$ be a local \Frechet-Lie system over $X$ with a smooth $G$-action. The $G$-equivariantization functor sends $\mfkF$ to the negatively-graded local \Frechet-Lie system $\mfkF^{\bf G}$ defined by 
    \begin{align}
        U\mapsto \left(\mfkF(U)\otimes\prod_{k=1}^\infty \Sym^k(\frg^*[-2])\right)^G.
    \end{align}
\end{definition}
In the cases of interest to us, the $G$-action on a local Lie system over $\CS_n$ is infinitesimally inner, in the sense that the $\frg$-module structure mentioned in Remark \ref{rem:smooth} arises from a homomorphism $\rho:\frg\ra \mfkF(\RR^n)$. In such a case, the graded local Lie system $\mfkF^{\bf G}$ has an extra bit of structure: a central element in $\mfkF^{\bf G}(\RR^n)$ of degree $-2$. This element is simply $\rho$ re-interpreted as an element of $\mfkF(\RR^n)\otimes\frg^*[-2]$. In the terminology of Section \ref{sec:locality}, $\mfkF^{\bf G}$ is a pointed graded local \Frechet-Lie system over $\CS_n$. 
It is easy to see that the $G$-equivariantization functor respects this extra structure. That is, if $f:\mfkF\ra \mfkF'$ is a morphism of local \Frechet-Lie system over $\CS_n$ commuting with infinitesimally inner smooth $G$-actions on $\mfkF$ and $\mfkF'$, then $f^{\bf G}$ maps the central element $\rho\in \mfkF^{\bf G}(\RR^n)_{-2}$ to the central element $\rho'\in{\mfkF'}^{\bf G}(\RR^n)_{-2}$.

\begin{example}
    Let $\psi$ be a $G$-invariant gapped state of a quantum lattice system with an on-site $G$-action which on the infinitesimal level is described by $\derQ:\mfkg\ra \mfkDpal(\RR^n)$. Consider the local Lie system $\mfkDpal$ with its smooth $G$-action (Example \ref{ex:mfkDpalG}) and its $G$-equivariantization $\mfkDpalG$. The distinguished central element of $\mfkDpalG(\RR^n)$ is $\derQ$ regarded as an element of $\mfkDpal(\RR^n)\otimes \frg^*[-2]$.
\end{example}
\begin{example}\label{ex:mfkgalGpointed}
    Consider the local Lie system $\mfkgal$ of Example \ref{ex:mfkgalG}. The degree $-2$ component of $\mfkgalG(\RR^n)$ is the space of $G$-invariant bounded functions on $\Lambda$ with values in $\frg\otimes\frg^*$. The distinguished central element is the constant function on $\Lambda$ which takes the value ${\rm id}_\frg$.
\end{example}

Armed with the equivariantization functor, we can now explain our strategy for constructing obstructions for the existence of a local Lie system morphism $\mfkgal\ra \mfkDpal$ which lifts the \Frechet-Lie algebra morphism $\derQ:\mfkg\ra\mfkDpal(\RR^n)$. Suppose such a morphism $\rho$ exists. Applying the $G$-equivariantization functor, we get a morphism of pointed negatively-graded local \Frechet-Lie systems $\rho^{\bf G}:\mfkgalG\ra \mfkDpalG$. For any $\CS_n$-cover $\fU$ of $\RR^n$ an application of the \Cech\ functor gives a morphism of acyclic pointed DGLAs 
\begin{equation}
C^{aug}_{\bullet+1}(\fU,\RR^n;\mfkgalG)\ra C^{aug}_{\bullet+1}(\fU,\RR^n;\mfkDpalG).
\end{equation}
Consequently, an  obstruction for the existence of such a pointed DGLA morphism is an obstruction for the existence of $\rho$. In the next section we use the twisted Maurer-Cartan equation for pointed DGLAs to construct such obstructions and identify them as topological invariants of gapped states defined in \cite{LocalNoether}.

\section{Topological invariants of $G$-invariant gapped states}\label{sec:invariants}

In this section we use the twisted Maurer-Cartan equation for a pointed DGLA to construct topological invariants of gapped $G$-invariant states. For states of quantum lattice systems on $\RR^n$, these invariants take values in $G$-invariant polynomials on the Lie algebra of $G$. We show that the Hall conductance of a $U(1)$-invariant gapped lattice system on $\RR^2$ is a special case of this construction. We also discuss how to extend the theory to quantum lattice systems on a polyhedral subset of $\RR^n$. In that case, topological invariants take values in $G$-invariant polynomials on the Lie algebra of $G$ valued in the cohomology of the corresponding spherical polyhedral subset of $S^{n-1}$ (``the sphere at infinity'').

\subsection{The commutator class}\label{sec:derivationvaluedinvariants}

Let $(\mfkM,\derB)$ be a pointed DGLA, i.e. a DGLA with a distinguished central cycle $\derB\in\mfkM_{-2}$. Assume it is a limit of an inverse system of nilpotent pointed DGLAs $(\mfkM_N,\derB_N)$, $N\in\NN$. 
\begin{definition}
    The commutator DGLA, denoted by $\clcomm$, is defined to be the closure of the commutator subalgebra of $\mfkM$. Namely, $\derq \in \clcomm$ if and only if for any $N$ its projection to $\mfkM_N$ is a finite linear combination of commutators in $\mfkM_N$. 
\end{definition}

Even if $\mfkM$ is acyclic, the DGLA $\clcomm$ is not necessarily acyclic. We would like to construct an obstruction to finding an element $\derp\in\mfkM_{-1}$ which satisfies $\partial\derp=\derB$ and $[\derp,\derp]=0$. 

\begin{definition}
    Let $(\mfkM,\derB)$ be a pointed DGLA. A $\derB$-twisted Maurer-Cartan element in $\mfkM$ is $\derp\in\mfkM_{-1}$ which satisfies
    \begin{align}
        \partial \derp +\frac12[\derp,\derp]=\derB.
    \end{align}
\end{definition}
We will denote the set of $\derB$-twisted MC elements of $\mfkM$ by $\MC(\mfkM,\derB)$. The map $(\mfkM,\derB)\mapsto \MC(\mfkM,\derB)$ can be upgraded to a functor from the category of pointed DGLAs to the category of sets in an obvious way. 

Let $(\mfkM,\derB)$ be a pronilpotent pointed DGLA and $\derp\in \MC(\mfkM,\derB)$. Then  $[\derp,\derp]$ is a cycle of the DGLA $\clcomm$. 
  
\begin{prop}\label{prop:MCmain}
    Let $(\mfkM,\derB)$ be an acyclic pointed DGLA which is a limit of an inverse system of nilpotent acyclic pointed DGLAs $(\mfkM_N,\derB_N)$, $N\in\NN$. Assume further that the structure morphisms $r_{N,N-1}:\mfkM_N\ra\mfkM_{N-1}$ are surjective and $\fj_N=\ker r_{N,N-1}$ is central in $\mfkM_N$. Finally, assume $[\mfkM_1, \mfkM_1]=0$. Then $\MC(\mfkM,\derB)$ is non-empty. Furthermore, the homology class of $[\derp,\derp]$ in $\clcomm$ for $\derp\in\MC(\mfkM,\derB)$ is independent of the choice of $\derp$. 
\end{prop}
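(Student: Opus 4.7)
The plan is to build a compatible sequence $\derp_N \in \MC(\mfkM_N, \derB_N)$ by induction on $N$ and set $\derp := \varprojlim \derp_N$. For $N=1$ the hypothesis $[\mfkM_1, \mfkM_1]=0$ reduces the twisted Maurer--Cartan equation to $\partial \derp_1 = \derB_1$, which is solvable by acyclicity of $\mfkM_1$ since $\derB_1$ is a cycle. For the inductive step, pick any lift $\tilde{\derp}_N \in \mfkM_{N,-1}$ of $\derp_{N-1}$ and define
\[
    \mathcal{O} := \partial \tilde{\derp}_N + \tfrac12[\tilde{\derp}_N, \tilde{\derp}_N] - \derB_N \in \fj_N.
\]
Centrality of $\derB_N$ and of $\mathcal{O}$ (which lies in the central ideal $\fj_N$), together with the graded Jacobi identity $[[\tilde{\derp}_N, \tilde{\derp}_N], \tilde{\derp}_N] = 0$ for odd-degree elements in characteristic zero, forces $\partial \mathcal{O} = 0$. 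The short exact sequence $0 \to \fj_N \to \mfkM_N \to \mfkM_{N-1} \to 0$ together with acyclicity of $\mfkM_N$ and $\mfkM_{N-1}$ makes $\fj_N$ acyclic (long exact sequence), so $\mathcal{O} = \partial \derk$ for some $\derk \in \fj_{N,-1}$. Then $\derp_N := \tilde{\derp}_N - \derk$ is Maurer--Cartan and projects to $\derp_{N-1}$.

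\textbf{Independence via gauge equivalence.} The plan is to produce $\derh \in \mfkM_0$ and a smooth family $\derp(t) \in \MC(\mfkM, \derB)$, $t \in [0,1]$, with $\derp(0) = \derp$, $\derp(1) = \derp'$, obeying the infinitesimal gauge flow $\dot{\derp}(t) = [\derh, \derp(t)] - \partial \derh$. A direct calculation using graded Jacobi and centrality of $\derB$ confirms this flow preserves the twisted MC condition. Integrating yields
\[
    \derp' - \derp = \bigl[\derh,\, \textstyle\int_0^1 \derp(t)\, dt\bigr] - \partial \derh,
\]
so $\tilde{\eta} := \bigl[\derh, \int_0^1 \derp(t)\, dt\bigr]$ is a single commutator, hence an element of $[\mfkM, \mfkM]_{-1}$. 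Since $\partial \derp = \derB - \tfrac12 [\derp, \derp]$ and similarly for $\derp'$, we obtain $\partial \tilde{\eta} = \partial(\derp' - \derp) = \tfrac12([\derp, \derp] - [\derp', \derp'])$. Thus $[\derp, \derp] - [\derp', \derp'] = \partial(2 \tilde{\eta}) \in \partial\bigl([\mfkM, \mfkM]_{-1}\bigr)$, which is the claim.

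\textbf{Constructing $\derh$.} Build $\derh = \varprojlim \derh_N$ inductively. At $N = 1$, the vanishing $[\mfkM_1, \mfkM_1] = 0$ collapses the gauge action to $\derp \mapsto \derp - \partial \derh$, and $\partial(\derp'_1 - \derp_1) = 0$ combined with acyclicity of $\mfkM_1$ yields the required $\derh_1$. For the step, lift $\derh_{N-1}$ arbitrarily to some $\tilde{\derh}_N \in \mfkM_{N,0}$ (surjectivity of $r_{N,N-1}$), let $\derp''_N$ be the gauge transform of $\derp_N$ by $\tilde{\derh}_N$ in the nilpotent DGLA $\mfkM_N$ (a finite sum), and put $\delta := \derp''_N - \derp'_N \in \fj_N$. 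Since $\fj_N$ is central, $[\delta, \derp''_N + \derp'_N] = 0$, so subtracting the MC equations for $\derp''_N$ and $\derp'_N$ forces $\partial \delta = 0$. Acyclicity of $\fj_N$ produces $\derk \in \fj_{N,0}$ with $\partial \derk = \delta$; centrality of $\fj_N$ then makes the gauge action of $\tilde{\derh}_N + \derk$ equal to the gauge action of $\tilde{\derh}_N$ further translated by $-\partial \derk$. Thus $\derh_N := \tilde{\derh}_N + \derk$ transports $\derp_N$ to $\derp'_N$ and projects to $\derh_{N-1}$.

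\textbf{Main obstacle.} The most delicate piece is the inductive gauge-equivalence construction, where one must verify at each level that the gauge action interacts correctly with the central ideal $\fj_N$, and that the path integral $\int_0^1 \derp(t)\, dt$ and its commutator with $\derh$ descend to the closure $[\mfkM, \mfkM]$ as defined via projections to the $\mfkM_N$. The existence argument is much more mechanical by comparison, reducing to the cocycle check $\partial \mathcal{O} = 0$ together with acyclicity of $\fj_N$.
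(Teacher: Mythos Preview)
Your proof is correct and follows essentially the same route as the paper. Both arguments build $\derp$ by inductive obstruction theory in the tower $(\mfkM_N)$, then show any two MC elements are gauge equivalent by an inductively constructed $\derh=\varprojlim\derh_N$, and finally observe that $\derp'-\derp+\partial\derh$ lies in $[\mfkM,\mfkM]$ so that $[\derp,\derp]-[\derp',\derp']=2\partial(\derp'-\derp)$ is exact there. The only cosmetic difference is that you write the gauge formula as an integrated flow $\derp'-\derp=[\derh,\int_0^1\derp(t)\,dt]-\partial\derh$, whereas the paper expands the exponential series $\derp'-\derp=-\partial\derh+\sum_{k\geq 1}\tfrac{\ad_\derh^k\derp}{k!}-\sum_{k\geq 1}\tfrac{\ad_\derh^k\partial\derh}{(k+1)!}$; these are the same expression (your $\tilde\eta$ equals the paper's $\derf$), and both are well-defined at each nilpotent level as polynomials. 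Your inductive construction of $\derh_N$ is the content of the paper's Lemmas~B.4--B.6 combined, and the key identity you invoke---that adding a central $\derk\in\fj_{N,0}$ to $\tilde\derh_N$ shifts the gauge action by $-\partial\derk$---is exactly Lemma~B.4 (cited from Goldman--Millson).
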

A proof of this result can be found in Appendix \ref{sec:deformation}. It uses some results from deformation theory. We will call the homology class of $[\derp,\derp]$ the commutator class of the acyclic pointed DGLA $(\mfkM,\derB)$, for lack of a better name, and denote it $\com(\mfkM,\derB)$.
The assignment $(\mfkM,\derB)\mapsto (H_\bullet(\clcomm),\com(\mfkM,\derB))$ is a functor from the full sub-category of the category of acyclic pointed DGLAs satisfying the conditions of Proposition \ref{prop:MCmain}, to the category $\pVectZ$ of pointed graded vector spaces.

The conditions of Prop. \ref{prop:MCmain} apply to any pointed DGLA which is the value of the \Cech\ functor on a graded local Lie system $\mfkF^{\bf G}$ over some $(X,\leq)$, where $\mfkF$ is a local Lie system equipped with a smooth infinitesimally inner $G$-action. Indeed, along with the graded algebra $A=\prod_{k=1}^\infty \Sym^k(\frg^*[-2])$ used in the construction of $\mfkF^{\bf G}$ one can consider its quotient by the ideal $J_N=\prod_{k=N+1}^\infty \Sym^k(\frg^*[-2])$. Replacing $A$ with the nilpotent graded algebra $A/J_N$ throughout, for any cover $\fU$ of the terminal object $T$ one gets a sequence of nilpotent acyclic pointed DGLAs labeled by $N\in\NN$. They assemble into an inverse system in an obvious manner, and its limit is the acyclic pointed DGLA $C^{aug}_{\bullet+1}(\fU,T;\mfkF^{\bf G}).$ It is easy to see that the remaining conditions of Prop. \ref{prop:MCmain} are also satisfied. In particular, Prop. \ref{prop:MCmain} applies to the pointed DGLAs associated to the graded local Lie systems $\mfkDpalG$ and $\mfkgalG$ and any CS cover of $\RR^n$. 

\begin{prop}\label{prop:MC_functoriality}
    For any pointed DGLA $(\mfkM(\fU),\derB)$ obtained,  as above, from the \Cech\ functor with respect to a cover $\fU$, $\MC$ is functorial in $\fU$.
\end{prop}
\begin{proof}
    Let $\fV=\{V_j\}_{j\in J}$ be a refinement of $\fU$. By definition, there exists a map $\phi: J\rightarrow I$ with $V_j\leq U_{\phi(j)}.$ According to Prop. \ref{prop:DGLA_functoriality}, there is a map from $\mfkM(\fV)$ to $\mfkM(\fU)$. As $\derB$ is unaffected by $\phi_*$, we deduce from $\derp^\fV \in \MC(\mfkM(\fV),\derB)$ that $\phi_*\derp^\fV \in \MC(\mfkM(\fU),\derB)$.
\end{proof}
\begin{prop}\label{prop:trivialcomclass}
    Let $G$ be a compact Lie group, $\mfkg$ be its Lie algebra, and $\mfkgalG$ be the pointed graded local Lie system over $\CS_n$ of Example   (\ref{ex:mfkgalGpointed}). 
    For any cover $\fU=\{U_i\}_{i\in I}$ the commutator class vanishes.
\end{prop}
\begin{proof}
    The distinguished central cycle in $C^{aug}_{\bullet+1}(\fU,\RR^n,\mfkgalG)$ is the constant function on $\Lambda$ with value  $ {\rm id}_\frg$ . Here ${\rm id}_\frg$ is regarded as a $G$-invariant element of $\mfkgal(\RR^n)\otimes\frg^*[-2]$. For any cover $\fU=\{U_i\}_{i\in I}$ one can construct a twisted MC element $\derq$ as follows. Pick $r>0$ large enough so that the interiors of $U_i^r$, $i\in I$, cover $\RR^n$ in the usual sense and pick a partition of unity $\chi_i$, $i\in I$, subordinate to this open cover. For any $x\in \Lambda$ and any $i\in I$ let $\derq_i(x)=\chi_i(x) {\rm id}_\frg $. It is easy to verify that this is a twisted MC element satisfying $[\derq,\derq]=0$. Thus, the commutator class vanishes.
\end{proof}
\begin{corollary}
    Let $G$ be a compact Lie group, $\psi$ be a gapped $G$-invariant state of a lattice system on $\RR^n$, and $\fU$ be a $\CS_n$-cover of $\RR^n$. The commutator class of the acyclic pointed DGLA $(C^{aug}_{\bullet+1}(\fU,\RR^n;\mfkDpalG),\derQ)$ is an obstruction for the existence of a morphism of local Lie systems $\rho:\mfkgal\ra \mfkDpal$ which is a lift of the Lie algebra morphism $\derQ:\mfkg\ra \mfkDpal(\RR^n)$.
\end{corollary}
\begin{proof}
If $\rho$ exists, it induces a morphism of pointed DGLAs $C^{aug}_{\bullet+1}(\fU,\RR^n;\mfkgalG)\ra C^{aug}_{\bullet+1}(\fU,\RR^n;\mfkDpalG)$ which in turn induces a morphism in the category $\pVectZ$ which maps the commutator class of $\mfkgal$ to the commutator class of $\mfkDpalG$. Since the former class vanishes, so must the latter.
\end{proof}

\subsection{Construction of topological invariants}\label{sec:construction}

The commutator class defined in the previous section is not a useful invariant of a gapped $G$-invariant state $\psi$ because it takes values in a set which itself depends on $\psi$ and the choice of the cover. It is also not invariant under LGAs (defined in Section \ref{sec:LGAs}) and thus is not an invariant of a gapped phase (see Remark \ref{rem:LGAsandphases}). In this section we define a pairing between the commutator classes of $\mfkDpalG$ and spherical CS cohomology classes of the sphere at infinity $S^{n-1}$, which takes values in the algebra of $G$-invariant symmetric polynomials on $\frg$. This gives a useful invariant of a gapped phase which is also an obstruction to promoting the global symmetry $G$ of the state  to a local symmetry. Additionally, we will show that the invariant  does not depend on the choice of the cover $\fU$ and is controlled by the large-scale topology. 

Keeping in view a generalization to quantum lattice systems on general CS sets in $\RR^n$, we work over a sub-site $\CSnW$\footnote{This is the so-called overcategory whose objects are equipped with a morphism to $W$. Due to coflasqueness, this amounts to a restriction to objects fuzzily included in $W$.} which depends on an arbitrary $W\in \CS_n$. Let ${\widehat W}\in \SCS_n$ be the image of $W$ under the equivalence between $\CS_n$ and $\SCS_n$. For any cover $\fU$ of $W$ we denote by $\hat\fU$ the corresponding cover of $\widehat W$.

\begin{definition}
    Let $\mfkF$ be a local Lie system over $\CSnW$ and let $p\geq 0$. For any $\fU=\{U_i\}_{i\in I}$  covering $W\in \CS_n$, any $\derf\in C_p(\fU,W;\mfkF)$, and any $\beta\in \check{C}^p(\hat\fU,\widehat W;\RR)$ define an evaluation 
    \begin{align}
        \langle \derf,\beta\rangle=\sum_{s\in I^{p+1}}\beta_s \derf_s\in \mfkF(W),
    \end{align}
 We adopt the convention that $\beta_s=0$ for $\bigwedge_{j\in s}U_j$ bounded. The definition implicitly uses co-restriction of the coflasque cosheaf $\mfkF$.
\end{definition}
Let $\mfkF$ be a local Lie system over $\CSnW$. By analogy with Def. \ref{def:anchor}, we will say that $\dera\in\mfkF(W)$ is {\it anchored} if $\dera$ belongs to the sub-algebra $\mfkF(\{0\})$ (or equivalently, to $\mfkF(U)$ for some bounded $U$). 

\begin{lemma} \label{lma:cocycle_inner}
    Let $\mfkF$ be a local Lie system over $\CSnW$ and $p\geq 0$. For any $\fU$  covering $W\in \CS_n$, any cycle  $\derf\in C^{aug}_p(\fU,W;\mfkF)$, and any cocycle $\beta\in \check{C}^p(\hat\fU,\widehat W;\RR)$ the element $\langle \derf,\beta\rangle\in\mfkF(W)$ is anchored. 
\end{lemma}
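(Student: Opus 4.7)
The plan is to combine the acyclicity of the augmented \v{C}ech complex (Lemma \ref{lma:coflasqueacyclic}) with a standard adjunction between the \v{C}ech boundary on chains and the coboundary on cochains; after applying the adjunction, the cocycle condition on $\beta$ together with the definition of inner derivations will finish the argument.

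I would first interpret $\derf$ as a cycle of the augmented \v{C}ech complex $C^{aug}_\bullet(\fU,W;\mfkDal)$ (for $p\geq 1$ this coincides with being a cycle of $C_\bullet$, and for $p=0$ it amounts to the augmented condition $\sum_i \derf_i=0$ in $\mfkDal(W)$ after co-restriction). By Lemma \ref{lma:coflasqueacyclic} this complex is acyclic, so there exists $\derg\in C_{p+1}(\fU,W;\mfkDal)$ with $\derf=\partial\derg$. A direct rearrangement of the formulas for $\partial$ and for the evaluation $\langle\cdot,\cdot\rangle$, keeping the convention that $\beta_s=0$ whenever $\bigwedge_{j\in s}U_j$ is bounded, then yields the adjunction identity
\begin{align*}
    \langle \derf,\beta\rangle \;=\; \langle\partial\derg,\beta\rangle \;=\; \sum_{t\in I_{p+1}}\Bigl(\sum_{k=0}^{p+1}(-1)^k\beta_{\partial_k t}\Bigr)\,\derg_t
\end{align*}
in $\mfkDal(W)$, where $\partial_k t$ denotes the face of $t$ obtained by deleting its $k$-th vertex and each $\derg_t$ is understood after co-restriction into $\mfkDal(W)$.

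The rest of the proof is a case analysis on $t\in I_{p+1}$. If $\bigwedge_{j\in t}U_j$ is unbounded, then every face meet $\bigwedge_{j\in\partial_k t}U_j$ is also unbounded (removing vertices can only enlarge the meet), so the inner sum equals the spherical \v{C}ech coboundary of $\beta$ evaluated at $t$, which vanishes by the cocycle condition on $\beta$. If instead $\bigwedge_{j\in t}U_j$ is bounded, then $\derg_t\in\mfkDal(\bigwedge_{j\in t}U_j)=\mfkdal(\bigwedge_{j\in t}U_j)$ is inner by Definition \ref{def:inner}, and scaling by the real coefficient preserves innerness. Since covers in the site $\CS_n$ are finite, $\langle\derf,\beta\rangle$ is a finite sum of inner derivations and zeros; a finite sum of inner derivations is itself inner because a finite union of bounded sets is bounded.

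The main technical point is the verification of the adjunction identity in the second paragraph, which is conceptually routine but demands careful tracking of the \v{C}ech signs and consistent handling of the boundedness convention on $\beta$. Once that bookkeeping is in hand, the remainder of the argument reduces to the dichotomy outlined above.
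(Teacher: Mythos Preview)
Your proof is correct and follows essentially the same route as the paper's: write $\derf=\partial\derg$ by acyclicity, apply the adjunction $\langle\partial\derg,\beta\rangle=\langle\derg,\partial^*\beta\rangle$, and then split the sum over $(p{+}1)$-tuples according to whether the meet is bounded (giving inner $\derg_t$) or unbounded (giving $(\check\delta\beta)_t=0$). Your explicit remark that for $p=0$ one must read ``cycle'' in the augmented sense is in fact more careful than the paper, which silently uses the augmented complex when invoking $\derf=\partial\derg$.
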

\begin{proof}
The augmented chain complex 
\begin{equation}\label{eq:complex}
\ldots \ra {\rm Alt}\left[\bigoplus_{i,j\in I} \mfkF(U_i\wedge U_j)\right]\ra \bigoplus_{i\in I}\mfkF(U_i)\ra \mfkF(W) \ra 0 
\end{equation} is acyclic. Since $\derf$ is a cycle, $\derf = \partial \derg$ for some $\derg\in {\rm Alt}\left[\bigoplus_{s\in I^{p+2}} \mfkF(\bigwedge_{j\in s} U_j)\right].$ Thus
\begin{align}
    \langle \derf,\beta\rangle=&\langle \partial\derg,\beta\rangle\notag\\
    =&\langle \derg, \partial^* \beta\rangle\notag\\
    =&\sum_{s\in I^{p+2}} \derg_s (\partial^* \beta)_s\notag\\
    =&\sum_{\substack{s\in I^{p+2},\ \mathrm{with}\\ \bigwedge_{j\in s}U_j\ \mathrm{ bounded}}} \derg_s (\partial^* \beta)_s,
\end{align}
where $\partial^*$ is the adjoint of $\partial$ defined by
\begin{align}(\partial^* \beta)_{i_0\dots i_{p+1}}=\sum_{k=0}^{p+1} (-1)^{k} \beta_{i_0\dots\widehat{i_k}\dots i_{p+1}}.
\end{align}
The last expression is clearly anchored because each $\derg_s$ is almost localized near some bounded region. The last equality depends on the vanishing of 
\begin{align}
\sum_{\substack{s\in I^{p+2},\ \mathrm{with}\\ \bigwedge_{j\in s}U_j\ \mathrm{ unbounded}}} \derg_s (\partial^* \beta)_s,
\end{align} 
because when $\bigwedge_{j\in s}U_j$ is unbounded $(\partial^* \beta)_s=(\check{\delta} \beta)_s = 0$ where $\check{\delta}$ is the coboundary of $\check{C}^p(\hat\fU,\widehat W;\RR)$. 
\end{proof}
\begin{remark}
The relation between $\partial^*$ and $\check{\delta}$ is as follows. They are equal for $s\in I^{p+2}$ with $\bigwedge_{j\in s}U_j$ unbounded. For $s$ with $\bigwedge_{j\in s}U_j$ bounded, $\bigcap_{j\in s}\hat U_j=\varnothing$ and $(\partial^* \beta)_s$ is, in general, nonzero whereas $(\check{\delta} \beta)_s=0$. This discrepancy arises because the Grothendieck topology on the poset of spherical CS sets corresponding to Definition \ref{def:coherent} does not allow the empty cover of the empty set. It is this discrepancy that  makes the evaluation of DGLA cycles on spherical CS cocycles non-trivial. 
\end{remark}

We are going to apply this lemma to the pointed graded local Lie system $\mfkDpalG$ arising from a $G$-invariant gapped state of a quantum lattice system on $W\subseteq\RR^n$. We take $\derf$ to be the commutator class $[\derp,\derp]$ defined via the twisted Maurer-Cartan equation. In this case $\Sigma\langle \derf,\beta\rangle$ takes values in the $G$-invariant part of $\mfkdpal\otimes A$ where $A=\prod_{k=1}^\infty \Sym^k(\frg^*[-2])$. (Recall that $\Sigma$ is the isomorphism of the space of anchored elements of $\mfkDpal$ and $\mfkdpal$).
The final step in the construction of the topological invariant is to evaluate the expectation value $\psi(\Sigma\langle [\derp,\derp],\beta\rangle)$. Since $\psi$ is $G$-invariant, it takes values in the complexification of $A^G$. The main result of this section is
\begin{theorem}\label{thm:main}
    Let $\psi$ be a gapped state of a quantum lattice system on $W\in\CS_n$ invariant under an on-site action of a compact Lie group $G$. There is a function $\nu_\psi:\check{H}^\bullet_{CS}(\widehat W,\RR)\ra A^G$ which depends only on $\psi$, and whose value on a cocycle $\beta\in \check{C}^\bullet(\hat\fU,{\widehat W};\RR)$ is given by
    \begin{align}
        \nu_\psi(\beta)= {\sqrt {-1}}\psi(\Sigma\langle [\derp,\derp],\beta\rangle).
    \end{align}
    The function $\nu_\psi$ vanishes on even-degree cohomology.
\end{theorem}

\begin{lemma}\label{lma:psicommutator}
    Let $U$ and $V$ be closed semilinear sets such that $U\wedge V$ is bounded. Let $A$ be a locally-finite supercommutative graded algebra and $\psi$ be a state. Then $\psi(\Sigma[\mfkDpal(U)\otimes A,\mfkDal(V)\otimes A])=0.$
\end{lemma}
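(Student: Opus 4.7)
The plan reduces the statement to scalar coefficients and to a transverse pair, then expands the commutator into a brick decomposition of $\derG$ whose terms individually have vanishing $\psi$-average; the main obstacle is an absolute convergence estimate that justifies this rearrangement. For the reductions, $A$-bilinearity of the bracket and $A$-linearity of $\psi\otimes\id_A$ show it suffices to prove $\psi([\derF,\derG])=0$ for $\derF\in\mfkDpal(U)$ and $\derG\in\mfkDal(V)$. Since $\mfkDpal(U)=\mfkDpal(U^r)$ and $\mfkDal(V)=\mfkDal(V^r)$ for every $r\geq 0$, Proposition \ref{lma:semilinear-distance-bound} lets us thicken $U$ and $V$ so that they become transverse without altering the spaces in question. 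Corollary \ref{cor:commutator} then places $[\derF,\derG]$ in $\mfkDal(U\wedge V)$, and since $U\wedge V$ is bounded, Proposition \ref{prop:summable} identifies $[\derF,\derG]$ with a genuine element of $\SAal$, so that $\psi([\derF,\derG])\in\CC$ is well-defined.

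The central identity is
\[
[\derF,\derG]=\sum_{Y\in\BB_n}[\derF,\derG^Y],
\]
which I will prove as an absolutely convergent series in the operator norm on $\SAal$. Granting this, each summand $[\derF,\derG^Y]=\derF(\derG^Y)$ lies in $\SAal$ because $\derG^Y\in\mfkdl^Y\subset\SAal$, and the hypothesis that $\derF$ preserves $\psi$ forces $\psi(\derF(\derG^Y))=0$ for every $Y$. Continuity of $\psi$ then yields $\psi([\derF,\derG])=\sum_Y\psi([\derF,\derG^Y])=0$.

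The main obstacle is the absolute convergence, which reduces to showing that the triple sum $\sum_{X,Y,Z}\|[\derF^X,\derG^Y]^Z\|$ is finite; once this is in hand, the already absolutely convergent double series $[\derF,\derG]=\sum_Z\sum_{X,Y}[\derF^X,\derG^Y]^Z$ (inner sum by the proof of Proposition \ref{prop:commutator}, outer sum by Proposition \ref{prop:summable} applied in $\mfkDal(U\wedge V)$) may be reordered as $\sum_Y\sum_{X,Z}[\derF^X,\derG^Y]^Z=\sum_Y[\derF,\derG^Y]$. The bound established in the proof of Proposition \ref{prop:commutator}, together with its analogue obtained by swapping the roles of $U$ and $V$, yields
\[
\sum_{X,Y}\|[\derF^X,\derG^Y]^Z\|\leq C\cdot 3^k\,\|\derF\|_{U,k+4n+4}\,\|\derG\|_{V,k+4n+4}\,\min\!\bigl(\tilde d(Z,U)^{-k},\tilde d(Z,V)^{-k}\bigr)
\]
for every $k\geq 0$. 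The transversality inequality $\tilde d(Z,U\wedge V)\leq(C+1)\max(\tilde d(Z,U),\tilde d(Z,V))$ extracted from the proof of Proposition \ref{prop:pullback} gives $\min(\tilde d(Z,U)^{-k},\tilde d(Z,V)^{-k})\leq(C+1)^k\tilde d(Z,U\wedge V)^{-k}$, and boundedness of $U\wedge V$ ensures $\sum_Z\tilde d(Z,U\wedge V)^{-k}<\infty$ for $k\geq 2n+2$ by a Lemma \ref{lem:brick-sum}-type estimate. Summing over $Z$ concludes the absolute convergence and hence the proof.
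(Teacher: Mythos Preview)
Your proposal is correct and follows essentially the same route as the paper: reduce to $A=\RR$, expand $[\derF,\derG]$ as an absolutely convergent brick sum via Propositions~\ref{prop:summable} and~\ref{prop:commutator}, regroup as $\sum_Y\psi(\derF(\derG^Y))$, and use $\derF\in\mfkDpal(U)$ to kill each term. Your explicit thickening to a transverse pair and your triple-sum estimate make precise what the paper leaves implicit in its two-line proof.
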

\begin{proof}

It suffices to consider the case $A=\RR$. We begin by showing that the sum
\begin{align}
    \sum_{X,Y,Z\in \BB_n}\pi^Z([\derF^X,\derG^Y])
\end{align}
is absolutely convergent. Using Lemmas \ref{lma:brick-component-norm} and \ref{lem:brickjoin}
\begin{align}
    \sum_{X,Y\in \BB_n}\sum_{Z\in\BB_n}\|\pi^Z([\derF^X,\derG^Y])\| &\le 
    \sum_{X,Y\in \BB_n}\sum_{Z\subseteq X\wedge Y}4^d2\|\derF^X\|\|\derG^Y\|\notag\\
    &\le \sum_{X,Y\in \BB_n}(1+\diam(X\wedge Y))^{2d} 4^d2\|\derF^X\|\|\derG^Y\|\notag\\
    &\le \sum_{X,Y\in \BB_n}(1+\diam(X)+\diam(Y))^{2d} 4^d2\|\derF^X\|\|\derG^Y\|\notag\\
    &\le 4^d2\left(\sum_{X\in \BB_n}(1+\diam(X))^{2d} \|\derF^X\|\right)\left(\sum_{Y\in \BB_n}(1+\diam(Y))^{2d} \|\derG^Y\|\right)\notag\\
    &<\infty
\end{align}
Recall that for an anchored derivation $\derH$ we define $\psi(\derH) = \psi(\Sigma(\derH))$. We have
\begin{align}
    \psi(\Sigma([\derF,\derG])) &= \sum_{X,Y,Z \in \BB_n }\psi(\pi^Z([\derF^X,\derG^Y])\notag\\
    &= \sum_{X,Y \in \BB_n}\psi([\derF^X,\derG^Y])\notag\\
    &= \sum_{Y\in\BB_n}\psi(\derF(\derG^Y))\notag\\
    &= 0.
\end{align}
\end{proof}

\begin{lemma}\label{lma:commutatorDGLA}
    Let $A$ be a locally-finite supercommutative graded algebra and $\psi$ be a gapped state of a quantum system on $W\in\CS_n$. Fix a cover $\fU$ of $W$. For any cycle $\derf$ in the commutator DGLA of $C^{aug}_{\bullet+1}(\fU,W;\mfkDpal\otimes A)$, the expectation value $\psi(\Sigma\langle\derf,\beta\rangle)\in A$ depends only on the cohomology class of $\beta\in \check{C}^\bullet(\hat\fU,\widehat W;\RR)$ and the homology class of $\derf$. 
\end{lemma}
\begin{proof}
Suppose $\derf\in C^{aug}_p(\fU,W;\mfkDpal\otimes A)$ and $\derf=\partial\derg$ for some $\derg$ in the commutator DGLA. Then the vanishing of $\psi(\Sigma\langle\derf,\beta\rangle)\in A$ follows from the proof of Lemma \ref{lma:cocycle_inner} and Lemma \ref{lma:psicommutator}. Thus, it remains to show that $\psi(\Sigma\langle\derf,\check{\delta} 
 b\rangle)=0$ for any $b\in \check{C}^{p-1}(\hat\fU,\widehat W)$. Indeed, since $\langle \derf,\partial^* b\rangle=\langle \partial\derf, b \rangle=0$, we have 
\begin{align}
    \langle \derf,\check{\delta} b\rangle
    =\sum_{\substack{s\in I^{p+1},\ \mathrm{with}\\ \bigwedge_{j\in s}U_j\ \mathrm{ unbounded}}} \derf_s (\partial^* b)_s=-
    \sum_{\substack{s\in I^{p+1},\ \mathrm{with}\\ \bigwedge_{j\in s}U_j\ \mathrm{ bounded}}} \derf_s (\partial^* b)_s,
\end{align}
and the $\psi$-average of each term in the latter sum vanishes by Lemma \ref{lma:psicommutator}.
\end{proof}

\begin{proof}[Proof of Theorem \ref{thm:main}]
    Note first that the anchored derivation $[\derp,\derp]$ is anti-hermitian, hence its expectation value is purely imaginary, and $\nu_\psi(\beta)$ is real. Also, since the commutator class has even degree and thus odd \Cech\ degree, $\nu_\psi(\beta)=0$ if $\beta$ has even degree. For a fixed covering $\fU$, Prop. \ref{prop:MCmain} and Lemma \ref{lma:commutatorDGLA} imply $\psi(\Sigma\langle [\derp,\derp],\beta\rangle)$ is independent of the choice of the solution $\derp$ of the twisted MC equation. By Lemma \ref{lma:commutatorDGLA}, it depends on the cocycle $\beta$ solely through its cohomology class. It remains to show invariance under refinement of the covering  $\fU$.
    
    Let $(\fV, \phi)$ be a refinement of $\fU$ with $V_j\leq U_{\phi(j)}.$ Cocycles $(\fU, \beta)$ and $(\fV, \phi^* \beta)$, where $(\phi^*\beta)_{j_0, \dots, j_k}= \beta_{\phi(j_0), \dots, \phi(j_k)}$, are in the same CS cohomology class. From Prop.\ref{prop:MCmain} there exists $\derQ$-twisted MC element $\derp$ for the cover $\fV$. Prop.\ref{prop:MC_functoriality} then implies that $\phi_*\derp$ is a $\derQ$-twisted MC element for the cover $\fU$. An easy expansion shows 
    \begin{equation}
    \langle [\phi_*\derp, \phi_*\derp], \beta \rangle= \langle [\derp, \derp], \phi^*\beta \rangle.
    \end{equation}
    Since any $\derQ$-twisted MC element gives the same answer, we have shown that this contraction of interest depends only on the CS cohomology class of $(\fU, \beta)$.
\end{proof}

Theorem \ref{thm:main} defines an invariant of $G$-invariant gapped states of quantum lattice systems on arbitrary CS sets in $\RR^n$. In view of Prop. \ref{cech_to_singular}, one can view this invariant as a function from $H^\bullet_{odd}(\widehat W,\RR)$ to $G$-invariant formal power series on $\frg$. In fact, since the cohomology of $\widehat W$ vanishes in degree larger than $n-1$, the values of the invariant are polynomials on $\frg$ whose degrees do not exceed $(n+2)/2$. 

In the case $W=\RR^n$, the odd cohomology is nontrivial only for even $n$ and is a  one-dimensional vector space of degree $n-1$. Once the orientation of $\RR^n$ has been chosen, there is a canonical basis element, the dual of the fundamental class of $S^{n-1}$. Thus, for $G$-invariant gapped states of quantum lattice systems on $\RR^n$ with $n$ even, Theorem \ref{thm:main} supplies a unique invariant taking values in $G$-invariant polynomials on $\frg$ of degree $(n+2)/2$. This invariant changes sign when the orientation of $\RR^n$ is changed. Note that the same datum classifies characteristic classes of $G$-bundles with real coefficients in degree $n+2$, or equivalently, classical Chern-Simons field theories with gauge group $G$ in space-time dimension $n+1$.

In Section \ref{sec:symmetries} we defined the notion of a smooth path of gapped states. One can extend it to $G$-invariant gapped states in an obvious manner.
\begin{definition}\label{def:smoothfamiliesofstates}
    Fix an on-site action of a compact Lie group $G$ on a quantum lattice system. A smooth path of $G$-invariant LGAs is a smooth path of LGAs $u\mapsto\alpha(u)$ such that $\alpha(u)$ commutes with the $G$-action for all $u\in [0,1]$. A smooth path of $G$-invariant gapped states is a smooth path of states of the form $u\mapsto \psi(u)=\omega\circ\alpha(u)$, where $\omega$ is a $G$-invariant gapped state and $u\mapsto\alpha(u)$ is a smooth path of $G$-invariant LGAs. 
\end{definition}
The following proposition shows that the function $\nu_\psi$ does not vary in smooth families of states.  This  justifies its interpretation as a topological invariant of the state $\psi$.  

\begin{prop}
    Let $u\mapsto \psi(u)$ be a smooth path of $G$-invariant gapped states of a quantum lattice system on $W\in\CS_n$. Then  $\nu_{\psi(u)}=\nu_{\psi(0)}$ for all $u\in [0,1]$.
\end{prop}
\begin{proof}
    By definition of a smooth path, for any $u\in [0,1]$ there exists a smooth path of LGAs $u\mapsto \alpha(u)$ commuting with the $G$-action such that $\psi(u)=\psi(0)\circ\alpha(u)$. Therefore, $\mfkDal^{\psi(u),{\mathbf G}}(U)=\alpha(u)^{-1}(\mfkDal^{\psi(0),{\mathbf G}}(U))$ for every $U\in\CS_n$. Hence for every $\CS$ cover $\fU$ the graded DGLAs corresponding to these two local Lie systems are related by conjugation with $\alpha$ inside $C^{aug}_{\bullet+1}(\fU,W;\mfkDal^{\mathbf G})$. Furthermore, these DGLAs are pointed and their curvatures coincide. Indeed, the curvature is the generator $\derQ$ of the $G$-action regarded as the distinguished degree $-2$ element of $\mfkDal^{\mathbf G}(W)$ which preserves $\psi(u)$ for all $u\in[0,1]$. Finally, if $\derp(0)$ is a twisted MC element corresponding to $\psi(0)$, then $\alpha(u)^{-1}(\derp(0))$ is a twisted MC element for $\psi(u)$. This implies $\nu_{\psi(u)}=\nu_{\psi(0)}$.
\end{proof}

\subsection{The Hall conductance}
For physics applications, the most important case is $G=U(1)$ and $n=2$. Let us specialize the construction of topological invariants to this case. 

Let $\psi$ be a $U(1)$-invariant gapped state of a lattice system on $\RR^n$. The generator of the $U(1)$ action is $\derQ\in\mfkDpal(\RR^n)$. Let $\mfkDpQal$ be the sub-algebra of $U(1)$-invariant elements of $\mfkDpal$. Since $U(1)$ is connected, this is the same as the sub-algebra of elements of $\mfkDal$ which commute with $\derQ$. More generally, for any $U\in\CS_n$ $\mfkDpQal(U)=\mfkDal(U)\cap\mfkDpQal$. This is a local Lie system over $\CS_n$. The graded local Lie system $\mfkDpalG$ reduces in this case to $\mfkDpQal\otimes\RR[[t]]$ where $t$ is a variable of degree $-2$.

Pick a cover $\fU=\{U_i\}_{i\in I}$ of $\RR^n$. To find a solution $\derp$ of the inhomogeneous Maurer-Cartan equation with $\derB=\derQ\otimes t$, we write $\derp=\sum_{k=1}^\infty \derp_k\otimes t^k$, where $\derp_k\in C_{k-1}(\fU,\RR^n;\mfkDpQal)$. 
To compute the topological invariant of a state on $\RR^2$ it is sufficient to solve for $\derp_1$. 

$\derp_1$ is a solution of $\partial\derp_1=\derQ$. Explicitly, $\derp_1=\{\derQ_i\in\mfkDpQal(U_i)\}_{i\in I}$ such that $\sum_{i\in I}\derQ_i=\derQ$. Such $\derQ_i$  exist because $\mfkDpQal$ is a cosheaf. Then the component of the commutator class in $C_1(\fU,\RR^n;\mfkDpQal)$ is $\oplus_{i,j}[\derQ_i,\derQ_j]$. The topological invariant of the state $\psi$ is obtained by evaluating it on a \Cech\ 1-cocycle $\beta$ on $S^1$ corresponding to the cover $\hat \fU$ and then averaging the resulting anchored derivation:
\begin{align}
\nu_\psi(\beta)={\sqrt {-1}}\psi\left(\sum_{i,j}\beta_{ij}[\derQ_i,\derQ_j]\right).
\end{align}
Note that averaging over $\psi$ must be performed after the summation over $i,j$ because $[\derQ_i,\derQ_j]$ is not an anchored  derivation, in general.

The simplest cover of $\RR^2$ which can represent a nontrivial class in $H^1_{CS}(S^1,\RR)$ involves three cones $U_i$, $i\in\{1,2,3\},$ with a common vertex. A cocycle representing the canonical generator of  $H^1(S^1,\ZZ)$ is $\beta_{12}=-\beta_{21}=1$, $\beta_{13}=\beta_{23}=0$. Then the invariant is given by
\begin{align}\label{eq:sigma2d}
\nu_\psi(\beta)=2{\sqrt {-1}}\psi(\Sigma[\derQ_1,\derQ_2])\in\RR.
\end{align}
It was shown in Section 4.1 of \cite{ThoulessHall} (see also Section 4.4.2 of \cite{LocalNoether}) that $\nu_\psi(\beta)$ is the zero-temperature Hall conductance of the system in units of $e^2/\hbar$, where $e$ is the electron charge and $\hbar$ is the Planck constant. 

\begin{example}
    Consider a gapped Fock state of the CAR algebra associated with a lattice $\Lambda\subseteq\RR^2$ and a single-site Hilbert space $H$ (Example \ref{ex:Fockstate}). It depends on a projector $P$ acting on $\ell^2(\Lambda,H)$ . If the cover  $U_i$, $i=1,2,3$, is chosen as above, one can take $\derQ_i$ to have the form $\derF(h_i)$ where $h_i$ are operators on $\ell^2(\Lambda,H)$ given by (\ref{eq:hi}). Evaluating the expectation value (\ref{eq:sigma2d}), one finds
    \begin{align}\label{eq:sigma2d3cones}
        -{\sqrt {-1}}\nu_\psi(\beta)=2{\rm Tr}\, P[h_1,h_2]=2{\rm Tr}\,(Pf_1Pf_2P-Pf_2Pf_1P)=2{\rm Tr}[Pf_1P,Pf_2P],
    \end{align}
    where the functions $f_i:\Lambda\ra \RR$ are given by (\ref{eq:fi}). Note that the operators $Pf_iP$ are not trace class, so one cannot use the cyclic property of the trace to conclude that  $\nu_\psi(\beta)$ is zero. It is shown in Appendix C.3 of \cite{kitaev2006anyons} that for any gapped Fock state $2\pi\nu_\psi(\beta)$ is an integer. Examples are known where this integer is nonzero, see e.g. \cite{TKNN}. 
    \begin{remark}
        It is a long-standing conjecture that for an arbitrary $U(1)$-invariant gapped state of a quantum lattice system the topological invariant $2\pi\nu_\psi(\beta)$ is a rational number.
    \end{remark}
\end{example}

\subsection{Topological invariants of gapped states on subsets of $\RR^n$}

Consider a $U(1)$-invariant gapped state of a quantum lattice system on  $W=\RR^2$ affinely embedded in $\RR^3$. If we regard it as a quantum lattice system on $\RR^3$, then the  topological invariant defined in Section \ref{sec:construction} vanishes for dimensional reasons. On the other hand, by contracting with the 1-cocycle of $\widehat W=S^1$ one obtains the Hall conductance of this system.

For a more non-trivial example, for any $n>2$ consider a finite graph $\Gamma\subseteq S^{n-1}$ whose edges are geodesics and take $W$ to be the cone with base $\Gamma$ and apex at an arbitrary point of $\RR^n$. The invariants of $U(1)$-invariant gapped states of quantum lattice systems on $W\subseteq\RR^n$ are labeled by generators of the free abelian group $H^1(\Gamma,\ZZ)$. This example goes beyond Chern-Simons field theory, since $W$ need not be smooth or even locally Euclidean.

\appendix

\section{0-chains}\label{appendix:LGA}
In this section we use the results of \cite{LocalNoether} to derive the properties of LGAs which we used in Section \ref{sec:symmetries}.

\subsection{0-chains on $\ZZ^n$}
First let us characterize $\mfkDal(U)$ in terms of 0-chains. A 0-chain on $\ZZ^n$ is an element $\dera = \{\dera_x\}_{x\in \ZZ^n} \in \prod_{x\in \ZZ^n}\mfkDal(\{x\})$ such that
\begin{align}
    \|\dera \|_{k} := \sup_{x\in \ZZ^n}\|\dera_x\|_{\{x\},k} < \infty. \label{eqn:c0frechetnorm}
\end{align}
We say a 0-chain $\dera$ is supported on $U$ if $\dera_x = 0$ whenever $x\notin U$, and write $C_0(U)$ for the set of $U$-supported 0-chains, endowed with the norms (\ref{eqn:c0frechetnorm}) for $k\ge 0$.
\begin{prop}\label{prop:confined-supported}
    Let $U\subseteq \RR^n$ be nonempty and let $U^1:= \{x\in \RR^n: d(x,U)\le 1\}$ be its 1-thickening. 
    \begin{enumerate}[i)]
        \item If $\derF\in \mfkDal(U)$ then $\derF = \partial \derf$ for a $U^1$-supported 0-chain $\derf$ with $\|\derf\|_k \le 2^k\|\derF\|_{U,k}$
        \item If $\derf \in C_0(U)$, then for any $Y\in\BB_n$ the sum
        \begin{align}
            (\partial \derf)^Y := \sum_{x\in \ZZ^n\cap U}\derf_x^Y
        \end{align}
        is absolutely convergent and defines a map $\partial : C_0(U)\to \mfkDal(U)$ with $\| \partial \derf\|_{U,k} \le C\|\derf\|_{k+2n+1}$, where the constant $C>0$ depends only on $n$.
    \end{enumerate}
\end{prop}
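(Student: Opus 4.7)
}

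The plan is to handle the two parts independently; part (i) is a discrete re-distribution argument, while part (ii) is a sum estimate very similar in spirit to the proof of Lemma \ref{lem:brick-sum}.

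\paragraph{Part (i).}
The strategy is to ``lump'' each brick-component $\derF^Y$ onto a single lattice site $j(Y)\in\ZZ^n\cap U^1$ chosen to lie close to both $Y$ and $U$. Concretely, given $Y\in\BB_n$, pick a point $y\in Y$ realizing (up to an $\epsilon$) the distance $d(Y,U)$, pick $u\in U$ with $d(y,u)\le d(Y,U)+\epsilon$, and then let $j(Y)\in\ZZ^n$ be a lattice point with $d(j(Y),u)\le 1/2$ (such a point exists because in the $\ell^\infty$ metric every point of $\RR^n$ lies within $1/2$ of $\ZZ^n$). By construction $d(j(Y),U)\le 1/2$, so $j(Y)\in U^1$, and letting $\epsilon\downarrow 0$ gives
\[
  d(j(Y),Y)\le d(j(Y),u)+d(u,y)\le \tfrac{1}{2}+d(Y,U).
\]
Define $\derf_j^Y=\derF^Y$ when $j=j(Y)$ and $0$ otherwise. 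Then obviously $\sum_j \derf_j^Y=\derF^Y$, the 0-chain is supported on $U^1$, and the seminorm estimate
\[
  1+\diam(Y)+d(j(Y),Y)\le 2\bigl(1+\diam(Y)+d(U,Y)\bigr)
\]
immediately gives $\|\derf_j\|_{\{j\},k}\le 2^k\|\derF\|_{U,k}$ for every $j$ (and every $k$), yielding $\|\derf\|_k\le 2^k\|\derF\|_{U,k}$. I would also briefly note that each $\derf_j^Y\in\mfkdl^Y$ because it equals either $\derF^Y\in\mfkdl^Y$ or $0$.

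\paragraph{Part (ii).}
The key observation is that for $j\in\ZZ^n\cap U$ we have $d(U,Y)\le d(j,Y)$, so a single copy of the factor $(1+\diam(Y)+d(j,Y))$ can be used to match the $U$-weight while leaving extra decay to sum over $j$:
\[
  \|\derf_j^Y\|\bigl(1+\diam(Y)+d(U,Y)\bigr)^k
  \le \|\derf\|_{k+2n+1}\bigl(1+\diam(Y)+d(j,Y)\bigr)^{-(2n+1)}.
\]
Summing over $j\in\ZZ^n\cap U$ then reduces the problem to showing that
\[
  \Sigma(Y):=\sum_{j\in\ZZ^n}\bigl(1+\diam(Y)+d(j,Y)\bigr)^{-(2n+1)}
\]
is bounded by a constant depending only on $n$. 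This is a direct computation: approximate the sum by $\int_{\RR^n}(1+\diam(Y)+d(x,Y))^{-(2n+1)}dx$ and use that the ``shell'' at distance $r$ from a brick of diameter $\diam(Y)$ has volume at most $C_n(\diam(Y)+r)^{n-1}$, which after bounding $(\diam(Y)+r)^{n-1}\le(1+\diam(Y)+r)^{n-1}$ leaves $\int_0^\infty(1+\diam(Y)+r)^{-(n+2)}dr\le 1$. This yields the desired bound $\|\partial\derf\|_{U,k}\le C\|\derf\|_{k+2n+1}$, and the absolute convergence of the sum defining $(\partial\derf)^Y$ falls out along the way. Finally, because the sum converges in operator norm and $\mfkdl^Y$ is closed under the relevant partial-trace conditions, each $(\partial\derf)^Y$ lies in $\mfkdl^Y$.

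\paragraph{Main obstacle.}
Conceptually neither part is hard; the only real care is needed in part (i), where one must choose the ``anchor'' $j(Y)$ so that it simultaneously (a) lies in $U^1$ and (b) is within distance $\tfrac12+d(U,Y)$ of $Y$. If one instead chooses $j(Y)$ to be the nearest lattice point to $Y$ itself, then $j(Y)$ may fail to lie in $U^1$ when $d(Y,U)$ is large; anchoring near $U$ rather than near $Y$ is the small trick that makes the factor of $2$ in the bound $\|\derf\|_k\le 2^k\|\derF\|_{U,k}$ come out.
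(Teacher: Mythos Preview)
Your proposal is correct and follows essentially the same strategy as the paper's proof. In part (i) the paper makes a slightly different choice of anchor: it fixes a total order on $U^1\cap\ZZ^n$ and takes $j^*(Y)$ to be the point of $U^1\cap\ZZ^n$ \emph{closest to $Y$} (with the order breaking ties), then invokes the auxiliary estimate $1+\diam(Y)+d(Y,U^1\cap\ZZ^n)\le 2(1+\diam(Y)+d(Y,U))$, which is exactly the content of your inline computation. Your route via a point $u\in U$ achieves the same bound; the only cosmetic blemish is the phrase ``letting $\epsilon\downarrow 0$'', which does not literally make sense since $j(Y)$ was already chosen with a fixed $\epsilon$---just take $\epsilon=\tfrac12$ from the start and the factor $2^k$ still falls out. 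In part (ii) your argument is essentially identical to the paper's; the paper bounds the remaining sum by splitting $(1+\diam(Y)+d(j,Y))^{-(2n+1)}\le(1+\diam(Y))^{-n}(1+d(j,Y))^{-(n+1)}$ and summing the second factor over $\ZZ^n$, whereas you use a shell-volume integral approximation, but these are interchangeable.
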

\begin{lemma}\label{lem:confined-supported}
    For any nonempty $U\subseteq \RR^n$ and any $Y\in\BB_n$ we have 
    \begin{align}
        1 + d(Y,U^1\cap \ZZ^n) + \diam(Y) \le 2(1+ d(Y,U) + \diam(Y))
    \end{align}
\end{lemma}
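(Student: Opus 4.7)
The plan is to exploit the fact that in the $\ell^\infty$ metric, every point of $\RR^n$ lies within distance $\tfrac{1}{2}$ of a lattice point (take the coordinatewise nearest integer). Hence for any $u \in U$ there exists $j \in \ZZ^n$ with $d(u,j) \le \tfrac{1}{2} < 1$, which puts $j$ in $U^1 \cap \ZZ^n$. This lets us approximate optimal pairs realizing $d(Y,U)$ by pairs realizing an almost-optimal value of $d(Y, U^1 \cap \ZZ^n)$.

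Concretely, first I would show
\[
  d(Y, U^1 \cap \ZZ^n) \le d(Y, U) + \tfrac{1}{2}.
\]
Given $y \in Y$ and $u \in U$, choose $j \in \ZZ^n$ with $d(u,j) \le \tfrac{1}{2}$; then $j \in U^1 \cap \ZZ^n$ and the triangle inequality gives $d(y,j) \le d(y,u) + \tfrac{1}{2}$. Taking the infimum over $y \in Y$ and $u \in U$ yields the inequality.

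Second, plugging this into the left-hand side of the target inequality gives
\[
  1 + d(Y, U^1 \cap \ZZ^n) + \diam(Y) \le \tfrac{3}{2} + d(Y,U) + \diam(Y),
\]
and comparing with $2(1 + d(Y,U) + \diam(Y)) = 2 + 2d(Y,U) + 2\diam(Y)$ reduces the claim to the trivial inequality $0 \le \tfrac{1}{2} + d(Y,U) + \diam(Y)$, which holds since both $d(Y,U)$ and $\diam(Y)$ are non-negative.

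There is no real obstacle here; the only thing to be careful about is that the bound $d(u,j)\le \tfrac12$ on the nearest lattice point uses the $\ell^\infty$ metric (for any other $\ell^p$ norm one would get a dimension-dependent constant, but since the paper fixes the $\ell^\infty$ metric throughout, this is immediate).
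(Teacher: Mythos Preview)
Your proof is correct and follows essentially the same approach as the paper: approximate a point of $U$ by a nearby lattice point to obtain an element of $U^1\cap\ZZ^n$, then apply the triangle inequality. Your version is in fact slightly cleaner, since you avoid assuming the infimum $d(Y,U)$ is attained and you get the sharper intermediate bound $d(Y,U^1\cap\ZZ^n)\le d(Y,U)+\tfrac12$ (the paper uses $1$ in place of $\tfrac12$ and also inserts a superfluous $\diam(Y)$ term in the chain of inequalities).
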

\begin{proof}
    Since the distances and thickenings involving $U$ depend only on the closure of $U$, we may assume that $U$ is closed. Choose $y\in Y$ and $u\in U$ with $d(y,u) = d(Y,U)$, and choose $z\in \ZZ^n$ with $d(u,z)\le 1$. Then since $z\in U^1\cap \ZZ^n$ we have
    \begin{align}
        d(Y,U^1\cap \ZZ^n) &\le \diam(Y) + d(y,z)\notag\\
        &\le \diam(Y) + d(y,u) + d(u,z)\notag\\
        &\le \diam(Y) + d(Y,U) + 1,
    \end{align}
    and the Lemma follows.
\end{proof}
\begin{proof}[Proof of Proposition \ref{prop:confined-supported}]
    $i)$. Suppose $\derF \in \mfkDal(U)$. Choose any total order on $U^1\cap \ZZ^n$ and for every $Y\in\BB_n$ let $j^*(Y)$ be the closest point to $Y$ in $U^1\cap \ZZ^n$, using the total order as a tiebreaker. For every $i\in \Lambda$, define
    \begin{align}
        \derf_i := \sum_{\substack{Y\in \BB_n \\ j^*(Y)=i}}\derF^Y. \label{eqn:0chain-remainder-expansion}
    \end{align}
    Then either $\derf_i^Y = 0$ or $d(Y,U^1\cap \ZZ^n) = d(Y,i)$ and $\|\derf_i^Y\|= \|\derF^Y\|$, and so 
    \begin{align}
        \|\derf_i\|_{k} &= \sup_{Y\in \BB_n\backslash \{\varnothing\}} (1+\diam(Y)+d(Y,\{i\}))^k \|\derf_i^Y\|\notag\\
        &= \sup_{Y\in \BB_n\backslash \{\varnothing\}}(1+\diam(Y)+d(Y,U^1\cap \ZZ^n))^k \|\derF^Y\|,
    \end{align}
    which by Lemma \ref{lem:confined-supported} is bounded by $2^k\|\derF\|_{U,k}$.
    \\\\
    $ii)$. Suppose that $\derf$ is a $U$-supported $0$-chain. For any $k\ge0$ we have
    \begin{align}
       \|(\partial \derf)^Y\| &\le \sum_{x\in \ZZ^n\cap U}\|\derf_x^Y\|\notag\\
       &\le \|\derf\|_{k+2n+1} \sum_{x\in \ZZ^n\cap U}(1+\diam(Y) + d(x,Y))^{-k-2n-1}\notag\\
       &\le \|\derf\|_{k+2n+1} (1+\diam(Y)+d(U,Y))^{-k}\sum_{x\in \ZZ^n\cap U}(1+\diam(Y) + d(x,Y))^{-2n-1}\notag\\
       &\le \|\derf\|_{k+2n+1} (1+\diam(Y)+d(U,Y))^{-k}(1+\diam(Y))^{-n}\sum_{x\in \ZZ^n}(1+ d(x,Y))^{-n-1}.
    \end{align}
    It is not hard to show that for any brick $Y$ the quantity $(1+\diam(Y))^{-n}\sum_{x\in \ZZ^n}(1+ d(x,Y))^{-n-1}$ is bounded by a constant $C$ depending only on $n$, which shows $\|\partial \derf\|\le C\|\derf\|_{k+2n+1}$.
\end{proof}
Proposition \ref{prop:confined-supported} will allow us to apply the results of \cite{LocalNoether} on 0-chains. The results in \cite{LocalNoether} are phrased in terms of the norms
\begin{align}
    \|\dera\|^{KS}_{x,k} := \sup_{r> 0}(1+r)^k\inf_{\derb \in \mfkdl(B_r(x))}\|\dera-\derb\|
\end{align}
where $\mfkdl(B_r(x))$ is the set of traceless anti-hermitian operators strictly localized on the ball of radius $r$ around $x\in \RR^n$. To apply their results we prove the equivalence of these norms:
\begin{lemma}\label{lem:norm-equivalence}
    For any $x\in \ZZ^n$ and $k>0$, the norms $\|\cdot \|^{KS}_{x,k}$ and $\|\cdot \|_{\{x\},k}$ obey
    \begin{align}
        \|\dera\|^{KS}_{x,k} &\le C \|\dera\|_{\{x\},k+2n+2} \label{eqn:norm-equivalence-1}\\
        \|\dera\|_{\{x\},k} &\le 4^k C' \|\dera\|^{KS}_{x,k} \label{eqn:norm-equivalence-2}
    \end{align}
    where $C,C'$ are constants depending only on $n$.
\end{lemma}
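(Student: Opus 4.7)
For the first inequality (\ref{eqn:norm-equivalence-1}), my plan is to build an explicit near-optimizer for the infimum in the KS norm by truncating the brick decomposition. Given $r>0$, set $\derb_r := \sum_{Y\subseteq B_r(x)}\dera^Y$, a finite sum lying in $\mfkd(B_r(x))$. The error $\dera-\derb_r = \sum_{Y\not\subseteq B_r(x)}\dera^Y$ is controlled by observing that any brick $Y\not\subseteq B_r(x)$ contains a point at distance $>r$ from $x$, and therefore satisfies $\diam(Y)+d(Y,\{x\})>r$. Bounding $\|\dera^Y\|\le \|\dera\|_{\{x\},k+2n+2}(1+\diam(Y)+d(Y,\{x\}))^{-(k+2n+2)}$, factoring out $(1+r)^{-k}$ from each summand, and invoking Lemma \ref{lem:brick-sum} to control the residual sum $\sum_Y(1+\diam(Y)+d(Y,\{x\}))^{-(2n+2)}$ by a constant depending only on $n$ will yield $(1+r)^k\|\dera-\derb_r\|\le C\|\dera\|_{\{x\},k+2n+2}$, which is (\ref{eqn:norm-equivalence-1}) after taking the supremum in $r$.

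For the second inequality (\ref{eqn:norm-equivalence-2}), the plan rests on two ingredients. First, for any $\derb\in\mfkd(B_r(x))$ with $r<\rho(Y):=\max_{y\in Y}d(y,x)$, the brick $Y$ is not contained in $B_r(x)$, and since $\derb$ is strictly localized on $B_r(x)$ its $Y$-component $\derb^Y$ vanishes, giving $\dera^Y=(\dera-\derb)^Y$. Second, I will establish the operator-norm bound $\|\CA^Y\|\le 2^{2n}\|\CA\|$ via a M\"obius-type identity. Letting $Y_1,\dots,Y_{2n}$ denote the maximal proper sub-bricks of $Y$ (one per face) and $Y_S := \bigcap_{i\in S}Y_i$ for $S\subseteq\{1,\dots,2n\}$, with $Y_\varnothing := Y$ and $\otr_{\varnothing^c}:=\otr$, one has $\CA^Y = \sum_S(-1)^{|S|}\otr_{Y_S^c}(\CA)$. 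This follows by expanding $\otr_{Y_S^c}(\CA)=\sum_{Z\subseteq Y_S}\CA^Z$: the coefficient of each $\CA^Z$ becomes $\sum_{S\subseteq T}(-1)^{|S|}$ with $T:=\{i:Z\subseteq Y_i\}$, and this sum vanishes unless $T=\varnothing$, which happens precisely when $Z$ meets every face of $Y$, forcing $Z=Y$. Since each partial trace is completely positive and unital hence contractive in the operator norm, summing $2^{2n}$ such terms yields the bound.

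Combining the two ingredients gives $\|\dera^Y\|\le 2^{2n}\inf_{\derb\in\mfkd(B_r(x))}\|\dera-\derb\|\le 2^{2n}\|\dera\|^{KS}_{x,k}(1+r)^{-k}$ for every $r<\rho(Y)$; letting $r\to\rho(Y)^-$ gives $\|\dera^Y\|\le 2^{2n}(1+\rho(Y))^{-k}\|\dera\|^{KS}_{x,k}$. The geometric comparison $1+\diam(Y)+d(Y,\{x\})\le 3(1+\rho(Y))$ follows from $d(Y,\{x\})\le\rho(Y)$ and $\diam(Y)\le 2\rho(Y)$ (the latter from the triangle inequality applied to the two extremal points of $Y$), whence $\|\dera^Y\|(1+\diam(Y)+d(Y,\{x\}))^k\le 3^k\cdot 2^{2n}\|\dera\|^{KS}_{x,k}\le 8^k\cdot 2^{2n}\|\dera\|^{KS}_{x,k}$. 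The degenerate case $Y=\{x\}$ (where $\rho(Y)=0$) will be handled separately using $\|\dera^{\{x\}}\|=\|\otr_{\{x\}^c}(\dera)\|\le\|\dera\|\le\|\dera\|^{KS}_{x,k}$, where the final bound reflects the $r\to 0$ conventions used in \cite{LocalNoether}. The main technical obstacle I expect is verifying the M\"obius identity cleanly, in particular keeping track of the degenerate subsets with $Y_S=\varnothing$; the remaining estimates are elementary.
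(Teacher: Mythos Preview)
Your proof is correct and follows essentially the same strategy as the paper: truncate the brick decomposition at radius $r$ for the first inequality, and for the second use that the $Y$-component of any operator strictly localized in $B_r(x)$ vanishes when $Y\not\subseteq B_r(x)$. The only differences are cosmetic: you supply an explicit M\"obius/inclusion--exclusion argument for the bound $\|\CA^Y\|\le 4^n\|\CA\|$ where the paper simply cites \cite[Proposition~C.1]{LocalNoether} for the same estimate, and you use $\rho(Y)=\max_{y\in Y}d(y,x)$ as the critical radius rather than the paper's $\lfloor(\diam(Y)+d(x,Y))/4\rfloor$.
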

\begin{proof}
    Suppose $\|\dera\|_{\{x\},k+2n+2} <\infty$ and let $r>0$. Define $\derb := \sum_{X\subseteq B_r(x)}\dera^X$. Then
    \begin{align}
        \|\derb -\dera\| \le \|\dera\|_{\{x\},k+2n+2}\sum_{X\nsubseteq B_r(x)}(1+\diam(X)+d(x,X))^{-k-2n-2}
    \end{align}
    Since $X\nsubseteq B_r(x)$ means $\diam(X)+d(x,X)\ge r$, we continue this as follows
    \begin{align}
        &\le (1+r)^{-k}\|\dera\|_{\{x\},k+2n+2}\sum_{X\nsubseteq B_r(x)}(1+\diam(X)+d(x,X))^{-2n-2}\notag\\
        &\le C(1+r)^{-k}\|\dera\|_{\{x\},k+2n+2}
    \end{align}
    where in the last line we used Lemma \ref{lem:brick-sum}. This proves (\ref{eqn:norm-equivalence-1}). To prove (\ref{eqn:norm-equivalence-2}) suppose $\|\dera\|^{KS}_{x,k} <\infty$ and let $X$ be any brick. Set $r:= \lfloor(\diam(X)+d(x,X))/4\rfloor$. Then $X\nsubseteq B_r(x)$. Indeed, if $X\subseteq B_r(x)$ then $d(x,X)\le r$ and $\diam(X)\le 2r$, implying $\diam(X)+d(x,X)\le 3r$, which is impossible. Choose $\derb \in \mfkdl(B_r(x))$ with $\|\dera-\derb\|\le (1+r)^{-k}\|\dera\|_{x,k}^{KS}$. Since $X\nsubseteq B_r(x)$ we have $\derb^X=0$, and so
    \begin{align}
        \|\dera^X\| &= \|(\dera-\derb)^X\|\notag\\
        &\le 4^n\|\dera - \derb\|\notag\\
        &\le 4^n\|\dera\|_{x,k}^{KS}(1+r)^{-k}\notag\\
        &\le 4^{n+k}\|\dera\|_{x,k}^{KS}(1+\diam(X) + d(x,X))^{-k},
    \end{align}
    where in the second line we used \cite[Proposition C.1]{LocalNoether}.
\end{proof}

\subsection{Proof of Proposition \ref{prop:quasiadiabatic}}
We begin by describing the construction of $\mathcal{J}$ and $\mathcal{K}$. Suppose $\psi$ is gapped with Hamiltonian $\derH$ and gap $\Delta$, and write $\tau_t$ for the one-parameter family of LGAs obtained by exponentiating $\derH$. There exists\footnote{See for instance Lemma 2.3 in \cite{Bachmannetal}.} a function $w_\Delta:\RR\to \RR$ such that $w_\Delta(t) = O(|t|^{-\infty})$, and the Fourier transform\footnote{We use the convention $\widehat{f}(\xi) = \int f(t)e^{-i\xi t}dt$.} $\widehat{w_\Delta}$ is supported in the interval $[-\Delta/2,\Delta/2]$ and satisfies $\widehat{w_\Delta}(0)=1$. Define $W_\Delta(t)$ for $t>0$ as $W_\Delta(t) = -\int_{t}^\infty w_\Delta(u) du$ and extend to the whole real line as an odd function. Then we define $\mathcal{J}$ and $\mathcal{K}$ as the following integral transforms:
    \begin{align}
        \mathcal{J}(\derF) := \int w_\Delta(t) \tau_t(\derF) dt,\\
        \mathcal{K}(\derF) := \int W_\Delta(t) \tau_t(\derF) dt.
    \end{align}
\begin{proof}[Proof of Proposition \ref{prop:quasiadiabatic}]
    $\mathcal{J}$ and $\mathcal{K}$ correspond to $\mathscr{J}_{\derH,w_{\Delta}}$ and $\mathscr{J}_{\derH,W_{\Delta}}$ in Section 4.1 of \cite{LocalNoether}. Part $i)$ follows from the definition of $\mathcal{K}$ and the fact that $\derH$ preserves $\psi$. Part $ii)$ follows from Proposition \ref{prop:confined-supported} and Lemma \ref{lem:norm-equivalence}, together with \cite[Lemma F.1]{LocalNoether} (specifically line (177) therein). Part $iii)$ is \cite[line (72)]{LocalNoether}.
\end{proof}
\section{Inhomogeneous Maurer-Cartan equation} \label{sec:deformation}

Let $(\mfkM, \derB)$ be a pointed pronilpotent DGLA which is a limit of an inverse system of nilpotent pointed DGLAs $(\mfkM_N,\derB_N)$, $N\in\NN$. The set $\MC(\mfkM, \derB)$ of $\derB$-twisted MC elements has an additional equivalence relation. This section revolves around this additional structure leading eventually to a proof of Prop. \ref{prop:MCmain}.

We have morphisms $r_{N,K}:\mfkM_N\ra \mfkM_K$ for all $N>K$ and the DGLA $\mfkM$ is the inverse limit of the corresponding system of DGLAs. For any $N\in\NN$ let $r_N:\mfkM\ra\mfkM_N$ be the natural projection. Let $\fj_{N}=\ker r_{N,N-1}$. The sets of $\derB_N$-twisted MC elements of $\mfkM_N$ will be denoted $\MC(\mfkM_N,\derB_N)$.

A $\derB$-twisted MC element $\derp$ gives rise to a degree $-1$ derivation $\partial_\derp=\partial+\ad_\derp$ of $\mfkM$ which squares to zero (twisted differential).
\begin{lemma}\label{lma:MCcontinuity}
    If $\derp\in \MC(\mfkM_N,\derB_N)$, then $r_{N,K}(\derp)\in\MC(\mfkM_K,\derB_K)$ for all $K<N$. Further, $\derp\in \MC(\mfkM, \derB)$ iff  $\derp^N=r_N(\derp)\in \MC(\mfkM_N,\derB_N)\ \forall N\in\NN$. 
\end{lemma}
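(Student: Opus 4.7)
The plan is to exploit the fact that the Maurer-Cartan equation is a purely algebraic condition phrased in terms of the operations $(\partial, [\cdot,\cdot], \derB)$ that define a pointed DGLA, so any morphism of pointed DGLAs sends twisted MC elements to twisted MC elements. Both parts will then follow by functoriality, together with the universal property of the inverse limit.

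For part $(i)$, I would observe that the structure morphism $r_{N,K}:\mfkM_N\to\mfkM_K$ is, by assumption, a morphism of pointed DGLAs; hence it commutes with $\partial$, preserves the graded bracket, and sends $\derB_N$ to $\derB_K$. Applying $r_{N,K}$ to both sides of the equation $\partial \derp+\tfrac{1}{2}[\derp,\derp]=\derB_N$ then gives $\partial r_{N,K}(\derp)+\tfrac{1}{2}[r_{N,K}(\derp),r_{N,K}(\derp)]=\derB_K$, as required.

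For part $(ii)$, the forward implication is the same observation applied to the canonical projection $r_N:\mfkM\to\mfkM_N$, which is again a morphism of pointed DGLAs by construction of the inverse limit. For the converse, suppose $\derp\in\mfkM_{-1}$ is such that $\partial\derp^N+\tfrac{1}{2}[\derp^N,\derp^N]=\derB_N$ for every $N\in\NN$. Since $r_N$ intertwines $\partial$ and the bracket and satisfies $r_N(\derB)=\derB_N$, we obtain
\begin{equation*}
r_N\!\left(\partial \derp+\tfrac{1}{2}[\derp,\derp]-\derB\right)=0\qquad\text{for all }N.
\end{equation*}
Because $\mfkM=\varprojlim_N \mfkM_N$, an element of $\mfkM$ is zero iff all of its projections vanish, so $\partial\derp+\tfrac{1}{2}[\derp,\derp]=\derB$ holds in $\mfkM$, i.e.\ $\derp\in\MC(\mfkM,\derB)$.

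The statement is essentially functoriality of $\MC(-)$ combined with the fact that limits in the underlying category of graded vector spaces are computed componentwise, so there is no substantive obstacle; the only thing to be careful about is to invoke the pointedness of the morphisms (so that $\derB$ is tracked correctly through the system) and the separation property of the inverse limit in the last step.
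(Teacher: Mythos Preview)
Your proof is correct and is precisely the argument the paper has in mind; the paper simply records the proof as ``Straightforward,'' and your unpacking via functoriality of $\MC(-)$ along pointed DGLA morphisms together with the separation property of the inverse limit is exactly that.
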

\begin{proof}
    Straightforward.
\end{proof}
Thus $\MC(\mfkM, \derB)$ is the inverse limit of the system of sets $\MC(\mfkM_N,\derB_N)$, $N\in\NN$.

We are going to define an equivalence relation on $\MC(\mfkM, \derB)$ and $\MC(\mfkM_N,\derB_N)$ for all $N$. This is done in the same way as for the ordinary (homogeneous) Maurer-Cartan equation \cite{GM, Manetti}. 

First, $\mfkM_{N,0}$ is a nilpotent Lie algebra, so there is a well-defined nilpotent Lie group $\exp(\mfkM_{N,0})$ with the group law given by the Campbell-Baker-Hausdorff formula. Similarly, $\mfkM_0$ is pronilpotent (i.e. is an inverse limit of a system of nilpotent Lie algebras), so the CBH formula defines a group $\exp(\mfkM_0)$. 

Second, there is a Lie algebra homomorphism from $\mfkM_0$ (resp. $\mfkM_{N,0}$) to the Lie algebras of affine-linear vector fields on $\mfkM_{-1}$ (resp. $\mfkM_{N,-1}$). This homomorphism maps $\dera\in \mfkM_0$ or $\mfkM_{N,0}$ to the affine-linear vector field
\begin{align}
    \xi_\dera(\derp)=[\dera,\derp] -\partial\dera,
\end{align}
where $\derp\in \mfkM_{-1}$ or $\mfkM_{N,-1}$.
Here we used the identification of the space of affine-linear vector fields on a vector space $V$ with the space of affine-linear maps $V\ra V$. These homomorphisms exponentiate to actions of the groups $\exp(\mfkM_0)$ and $\exp(\mfkM_{N,0})$ on $\mfkM_{-1}$ and $\mfkM_{N,-1}$ by affine-linear transformations. Explicitly, the actions are given by \cite{GM,Manetti}:
\begin{align}\label{eq:gaugeaction}
    \derp\mapsto \exp(\dera)* \derp=\exp(\ad_\dera)(\derp)+\frac{1-\exp(\ad_\dera)}{\ad_\dera}(\partial\dera).
\end{align} 
\begin{lemma}
    The actions of $\exp(\mfkM_0)$ on $\mfkM_{-1}$ (resp.  $\exp(\mfkM_{N,0})$ on $\mfkM_{N,-1}$) preserve the sets $\MC(\mfkM, \derB)$ (resp. $\MC(\mfkM_N,\derB_N)$).
\end{lemma}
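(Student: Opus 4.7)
The plan is to prove the lemma by the standard trick of realizing the gauge action as the time-one flow of the affine vector field $\xi_\dera$ and then showing that the twisted MC ``curvature'' evolves linearly along this flow. I will handle the nilpotent case first, then deduce the pronilpotent case from Lemma \ref{lma:MCcontinuity}.

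Fix $\dera \in \mfkM_{N,0}$ and $\derp_0 \in \MC(\mfkM_N,\derB_N)$, and set $\derp(t) := \exp(t\dera)*\derp_0$ using formula (\ref{eq:gaugeaction}). A direct computation (differentiating the series in (\ref{eq:gaugeaction}) in $t$) shows
\begin{align*}
    \tfrac{d}{dt}\derp(t) = [\dera,\derp(t)] - \partial\dera = \xi_\dera(\derp(t)),
\end{align*}
so $\derp(t)$ is the integral curve of $\xi_\dera$ through $\derp_0$. Since $\mfkM_{N,0}$ is nilpotent, this curve is well-defined for all $t\in\RR$, and $\derp(1)=\exp(\dera)*\derp_0$.

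Define the curvature functional $F(\derp) := \partial\derp + \tfrac12[\derp,\derp] - \derB$, so that MC elements are exactly the zeros of $F$. I will compute $\tfrac{d}{dt}F(\derp(t))$. Using that $\partial$ is a graded derivation of degree $-1$ with $\partial^2=0$ and $|\dera|=0$, $|\derp|=-1$, I get
\begin{align*}
    \partial \xi_\dera(\derp) &= [\partial\dera,\derp] + [\dera,\partial\derp], \\
    [\xi_\dera(\derp),\derp] &= [[\dera,\derp],\derp] - [\partial\dera,\derp].
\end{align*}
The graded Jacobi identity applied to $[\dera,[\derp,\derp]]$, together with the graded antisymmetry $[\derp,[\dera,\derp]] = [[\dera,\derp],\derp]$, yields $[[\dera,\derp],\derp] = \tfrac12[\dera,[\derp,\derp]]$. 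Combining these and using centrality $[\dera,\derB]=0$, I obtain
\begin{align*}
    \tfrac{d}{dt}F(\derp(t)) = [\dera,\partial\derp(t)] + \tfrac12[\dera,[\derp(t),\derp(t)]] = [\dera, F(\derp(t))].
\end{align*}

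This is a linear ODE for $F(\derp(t))$ in the nilpotent Lie algebra $\mfkM_{N,-1}$, with generator $\ad_\dera$ which is nilpotent. Its unique solution is $F(\derp(t)) = \exp(t\,\ad_\dera)\,F(\derp_0)$, so $F(\derp_0)=0$ forces $F(\derp(1))=0$, i.e.\ $\exp(\dera)*\derp_0 \in \MC(\mfkM_N,\derB_N)$. For the pronilpotent statement, given $\dera\in\mfkM_0$ and $\derp\in\MC(\mfkM,\derB)$, apply the above level by level: the actions are compatible with the projections $r_{N,K}$, so $r_N(\exp(\dera)*\derp) = \exp(r_N\dera)*r_N(\derp) \in \MC(\mfkM_N,\derB_N)$ for every $N$ by the nilpotent case, and Lemma \ref{lma:MCcontinuity} gives $\exp(\dera)*\derp\in\MC(\mfkM,\derB)$.

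The only slightly delicate step is the sign bookkeeping in the graded Jacobi identity needed to prove $[[\dera,\derp],\derp]=\tfrac12[\dera,[\derp,\derp]]$; everything else is formal manipulation together with the standard fact that the gauge action is the flow of $\xi_\dera$.
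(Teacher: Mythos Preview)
Your proof is correct and is essentially the Goldman--Millson argument that the paper invokes by citation; you have spelled out explicitly the curvature-flow computation $\tfrac{d}{dt}F(\derp(t))=[\dera,F(\derp(t))]$ (with centrality of $\derB$ killing the inhomogeneous term) and the passage to the pronilpotent limit via Lemma \ref{lma:MCcontinuity}. There is nothing to add.
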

\begin{proof}
    The proof in \cite{GM}, Section 1.3, applies just as well in the inhomogeneous case.
\end{proof}
We say that elements $\derp_1,\derp_2$ of $\MC(\mfkM, \derB)$ or $\MC(\mfkM_N,\derB_N)$ are equivalent if they belong to the same orbit of these actions. 
\begin{remark}
    By analogy with the homogeneous case, one can define a $\derB$-twisted Deligne groupoid as the transformation groupoid for the action of $\exp(\mfkM_0)$ on $\MC(\mfkM, \derB)$. Similarly, one can define "reduced" Deligne groupoids for every $N\in\NN$.
\end{remark}
We observe an easy but useful lemma. 
\begin{lemma}
    If $\derp_i\in \MC(\mfkM_N,\derB_N)$, $i=1,2$, are equivalent, then $r_{N,K}(\derp_1)$ is equivalent to $r_{N,K}(\derp_2)$ for all $K<N$.
\end{lemma}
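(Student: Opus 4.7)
The plan is to show that the structural morphism $r_{N,K} \colon \mfkM_N \to \mfkM_K$ is equivariant with respect to the gauge actions defined by formula \eqref{eq:gaugeaction}. Once this equivariance is in hand, the lemma is immediate: if $\derp_2 = \exp(\dera) * \derp_1$ for some $\dera \in \mfkM_{N,0}$, then applying $r_{N,K}$ to both sides gives $r_{N,K}(\derp_2) = \exp(r_{N,K}(\dera)) * r_{N,K}(\derp_1)$, exhibiting $r_{N,K}(\derp_1)$ and $r_{N,K}(\derp_2)$ as lying in the same $\exp(\mfkM_{K,0})$-orbit.

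The key verification is that, for any $\dera \in \mfkM_{N,0}$ and any $\derp \in \mfkM_{N,-1}$, one has
\begin{align}
    r_{N,K}\bigl(\exp(\dera) * \derp\bigr) \;=\; \exp(r_{N,K}(\dera)) * r_{N,K}(\derp).
\end{align}
This reduces to the fact that $r_{N,K}$, being a morphism in the inverse system of pointed DGLAs, is linear, intertwines the differentials (so $r_{N,K}(\partial \dera) = \partial\, r_{N,K}(\dera)$), and intertwines the brackets (so $r_{N,K}(\ad_\dera \derp) = \ad_{r_{N,K}(\dera)} r_{N,K}(\derp)$). Because $\mfkM_{N,0}$ is nilpotent, the expressions $\exp(\ad_\dera)$ and $\tfrac{1-\exp(\ad_\dera)}{\ad_\dera}$ appearing in \eqref{eq:gaugeaction} are finite polynomials in $\ad_\dera$, so the desired identity follows term by term.

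There is no real obstacle here: the statement is entirely a formal consequence of the functoriality of the gauge action under DGLA morphisms. The only point worth emphasizing is that the equivalence relation on $\MC$ has been defined via the exponentiated action of $\mfkM_{N,0}$, and the hypotheses of the setup ensure that each $r_{N,K}$ lifts to a group homomorphism $\exp(\mfkM_{N,0}) \to \exp(\mfkM_{K,0})$ via the Campbell-Baker-Hausdorff formula, which in turn intertwines the two affine actions.
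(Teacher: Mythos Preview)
Your proof is correct and is exactly the natural argument; the paper in fact does not write out a proof at all, stating the lemma as ``easy but useful'' and moving on. What you have written is precisely the verification the paper leaves implicit.
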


Now come the interesting statements. Assume from now on that the DGLAs $\mfkM_N$ and $\mfkM$ are acyclic, that $\fj_N\subseteq\mfkM_N$ is central for all $N>1$, and that the morphisms $r_{N,N-1}$ are surjective for all $N>1$.
\begin{lemma}
With the above assumptions, the set $\MC(\mfkM, \derB)$ is non-empty if and only if $\MC(\mfkM_1, \derB_1)$ is non-empty.
\end{lemma}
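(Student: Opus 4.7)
The forward direction is immediate from Lemma~\ref{lma:MCcontinuity}: given $\derp\in\MC(\mfkM,\derB)$, its image $r_1(\derp)$ lies in $\MC(\mfkM_1,\derB_1)$. For the converse, my plan is to build inductively a compatible sequence $\derp_N\in\MC(\mfkM_N,\derB_N)$ with $r_{N,N-1}(\derp_N)=\derp_{N-1}$, and then invoke Lemma~\ref{lma:MCcontinuity} to identify the inverse limit with an element of $\MC(\mfkM,\derB)$. Everything reduces to the standard obstruction-theoretic inductive step.

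For the inductive step, suppose $\derp_{N-1}\in\MC(\mfkM_{N-1},\derB_{N-1})$ has been constructed. Using the hypothesis that $r_{N,N-1}$ is surjective, I would choose an arbitrary lift $\tilde{\derp}_N\in\mfkM_{N,-1}$ of $\derp_{N-1}$ and form the obstruction
\begin{equation*}
\omega \; := \; \partial\tilde{\derp}_N+\tfrac{1}{2}[\tilde{\derp}_N,\tilde{\derp}_N]-\derB_N \; \in \; \mfkM_{N,-2}.
\end{equation*}
Applying $r_{N,N-1}$ and using that $\derp_{N-1}$ solves the $\derB_{N-1}$-twisted MC equation shows $\omega\in\fj_N$. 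The plan is then to modify $\tilde{\derp}_N$ by an element of $\fj_{N,-1}$ to kill $\omega$: since $\fj_N$ is central by hypothesis, a replacement $\derp_N:=\tilde{\derp}_N-\chx$ with $\chx\in\fj_{N,-1}$ changes the MC defect by exactly $-\partial\chx$ (the brackets $[\tilde{\derp}_N,\chx]$ and $[\chx,\chx]$ vanish by centrality), so what is needed is an $\chx$ with $\partial\chx=\omega$.

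The main obstacle is therefore showing $\omega$ is a boundary in $\fj_N$. First I would check $\omega$ is closed: a direct computation using the graded Leibniz rule gives $\partial\omega=[\partial\tilde{\derp}_N,\tilde{\derp}_N]$, and substituting $\partial\tilde{\derp}_N=\derB_N-\tfrac{1}{2}[\tilde{\derp}_N,\tilde{\derp}_N]+\omega$ reduces this to a sum of three terms that individually vanish --- the first by centrality of $\derB_N$, the second by graded Jacobi applied to three copies of a degree $-1$ element, and the third by centrality of $\fj_N$. So $\omega$ is a $(-2)$-cycle in $\fj_N$. To promote this to exactness in $\fj_N$, I would invoke the long exact sequence attached to $0\to\fj_N\to\mfkM_N\xrightarrow{r_{N,N-1}}\mfkM_{N-1}\to 0$: since the two outer complexes are acyclic by assumption, so is $\fj_N$, and thus an $\chx\in\fj_{N,-1}$ with $\partial\chx=\omega$ exists.

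Setting $\derp_N:=\tilde{\derp}_N-\chx$ closes the induction: $\derp_N\in\MC(\mfkM_N,\derB_N)$ and $r_{N,N-1}(\derp_N)=\derp_{N-1}$ because $\chx\in\fj_N=\ker r_{N,N-1}$. The base case $N=1$ is supplied by the hypothesis that $\MC(\mfkM_1,\derB_1)$ is non-empty. Taking $\derp:=\varprojlim \derp_N\in\varprojlim\mfkM_{N,-1}=\mfkM_{-1}$ and applying Lemma~\ref{lma:MCcontinuity} yields the desired $\derB$-twisted MC element of $\mfkM$. The subtle points to watch are (i) the compatibility of the inductive choices (automatic, since lifts are built on top of the previously constructed $\derp_{N-1}$), and (ii) ensuring that the centrality and acyclicity hypotheses are invoked at the correct places --- centrality of $\fj_N$ is used both to verify $\omega$ is closed and to simplify the modified MC equation, while acyclicity of $\fj_N$ (a derived consequence of the stated acyclicities) is used once to produce $\chx$.
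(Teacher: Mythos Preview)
Your proof is correct and follows essentially the same obstruction-theoretic approach as the paper: lift, compute the defect in $\fj_N$, use centrality to linearize the correction, and invoke acyclicity of $\fj_N$ (deduced from the short exact sequence) to kill the defect. You are in fact more explicit than the paper in verifying that the obstruction $\omega$ is a cycle in $\fj_N$ before appealing to acyclicity; the paper's proof tacitly assumes this standard fact.
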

\begin{proof}
    The only if statement follows from Lemma \ref{lma:MCcontinuity}. To prove the if direction, we use induction on $N$. Assume $\MC(\mfkM_{N-1},\derB_{N-1})$ is non-empty. Let $\derp_{N-1}\in \MC(\mfkM_{N-1},\derB_{N-1})$. Pick $\tilde\derp_N\in r_{N,N-1}^{-1}(\derp_{N-1})$. Since $\derp_{N-1}$ is a $\derB_{N-1}$-twisted MC element and $r_{N,N-1}(\derB_N)=\derB_{N-1}$, we must have  
    \begin{align}
\partial\tilde\derp_N+\frac12[\tilde\derp_N,\tilde\derp_N]=\derB_N+\derq_N
    \end{align}
    for some $\derq_N\in\fj_N$. We look for a solution of the  $\derB_N$-twisted MC equation of the form $\derp_N=\tilde\derp_N+\derb_N$ where $\derb_N\in\fj_N$. Taking into account that $\fj_N$ is central, the inhomogeneous MC equation reduces to $\partial\derb_N=-\derq_N$.
    Since by assumption $\fj_N,\mfkM_N$, and $\mfkM_{N-1}$ form a short exact sequence and the latter two are acyclic, so is $\fj_N$. Hence such a $\derb_N$ exists. This completes the inductive step proving that  $\MC(\mfkM_N,\derB_N)\neq \varnothing$ for all $N$. Moreover, we also proved that the morphisms $\MC(\mfkM_N,\derB_N)\ra\MC(\mfkM_{N-1},\derB_{N-1})$ are surjective. Therefore by Lemma \ref{lma:MCcontinuity} $\MC(\mfkM,\derB)$ is non-empty.
\end{proof}
The above lemma proves the first part of Prop. \ref{prop:MCmain}. Indeed when $[\mfkM_1,\mfkM_1]=0$, $\MC(\mfkM_1, \derB_1)\ne \varnothing$ is implied by acyclicity. Next we show that with the above assumption on $\mfkM_N$ and $\mfkM$ all $\derB$-twisted MC elements are equivalent. 
\begin{lemma}\label{lma:GMlemma28}
    Suppose $\dera,\derb\in \mfkM_{N,0}$ and $\dera\in \fj_{N}$. Then for any $\derp\in\mfkM_{N,-1}$ one has
    \begin{align}
        \exp(\derb+\dera)*(\derp)=\exp(\derb)*(\derp)-\partial\dera.
    \end{align}
\end{lemma}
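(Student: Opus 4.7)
The plan is to apply the explicit formula (\ref{eq:gaugeaction}) for the gauge action with the element $\derc := \derb+\dera$ and simplify everything using the centrality of $\fj_N$.

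First I would record two easy observations. Since $\dera \in \fj_N$ and $\fj_N$ is central in $\mfkM_N$, one has $\ad_\dera = 0$, and therefore $\ad_\derc = \ad_{\derb+\dera} = \ad_\derb$. Moreover, because $r_{N,N-1}$ is a morphism of DGLAs, $\fj_N = \ker r_{N,N-1}$ is closed under $\partial$, so $\partial\dera \in \fj_N$ is also central, in particular annihilated by $\ad_\derb$.

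Next I would substitute $\derc$ into (\ref{eq:gaugeaction}):
\begin{align*}
\exp(\derc)*\derp
&= \exp(\ad_\derc)(\derp) + \frac{1-\exp(\ad_\derc)}{\ad_\derc}\bigl(\partial\derc\bigr)\\
&= \exp(\ad_\derb)(\derp) + \frac{1-\exp(\ad_\derb)}{\ad_\derb}\bigl(\partial\derb\bigr) + \frac{1-\exp(\ad_\derb)}{\ad_\derb}\bigl(\partial\dera\bigr),
\end{align*}
where I used $\ad_\derc = \ad_\derb$, $\partial\derc = \partial\derb + \partial\dera$, and the linearity of the operator in its argument. The first two terms on the right re-assemble into $\exp(\derb)*\derp$. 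For the last term, I would expand the operator as the power series $-1 - \tfrac{1}{2}\ad_\derb - \tfrac{1}{6}\ad_\derb^2 - \cdots$ and observe that every term of degree $\ge 1$ in $\ad_\derb$ kills $\partial\dera$ (which is $\derb$-central), so the operator acts on $\partial\dera$ by its value at $0$, namely $-1$. Thus this last contribution is precisely $-\partial\dera$, and summing gives $\exp(\derc)*\derp = \exp(\derb)*\derp - \partial\dera$, as claimed.

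There is no real obstacle here: the only computation of substance is evaluating $\tfrac{1-\exp(\ad_\derb)}{\ad_\derb}$ on a $\derb$-central element, and this reduces to the constant-term calculation above once we invoke centrality of $\fj_N$ and the fact that $\partial$ preserves $\fj_N$. Everything else is bookkeeping with the action formula.
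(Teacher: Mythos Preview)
Your argument is correct: substituting $\derc=\derb+\dera$ into the explicit gauge action formula (\ref{eq:gaugeaction}), using $\ad_\derc=\ad_\derb$ by centrality of $\fj_N$, and evaluating the power series $\frac{1-\exp(\ad_\derb)}{\ad_\derb}=-\sum_{j\ge 0}\frac{\ad_\derb^j}{(j+1)!}$ on the central element $\partial\dera$ to obtain $-\partial\dera$ is exactly the intended computation. The paper does not give its own proof but simply cites \cite{GM}, Lemma~2.8; what you have written is essentially that standard argument.
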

\begin{proof}
    See \cite{GM}, Lemma 2.8.
\end{proof}

\begin{lemma}\label{lma:MCequalimpliesequiv}
Let $\derp_i\in\MC(\mfkM_N,\derB_N)$, $i=1,2$ such that $r_{N,N-1}(\derp_1)=r_{N,N-1}(\derp_2)$. Then $\derp_1$ and $\derp_2$ are equivalent. 
\end{lemma}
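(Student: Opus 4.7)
The plan is to reduce the problem to a linear one in the central ideal $\fj_N$ and then invoke acyclicity. Set $\derc := \derp_2 - \derp_1$. By hypothesis $r_{N,N-1}(\derc) = 0$, so $\derc \in \fj_N$; and since both $\derp_i$ lie in degree $-1$, so does $\derc$. The goal is to produce $\dera \in \fj_{N,0}$ such that the gauge action (\ref{eq:gaugeaction}) by $\exp(\dera)$ carries $\derp_1$ to $\derp_2$.

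First I would show that $\derc$ is a $\partial$-cycle. Subtracting the $\derB_N$-twisted MC equations for $\derp_1$ and $\derp_2$ yields
\begin{align}
\partial \derc + \tfrac{1}{2}\bigl([\derp_2,\derp_2] - [\derp_1,\derp_1]\bigr) = 0.
\end{align}
Writing the quadratic term as $\tfrac{1}{2}[\derc,\derp_1+\derp_2]$ and using centrality of $\fj_N$ in $\mfkM_N$ (so that brackets of elements of $\fj_N$ with anything vanish), the quadratic term drops out and we conclude $\partial \derc = 0$.

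Next I would use acyclicity of $\fj_N$. Since $r_{N,N-1}$ is surjective with kernel $\fj_N$, we have a short exact sequence of complexes
\begin{align}
0 \to \fj_N \to \mfkM_N \to \mfkM_{N-1} \to 0,
\end{align}
and the long exact homology sequence, together with acyclicity of $\mfkM_N$ and $\mfkM_{N-1}$, forces $\fj_N$ to be acyclic as well. Hence there exists $\dera \in \fj_{N,0}$ with $\partial \dera = -\derc$.

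Finally, I would apply Lemma \ref{lma:GMlemma28} with $\derb = 0$, or equivalently evaluate the formula (\ref{eq:gaugeaction}) directly: because $\dera$ is central, $\ad_\dera = 0$, so $\exp(\ad_\dera) = \id$ and $\tfrac{1-\exp(\ad_\dera)}{\ad_\dera} = -\id$, giving
\begin{align}
\exp(\dera) * \derp_1 = \derp_1 - \partial \dera = \derp_1 + \derc = \derp_2.
\end{align}
Thus $\derp_1$ and $\derp_2$ are equivalent. The main conceptual step is recognizing that the centrality assumption on $\fj_N$ linearizes the problem -- without it the quadratic obstruction would not vanish and acyclicity of $\fj_N$ alone would not suffice; the rest is bookkeeping with the gauge action formula.
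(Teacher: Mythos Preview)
Your proof is correct and follows essentially the same approach as the paper: set the difference, use centrality of $\fj_N$ to kill the quadratic term and get a cycle, invoke acyclicity of $\fj_N$ to find a primitive $\dera$, and then apply the gauge action (via Lemma \ref{lma:GMlemma28} with $\derb=0$) to exhibit the equivalence. The only cosmetic differences are your sign convention for $\dera$ (you solve $\partial\dera=-\derc$ and move $\derp_1$ to $\derp_2$, whereas the paper solves $\derq=\partial\dera$ and moves $\derp_2$ to $\derp_1$) and your explicit derivation of the acyclicity of $\fj_N$, which the paper established in the preceding lemma.
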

\begin{proof}
    Let $\derq=\derp_2-\derp_1\in \mfkM_{N,-1}$. By assumption, $\derq\in \fj_{N}$. Moreover, $\partial\derq=0$. Indeed:
    \begin{align}
    \partial\derq=-\frac12[\derp_2,\derp_2]+\frac12[\derp_1,\derp_1]=-[\derq,\derp_1]-\frac12[\derq,\derq]=0.
    \end{align}
    The last equality is because $\derq\in\fj_N$ is central.
    By acyclicity of $\fj_N$, we have $\derq=\partial\dera$ for some $\dera\in \fj_{N,0}$. Then Lemma \ref{lma:GMlemma28} implies that $\exp(\dera)\in \exp(\mfkM_{N,0})$ maps $\derp_2$ to $\derp_1$.
\end{proof}

\begin{lemma}\label{lma:uniqueorbitMCN}
Let  
$\tilde{\derp}_i\in\MC(\mfkM_N,\derB_N)$, $i=1,2,$ be such that $\derp_1=r_{N,N-1}(\tilde{\derp}_1)$ and $\derp_2=r_{N,N-1}(\tilde{\derp}_2)$ are equivalent. Then $\tilde{\derp}_1$ and $\tilde{\derp}_2$ are equivalent.
\end{lemma}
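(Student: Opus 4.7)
The plan is to reduce to Lemma \ref{lma:MCequalimpliesequiv} by lifting a gauge equivalence from level $N-1$ to level $N$ and then handling the remaining discrepancy inside the fiber of $r_{N,N-1}$.

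By hypothesis there exists $\dera \in \mfkM_{N-1,0}$ such that $\exp(\dera)\ast \derp_1 = \derp_2$ under the affine action (\ref{eq:gaugeaction}). Since $r_{N,N-1}:\mfkM_N \to \mfkM_{N-1}$ is surjective as a morphism of graded vector spaces, I can choose a lift $\tilde\dera \in \mfkM_{N,0}$ with $r_{N,N-1}(\tilde\dera)=\dera$. The gauge action is built out of the DGLA structure (the bracket and $\partial$) and so is intertwined by the DGLA morphism $r_{N,N-1}$; in particular $r_{N,N-1}(\exp(\tilde\dera)\ast \tilde\derp_1) = \exp(\dera)\ast \derp_1 = \derp_2$.

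Setting $\tilde\derp_1' := \exp(\tilde\dera)\ast \tilde\derp_1$, we have $\tilde\derp_1' \in \MC(\mfkM_N,\derB_N)$ (since the action preserves twisted MC elements) and $r_{N,N-1}(\tilde\derp_1')=\derp_2=r_{N,N-1}(\tilde\derp_2)$. By Lemma \ref{lma:MCequalimpliesequiv} applied to $\tilde\derp_1'$ and $\tilde\derp_2$, these two elements are equivalent. Since $\tilde\derp_1$ and $\tilde\derp_1'$ are equivalent by construction, transitivity gives the equivalence of $\tilde\derp_1$ and $\tilde\derp_2$.

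The only place requiring any care is the existence of the lift $\tilde\dera$ and the verification that $r_{N,N-1}$ intertwines the affine action; the former is immediate from the standing surjectivity assumption on $r_{N,N-1}$, and the latter is formal from the fact that $r_{N,N-1}$ is a morphism of DGLAs so it preserves both $\partial$ and $\ad$, and hence the exponential series defining $\exp(\dera)\ast(-)$.
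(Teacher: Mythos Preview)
Your argument is correct and follows essentially the same route as the paper: lift the gauge equivalence $\dera\in\mfkM_{N-1,0}$ to some $\tilde\dera\in\mfkM_{N,0}$ using surjectivity of $r_{N,N-1}$, use that $r_{N,N-1}$ intertwines the affine action to reduce to two elements of $\MC(\mfkM_N,\derB_N)$ with the same image in $\mfkM_{N-1}$, and then invoke Lemma~\ref{lma:MCequalimpliesequiv}. The only cosmetic difference is that the paper acts on $\tilde\derp_2$ rather than $\tilde\derp_1$, which is immaterial.
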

\begin{proof}
    Let $\dera\in\mfkM_{N-1,0}$ be an equivalence between $\derp_1$ and $\derp_2$, i.e. $\exp(\dera)*\derp_2=\derp_1.$ Let $\tilde{\dera}\in \mfkM_{N,0}$ be any lift of $\dera$. Then
    $r_{N,N-1}(\exp(\tilde{\dera})*\tilde{\derp}_2)=r_{N,N-1}(\tilde{\derp}_1)$. By Lemma \ref{lma:MCequalimpliesequiv}, 
    $\exp(\tilde{\dera})*\tilde{\derp}_2$ is equivalent to $\tilde{\derp}_1$, therefore $\tilde{\derp}_2$ is equivalent to $\tilde{\derp}_1$.
\end{proof}
\begin{prop}
     For any $N\in\NN$ all elements of $\MC(\mfkM_N,\derB_N)$ are equivalent. All elements of $\MC(\mfkM, \derB)$ are equivalent.
\end{prop}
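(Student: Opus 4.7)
The proof will proceed in two stages: first establish the nilpotent claim by induction on $N$, then bootstrap to the pronilpotent claim by assembling a compatible tower of gauge elements whose inverse limit realizes the desired equivalence in $\mfkM_0$. The two lemmas immediately preceding the proposition (\ref{lma:MCequalimpliesequiv} for the ``same fiber $\Rightarrow$ equivalent'' direction, and \ref{lma:uniqueorbitMCN} for lifting equivalences through $r_{N,N-1}$) will do essentially all the algebraic work; the rest is inductive bookkeeping.

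For the nilpotent claim, the base case $N=1$ will follow directly from $[\mfkM_1,\mfkM_1]=0$ together with acyclicity of $\mfkM_1$: given $\derp_1,\derp_2\in\MC(\mfkM_1,\derB_1)$, the difference $\derp_2-\derp_1$ is a $\partial$-cycle, hence equals $\partial\dera$ for some $\dera\in\mfkM_{1,0}$, and since $\ad_\dera=0$ the action formula \eqref{eq:gaugeaction} collapses to $\exp(\dera)*\derp_2=\derp_2-\partial\dera=\derp_1$. The inductive step is then immediate from Lemma \ref{lma:uniqueorbitMCN}: for $\tilde\derp_1,\tilde\derp_2\in\MC(\mfkM_N,\derB_N)$ the images $r_{N,N-1}(\tilde\derp_i)$ are equivalent by the inductive hypothesis, which lifts to an equivalence of the $\tilde\derp_i$ themselves.

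For the pronilpotent claim, fix $\derp_1,\derp_2\in\MC(\mfkM,\derB)$ and construct a coherent sequence $\dera_N\in\mfkM_{N,0}$ satisfying $r_{N,N-1}(\dera_N)=\dera_{N-1}$ and $\exp(\dera_N)*r_N(\derp_2)=r_N(\derp_1)$ by induction on $N$. The base case is exactly the nilpotent statement just proved. Given $\dera_{N-1}$, use surjectivity of $r_{N,N-1}$ to pick any lift $\tilde\dera_N\in\mfkM_{N,0}$; then $\exp(\tilde\dera_N)*r_N(\derp_2)$ and $r_N(\derp_1)$ are two elements of $\MC(\mfkM_N,\derB_N)$ whose images under $r_{N,N-1}$ agree, so Lemma \ref{lma:MCequalimpliesequiv} supplies $\derb_N\in\fj_{N,0}$ with $\exp(\derb_N)*\exp(\tilde\dera_N)*r_N(\derp_2)=r_N(\derp_1)$. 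Because $\fj_N$ is central in $\mfkM_N$, the Baker--Campbell--Hausdorff formula collapses and the product of exponentials equals $\exp(\tilde\dera_N+\derb_N)$, so $\dera_N:=\tilde\dera_N+\derb_N$ closes the induction. Finally $\dera:=\varprojlim\dera_N\in\mfkM_0$ is well defined, and the naturality of \eqref{eq:gaugeaction} under the projections $r_N$ yields $r_N(\exp(\dera)*\derp_2)=\exp(\dera_N)*r_N(\derp_2)=r_N(\derp_1)$ for all $N$, whence $\exp(\dera)*\derp_2=\derp_1$ in $\mfkM$.

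The only delicate point I anticipate is verifying the collapse of the BCH formula in the inductive step: one needs every iterated bracket involving $\derb_N$ to vanish in $\mfkM_N$, so that $\exp(\derb_N)\exp(\tilde\dera_N)=\exp(\tilde\dera_N+\derb_N)$ and the adjustment is purely additive. This is where the centrality hypothesis on $\fj_N$ is essential; without it one would have to argue via an additional iteration. Everything else --- surjectivity of the structure maps, acyclicity, and the two lemmas --- slots in cleanly.
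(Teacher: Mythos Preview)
Your proposal is correct and follows the same approach as the paper. The paper's proof is a two-line sketch (``induction on $N$ with inductive step Lemma \ref{lma:uniqueorbitMCN}; then pass to the inverse limit''), and you have simply made both halves explicit: the base case via $[\mfkM_1,\mfkM_1]=0$ and acyclicity, and the inverse-limit passage via a coherent tower $(\dera_N)$ built using surjectivity of $r_{N,N-1}$, Lemma \ref{lma:MCequalimpliesequiv} (whose proof indeed produces the correcting element inside the central ideal $\fj_{N,0}$), and the BCH collapse for a central factor.
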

\begin{proof}
    The first statement is proved by  induction on $N$, where the inductive step is Lemma \ref{lma:uniqueorbitMCN}.
    
    To prove the second statement, let $\derp,\derp'\in \MC(\mfkM,\derB)$ and write
$\derp_N:=r_N(\derp)$, $\derp'_N:=r_N(\derp')\in \MC(\mfkM_N,\derB_N)$.
By the first part, for each $N$ there exists $g_N\in \exp(\mfkM_{N,0})$ with
$g_N*\derp'_N=\derp_N$.

We may choose $g_N$ compatibly: pick $g_1$ arbitrarily, and assume $g_{N-1}$ chosen.
Choose any lift $\tilde g_N\in \exp(\mfkM_{N,0})$ with $r_{N,N-1}(\tilde g_N)=g_{N-1}$, and set
$\tilde\derq_N:=\tilde g_N*\derp'_N$. Then
$r_{N,N-1}(\tilde\derq_N)=g_{N-1}*\derp'_{N-1}=\derp_{N-1}=r_{N,N-1}(\derp_N)$,
so Lemma~\ref{lma:MCequalimpliesequiv} yields $h_N\in \exp(\ker(r_{N,N-1})_0)$ with
$h_N*\tilde\derq_N=\derp_N$. Putting $g_N:=h_N\tilde g_N$ gives
$r_{N,N-1}(g_N)=g_{N-1}$ and $g_N*\derp'_N=\derp_N$.

Thus $(g_N)_N$ defines $g\in \varprojlim_N \exp(\mfkM_{N,0})\cong \exp(\mfkM_0)$, and since the
gauge action is compatible with the projections, we have $r_N(g*\derp')=g_N*\derp'_N=\derp_N=r_N(\derp)$
for all $N$, hence $g*\derp'=\derp$ in $\mfkM$. Therefore $\MC(\mfkM,\derB)$ has a single gauge orbit.

\end{proof}

\begin{theorem}
    Let $\derp\in \MC(\mfkM, \derB)$. Then the homology class of $[\derp,\derp]$ in $H_\bullet(\clcomm)$ is independent of the choice of $\derp$. 
\end{theorem}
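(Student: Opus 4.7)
The plan is to exploit the gauge equivalence of MC elements established just before the theorem. Since any two $\derB$-twisted MC elements $\derp_1, \derp_2$ lie in a single orbit of the action of $\exp(\mfkM_0)$, I write $\derp_2 = \exp(\dera)*\derp_1$ for some $\dera \in \mfkM_0$ and consider the interpolating path $\derp(t) := \exp(t\dera)*\derp_1$, $t \in [0,1]$, which takes values in $\MC(\mfkM,\derB)$. Differentiating the formula \eqref{eq:gaugeaction} in $t$, this path satisfies the ODE $\dot\derp(t) = [\dera, \derp(t)] - \partial \dera$.

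The key computation is the identity
\begin{align*}
\frac{d}{dt}[\derp(t), \derp(t)] = -2\partial[\dera, \derp(t)].
\end{align*}
To derive it, I start from $\tfrac{d}{dt}[\derp(t),\derp(t)] = 2[\dot\derp(t),\derp(t)] = 2[[\dera,\derp(t)],\derp(t)] - 2[\partial\dera,\derp(t)]$. The graded Jacobi identity together with the symmetry of the bracket on two odd elements yields $[\dera,[\derp,\derp]] = 2[[\dera,\derp],\derp]$; combining with the twisted MC equation $[\derp,\derp]=2(\derB-\partial\derp)$ and the centrality of $\derB$ gives $[[\dera,\derp],\derp] = -[\dera,\partial\derp]$. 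The graded Leibniz rule $\partial[\dera,\derp] = [\partial\dera,\derp] + [\dera,\partial\derp]$ then collapses the expression to $-2\partial[\dera,\derp(t)]$, as claimed.

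Integrating over $[0,1]$ gives
\begin{align*}
[\derp_2,\derp_2] - [\derp_1,\derp_1] = -2\partial\int_0^1 [\dera,\derp(t)]\,dt,
\end{align*}
so it remains to verify that $\int_0^1 [\dera,\derp(t)]\,dt$ lies in $[\mfkM,\mfkM]$. For this I project to an arbitrary nilpotent quotient $\mfkM_N$: the exponential gauge action makes $r_N(\derp(t))$ polynomial in $t$ (as only finitely many terms in $\exp(t\,\ad_\dera)$ survive in $\mfkM_N$), so $r_N([\dera,\derp(t)])$ is a polynomial in $t$ whose coefficients are commutators in $\mfkM_N$. The integral is therefore a finite linear combination of commutators in $\mfkM_N$. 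Since this holds for every $N$, the integral lies in $[\mfkM,\mfkM]$ by definition, and the cycles $[\derp_1,\derp_1]$ and $[\derp_2,\derp_2]$ are homologous in $[\mfkM,\mfkM]$, as required.

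The one delicate point I anticipate is the topological bookkeeping around the closure defining $[\mfkM,\mfkM]$ and around the meaning of $\int_0^1[\dera,\derp(t)]\,dt$ in the pronilpotent limit; both issues dissolve once one works termwise in each nilpotent quotient $\mfkM_N$, where the path is literally a polynomial and the integral reduces to a finite linear combination of commutators.
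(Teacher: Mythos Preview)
Your argument is correct. The computations with the graded Jacobi identity, the Leibniz rule, and the twisted MC equation are right, and your handling of the integral---by projecting to each nilpotent quotient $\mfkM_N$, where $r_N(\derp(t))$ is a polynomial in $t$ and the integral becomes a finite linear combination of commutators---is exactly how the definition of $[\mfkM,\mfkM]$ is meant to be used.

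The paper's proof takes a shorter route that avoids the interpolating path entirely. Subtracting the two MC equations gives directly
\[
[\derp_1,\derp_1]-[\derp_2,\derp_2]=2\,\partial(\derp_2-\derp_1),
\]
so it suffices to check that $\derp_2-\derp_1$ lies in $[\mfkM,\mfkM]$ modulo $\partial$-exact terms. The explicit series \eqref{eq:gaugeaction} for the gauge action gives
\[
\derp_2-\derp_1=-\partial\dera+\sum_{k\ge 1}\frac{\ad_\dera^k(\derp_1)}{k!}-\sum_{k\ge 1}\frac{\ad_\dera^k(\partial\dera)}{(k+1)!},
\]
and every term in the two sums is manifestly a commutator; in each $\mfkM_N$ only finitely many survive, so the tail lies in $[\mfkM,\mfkM]$ and $\partial(\derp_2-\derp_1)=\partial\derf$ with $\derf\in[\mfkM,\mfkM]$. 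Your ODE method recovers the same primitive $\int_0^1[\dera,\derp(t)]\,dt$ (which, upon expanding $\derp(t)$ via \eqref{eq:gaugeaction}, is a repackaging of the same commutator series), but the paper's version gets there in one line from the difference of the MC equations without differentiating or integrating.
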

\begin{proof}
    Let $\derp,\derp'\in\MC(\mfkM, \derB)$ and let $\dera\in \mfkM_0$ be an equivalence between $\derp$ and $\derp'$. Since $\derp,\derp'$ satisfy the inhomogeneous Maurer-Cartan equation and $\derB$ is a cycle, $[\derp,\derp]$ and $[\derp',\derp']$ are cycles as well. From (\ref{eq:gaugeaction}) we have
    \begin{align}
        \derp'-\derp=-\partial\dera+\derf,
    \end{align}
    where 
\begin{align}
    \derf:=\sum_{k=1}^\infty \frac{\ad^k_\dera (\derp)}{k!}-\sum_{k=1}^\infty \frac{\ad^k_\dera (\partial\dera)}{(k+1)!}\in \clcomm.
\end{align}
    Thus $[\derp,\derp]-[\derp',\derp']=2\partial(\derp'-\derp)= 2 \partial \derf$ is $\partial$-exact in $\clcomm$.
\end{proof}

\bibliographystyle{unsrt} 
\bibliography{Bibliography.bib} 

 \end{document}